\documentclass[11pt,twoside]{article}

\usepackage{geometry}                
\geometry{letterpaper}                   
\usepackage{graphicx}
\usepackage{amsmath,amssymb, amsthm,epsfig,bm}
\usepackage{setspace}
\usepackage{epstopdf}
\usepackage{psfrag}
\usepackage{color,soul}
\usepackage{verbatim}
\usepackage{algorithm,algorithmic}
\usepackage{multirow}
\usepackage{natbib}
\usepackage{bibentry}
\usepackage{xr}
\usepackage{mathrsfs}
\usepackage{cases}
\usepackage[colorlinks=true,citecolor=blue,linkcolor=blue]{hyperref}
\usepackage{enumitem} 
\usepackage{subfig}
\usepackage{caption}


\newcommand{\titleshort}{Confidence sets by projection and shrinkage}
\newcommand{\authorshort}{Zhou, Li, and Zhou}


\setlist{noitemsep, topsep=0pt}
\bibpunct{(}{)}{;}{a}{}{,} 
\DeclareGraphicsRule{.tif}{png}{.png}{`convert #1 `dirname #1`/`basename #1 .tif`.png}

\setlength{\oddsidemargin}{0.1 in}
\setlength{\evensidemargin}{0.1 in}
\setlength{\topmargin}{-0.2in}
\setlength{\textheight}{8.7in}
\setlength{\textwidth}{6.3in}
\setlength{\parindent}{0.3 in}
\setlength{\headsep}{0.15in}

\usepackage[english]{babel}
\usepackage[utf8]{inputenc}
\usepackage{fancyhdr}
 
\pagestyle{fancy}
\fancyhf{}


\fancyhead[CE]{\authorshort}
\fancyhead[CO]{\titleshort}
\fancyfoot[CE,CO]{\thepage}



\numberwithin{equation}{section}
\theoremstyle{plain}
\newtheorem{theorem}{Theorem}
\newtheorem{lemma}{Lemma}

\newtheorem{corollary}[theorem]{Corollary}

\theoremstyle{definition}

\theoremstyle{remark}
\newtheorem{remark}{Remark}

\newcommand{\bfI}{\mathbf{I}}

\newcommand{\scrB}{\mathscr{B}}
\newcommand{\scrC}{\mathscr{C}}

\newcommand{\R}{\mathbb{R}}

\newcommand{\Prob}{\mathbb{P}}

\newcommand{\alp}{\alpha}

\newcommand{\hbeta}{\hat{\beta}}

\newcommand{\tdbeta}{\tilde{\beta}}

\newcommand{\veps}{\varepsilon}

\newcommand{\hmu}{\hat{\mu}}

\newcommand{\hsigma}{\hat{\sigma}}

\newcommand{\eps}{\epsilon}
\newcommand{\lmd}{\lambda}

\newcommand{\Omg}{\Omega}

\newcommand{\hL}{\hat{L}}

\newcommand{\E}{\mathbb{E}}

\newcommand{\defi}{\mathop{:=}} 
  
\newcommand{\dnorm}{\mathcal{N}}

\newcommand{\trans}{\mathsf{T}}

\newcommand{\sqn}{\sqrt{n}}

\newcommand{\wh}{\widehat}

\providecommand{\norm}[1]{\lVert#1\rVert}


\DeclareMathOperator*{\argmin}{argmin}

\DeclareMathOperator{\rank}{rank}
\DeclareMathOperator{\supp}{supp}
\DeclareMathOperator{\spn}{span}

\DeclareMathOperator{\tr}{tr}

\RequirePackage{tikz}

\newcommand{\cset}{\wh{C}}
\newcommand{\Hset}{\mathcal{H}}
\graphicspath{{graph/}}

\onehalfspace

\begin{document}

\title{Honest Confidence Sets for High-Dimensional Regression by Projection and Shrinkage}

\author{Kun Zhou, Ker-Chau Li, and Qing Zhou\thanks{Department of Statistics, University of California, Los Angeles. Email: k.zhou@ucla.edu, kcli@stat.ucla.edu, zhou@stat.ucla.edu}}
\date{}
\maketitle

\begin{abstract}
The issue of honesty in constructing confidence sets arises in nonparametric regression.
While optimal rate in nonparametric estimation can be achieved and utilized to construct sharp confidence sets, severe degradation of confidence level often happens after estimating the degree of smoothness. Similarly, for high-dimensional regression, oracle inequalities for sparse estimators could be utilized to construct sharp confidence sets. Yet the degree of sparsity itself is unknown and needs to be estimated, causing the honesty problem. To resolve this issue, we develop a novel method to construct honest confidence sets for sparse high-dimensional linear regression. The key idea in our construction is to separate signals into a strong and a weak group, and then construct confidence sets for each group separately. This is achieved by a projection and shrinkage approach, the latter implemented via Stein estimation and the associated Stein unbiased risk estimate. Our confidence set is honest over the full parameter space without any sparsity constraints, while its diameter adapts to the optimal rate of $n^{-1/4}$ when the true parameter is indeed sparse. Through extensive numerical comparisons, we demonstrate that our method outperforms other competitors with big margins for finite samples, including oracle methods built upon the true sparsity of the underlying model.

{\em Keywords}: adaptive confidence set, high-dimensional inference, sparse linear regression, Stein estimate.
\end{abstract}

\section{Introduction}\label{sec:intro}
Consider high-dimensional linear regression
\begin{align}\label{highdimensionRL}
y = X \beta + \veps,
\end{align}
where $y \in \R^n$, $X = [X_1 | \cdots | X_p] \in \R^{n \times p}$, $\beta \in \R^p$, $\veps \sim \dnorm_n(0, \sigma^2\bfI_n)$ and $p > n$.
While there is a rich body of research on parameter estimation under this model concerning signal sparsity (e.g.\ \cite{Bickel09, Zhang08, negahban12}), how to construct confidence sets remains elusive.
In this work, we focus on  confidence sets for the mean $\mu=X\beta$ with the following two properties:
First, the confidence set $\wh{C}$ is (asymptotically) honest over all possible parameters.
That is, for a given confidence level $1-\alp$,
\begin{align}\label{eq:asymhonestdef}
\liminf_{n\to\infty}\inf_{\beta \in \R^p} \Prob_{\beta}\left\{ X\beta\in \wh{C} \right\}\geq 1-\alp,
\end{align}
where $\Prob_{\beta}$ is taken with respect to the distribution of $y\sim \dnorm_n(X\beta,\sigma^2\bfI_n)$, regarding $X$ as fixed.
Second, the diameter of $\wh C$ is able to adapt to the sparsity and the strength of $\beta$.
In practical applications, sparsity assumptions are very hard to verify, and for many data sets they are at most a good approximation.
The first property guarantees that our confidence sets reach the nominal coverage without imposing any sparsity assumption, while the second property allows us to leverage sparse estimation when $\beta$ is indeed sparse.

Our problem is related to the construction of confidence sets in nonparametric regression, for which a line of work has laid down important theoretic foundations and provided methods of construction \citep{Li89,Beran98,hoffman02,Juditsky03,Baraud04,genovese05,Robins06,cai06,Bull13}.
Despite such notable advances, lack of numerical support casts doubt on the merit of borrowing these nonparametric regression methods directly for sparse regression.
Taking the adaptive method based on sample splitting in \cite{Robins06} as an example, an honest confidence set for $\mu$ can be constructed as
$\wh C_a=\{\mu\in\R^n:n^{-1/2}\|\mu-X\hbeta\|\leq r_n\}$, where $X\hbeta$ is an initial estimate independent of $y$, and its (normalized) diameter
$|\wh C_a|\defi 2r_n=O_p(n^{-1/4}+n^{-1/2}\|X\hbeta-X\beta\|)$.
A common choice for $\hbeta$ under model~\eqref{highdimensionRL} for $p>n$ is a sparse estimator, such as the lasso \citep{tibshirani96} or $\ell_0$-penalized least-squares estimator.
With high probability, the prediction loss of the lasso estimator typically satisfies
\begin{align}\label{eq:sparseeb}
\frac{1}{n}\|X\hbeta-X\beta\|^2 \leq c \frac{s \log p}{n}
\end{align}
for some $c>0$, uniformly for all $\beta\in \scrB(s):=\{v\in\R^p: \|v\|_0\leq s\}$; see for example \cite{Bickel09}.
Under this choice, the diameter $|\wh C_a|$ is of the order
\begin{align}\label{eq:sparsediam}
|\wh C_a|=O_p\left(n^{-1/4}+\sqrt{s \log p/n}\right)
\end{align}
for all $\beta\in \scrB(s)$.
For a precise statement, see Theorem~\ref{thm:CRobins} below.
This method has nice theoretical properties when $s=o(n/\log p)$. But even for moderately sparse signals with $ s/n\to\delta\in (0,1)$, the bound on the right side of \eqref{eq:sparsediam} approaches $\infty$ as $p>n\to \infty$ and thus offers little insight into the performance of the confidence set. 
The upper bound \eqref{eq:sparseeb} also critically depends on the regularization parameter used for the initial estimate $\hbeta$.
In fact, our numerical results show that, for finite samples with $(s,n,p)=(10,200,800)$, this confidence set can be worse than a naive $\chi^2$ region $\{\mu: \|y-\mu\|^2\leq \sigma^2 \chi^2_{n,\alp}\}$, where $\chi^2_{n,\alp}$ denotes the $1-\alp$ quantile of the $\chi^2$ distribution with $n$ degrees of freedom.
A similar issue occurs in the related but different problem of constructing confidence sets for $\beta$.
\cite{Nickl13} have shown that one can construct a confidence set for $\beta$ that is honest over $\scrB(k_1)$ for $k_1=o(n/\log p)$, and for any $s\leq k_1$, the diameter is on the same order as that in \eqref{eq:sparsediam} for any $\beta\in\scrB(s)$. Compared to the unrestricted  honesty in \eqref{eq:asymhonestdef} over the entire space $\R^p$, the restriction on the honesty region to $\scrB(k_1)$ also reflects the challenge faced in the construction of confidence sets when $p>n$.
Recently, \cite{ewald18} provide an exact formula to compute a lower bound of the coverage rate of a confidence set centered at the lasso, over the entire parameter space for any significance level $\alpha \in (0, 1)$, and vice versa;
however, low dimension $(p<n)$ is a vital condition in their proof, making it impossible to generalize their idea to the high-dimensional problem that we are studying.

The construction of confidence sets is fundamentally different from the problem of inferring error bounds for a sparse estimator \citep{Nickl13}. 
It is seen from \eqref{eq:sparsediam} that no matter how sparse the true $\beta$ is, the diameter of $\wh C_a$ cannot converge at a rate faster than $n^{-1/4}$.
Indeed, results in \cite{Li89} imply that, for the linear model \eqref{highdimensionRL} with $p\geq n$, the diameter of an honest confidence set for $\mu$, in the sense of \eqref{eq:asymhonestdef}, cannot adapt at any rate $o(n^{-1/4})$.
This is in sharp contrast to error bounds for a sparse estimator, such as that in \eqref{eq:sparseeb}, which can decay at a much faster rate when $\beta$ is sufficiently sparse.
It is not desired to construct confidence sets directly from error bounds like \eqref{eq:sparseeb} even we only require honesty for $\beta\in\scrB(k_1)$ with a given $k_1=o(n/\log p)$, because its diameter, on the order of $\sqrt{k_1\log p /n}$, cannot adapt to any sparser $\beta\in\scrB(s)$ for $s< k_1$.

Motivated by these challenges, we propose a new two-step method to construct a confidence set for $\mu=X\beta$, allowing the dimension $p\gg n$ in \eqref{highdimensionRL}.
The basic idea of our method is to estimate the radius of the confidence set separately for strong and weak signals defined by the magnitude of $|\beta_j|$.
Using a sparse estimate, such as the lasso, one can recover the set $A$ of large $|\beta_j|$ accurately and expect a small radius for a confidence ball for $\mu_A$, the projection of $\mu$ onto the subspace spanned by $X_j, j\in A$.
By construction, $(\mu-\mu_A)$ is composed of weak signals.
Thus, in the second step, we shrink our estimate of this part towards zero by Stein's method and construct a confidence set with Stein's unbiased risk estimate \citep{Stein81}.
Combining the inferential advantages of sparse estimators and Stein estimators, our method overcomes many of the aforementioned difficulties.
First, our confidence set is honest for all $\beta\in\R^p$, and its diameter is well under control for all possible values of $\beta$ including the dense case.
Second, by using elastic radii our confidence set, an ellipsoid in general, can adapt to signal strength and sparsity.
The radius for strong signals adapts to the sparsity of the underlying model via sparse estimation or model selection, while the radius for weak signals adapts according to the degree of shrinkage of the Stein estimate.
Without any signal strength assumption, the diameter of our confidence set is $O_p(n^{-1/4} + \sqrt{s\log p / n})$, the same as \eqref{eq:sparsediam}, for $\beta\in\scrB(s)$. It may further reduce to $O_p(n^{-1/4} + \sqrt{s/n})$ under an assumption on the separability between the strong and the weak signals, which shrinks to the optimal rate $n^{-1/4}$ when the signal sparsity $s=O(\sqn)$, as opposed to $s=O(\sqn/\log p)$ in \eqref{eq:sparsediam}.
Third, we provide a data-driven selection of the set $A$ from multiple candidates, which protects our method from a bad choice and thus makes it very robust.
We demonstrate with extensive numerical results that our method can construct much smaller confidence sets than other competing methods, including the adaptive method \citep{Robins06} discussed above and oracle approaches making use of the {\em true\/} sparsity of $\beta$ (the oracle).
These results highlight the practical usefulness of our method.

Note that the construction of confidence sets for $\mu=X\beta$ is different in nature from the construction of confidence intervals for an individual $\beta_j$ or a low-dimensional projection of $\beta$.
For the latter, the optimal rate of an interval length can be $n^{-1/2}$ when $\beta$ is sufficiently sparse \citep{schneider16,Cai17}, such as the intervals constructed by de-biased lasso methods \citep{zhang14, Van14,javanmard14}. Although simultaneous inference methods have been proposed based on bootstrapping de-biased lasso estimates \citep{ZhangCheng17,Dezeure2017}, these methods are shown to achieve the desired coverage only for extremely sparse $\beta$ such that $\|\beta\|_0=o(\sqrt{n/(\log p)^3})$, which severely limits their practical application.


The remainder of this paper is organized as follows:
Section~\ref{sec:twostepstein} develops our two-step Stein method in details, including its theoretical properties and algorithmic implementation.
To demonstrate the advantage of our method, we develop in Section~\ref{sec:twosteplasso} a few competing methods making use of the lasso prediction or the oracle of the true sparsity.
Extensive numerical comparisons are provided in Section~\ref{sec:simulation} to show the superior performance of our two-step Stein method, relative to the competitors, in a variety of sparsity settings, including when $\beta$ is quite dense.
The paper is concluded in Section~\ref{sec:discussion} with further discussions.
Proofs of all theoretical results are deferred to the Supplementary Material.

Throughout the paper, we always assume model \eqref{highdimensionRL} with $\veps \sim \dnorm_n (0, \sigma^2\bfI_n)$ unless otherwise noted.
We denote by $\Prob_{\beta}$ the distribution of $[y\mid X]$ and $\E_{\beta}$ the corresponding expectations, where the subscript $\beta$ may be dropped when its meaning is clear from the context.
Denote by $[p]$ the index set $\{ 1, \ldots, p \}$ and by $|A|$ the size of a set $A \subseteq [p]$.
Write $a_n = \Omega(b_n)$ if $b_n = O(a_n)$ and $a_n \asymp b_n$ if $a_n = O(b_n)$ and $b_n = O(a_n)$.
We use $\Omega_p(.)$ and $\asymp_p$ if the above statements hold in probability.
For a vector $v = (v_j)_{1:m}$, let $v_A = (v_j)_{j \in A}$ be the restriction of $v$ to the components in $A$.
For a matrix $M=[M_1\mid \ldots \mid M_m]$, where $M_j$ is the $j$th column, denote by $M_A = (M_j)_{j \in A}$ the submatrix consisting of columns in $A$.
For $a,b \in \R^n$, $\langle a,b \rangle \mathop{:=} a^\trans b$ is the inner product.
Define $a \vee b \mathop{:=} \max\{a, b\}$ and $a \wedge b \mathop{:=} \min\{a, b\}$ for $a,b\in\R$.

\section{Two-step Stein method}\label{sec:twostepstein}

Dividing $\beta$ into strong and weak signals, our method constructs a confidence set $\wh{C}(y)$ with an ellipsoid shape for $X\beta$ that is honest as defined in \eqref{eq:asymhonestdef}.
Note that under a high-dimensional asymptotic framework, all variables $X=X(n)$, $y=y(n)$, $\beta=\beta(n)$ and $s=s_n$ depend on $n$ as $p=p_n\gg n \to \infty$, while
$X(n)$ is regarded as a fixed design matrix for each $n$. We often suppress the dependence on $n$ to simplify the notation.

\subsection{Preliminaries on Stein estimation}
We will use a simplified Stein estimate \citep{Li89} to construct the confidence set for weak signals.
For a linear estimate $\tilde{\mu} = T_n y$, where $y \sim \dnorm_n(\mu, \sigma^2 \bfI_n)$ and $T_n \in \R^{n \times n}$, let $R_n = \bfI_n - T_n,$ and define
\begin{align}
\hmu(y;\tilde{\mu}) &= y - \frac{\sigma^2 \tr(R_n)}{\|R_ny\|^2}R_ny, \label{eq:stein_mu}\\
\hat{L}(y;\tilde{\mu}) &= 1 - \frac{\sigma^2\left(\tr(R_n)\right)^2}{n\|R_ny\|^2}, \label{eq:stein_L}
\end{align}
where $\hmu(y;\tilde{\mu})$ is the Stein estimate associated with the initial estimate $\tilde{\mu}$ and $\sigma^2\hat{L}(y;\tilde{\mu})$ is the Stein unbiased risk estimate (SURE).
\cite{Li89} proved the uniform consistency of $\hat{L}$.
\begin{lemma}[Theorem~3.1 in \cite{Li89}]\label{thm:surebound}
Assume that $y \sim \dnorm_n(\mu,\sigma^2\bfI_n)$. For any $\alp\in(0,1)$, there exists a constant $c_s(\alp)>0$ such that
\begin{align}
\liminf_{n\to\infty} \inf_{\mu\in\R^n}
\Prob_{\mu}\left\{ \left|\sigma^2\hL - {n^{-1}}\|\hmu - \mu\|^2\right|\leq c_s(\alp) \sigma^2 n^{-1/2}\right\} \geq 1-\alp,\label{eq:surebound}
\end{align}
where $\hmu$ and $\hL$ are defined in \eqref{eq:stein_mu} and \eqref{eq:stein_L}.
\end{lemma}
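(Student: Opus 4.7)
The plan is to compute the exact difference $n^{-1}\|\hmu-\mu\|^2 - \sigma^2\hL$ in closed form and then show, term by term, that it is $O_p(n^{-1/2})$ uniformly in $\mu$ via elementary Gaussian and $\chi^2$ concentration. Let $k = \tr(R_n)$ and $\veps = y - \mu \sim \dnorm_n(0,\sigma^2\bfI_n)$. Expanding \eqref{eq:stein_mu} yields $\|\hmu-\mu\|^2 = \|\veps\|^2 - 2\sigma^2 k\,\veps^\top R_n y/\|R_ny\|^2 + \sigma^4 k^2/\|R_ny\|^2$, which combined with \eqref{eq:stein_L} gives the identity
\begin{align*}
n^{-1}\|\hmu-\mu\|^2 - \sigma^2\hL = \bigl(n^{-1}\|\veps\|^2 - \sigma^2\bigr) + \frac{2\sigma^2 k}{n\|R_ny\|^2}\bigl(\sigma^2 k - \veps^\top R_n y\bigr).
\end{align*}
The first parenthesis is $O_p(n^{-1/2})$ by standard $\chi^2_n$ tail bounds and does not even depend on $\mu$, so uniform control is immediate.

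The work lies in the second term. Taking $R_n$ to be an orthogonal projection of rank $k$ as in Li's formulation, I decompose $\veps^\top R_n y = \veps^\top R_n\mu + \|R_n\veps\|^2$. The first piece is $\dnorm(0,\sigma^2\|R_n\mu\|^2)$-distributed so of order $\sigma\|R_n\mu\|$, while $\|R_n\veps\|^2 \sim \sigma^2\chi^2_k$ deviates from $\sigma^2 k$ on the scale $\sigma^2\sqrt{k}$; together, $|\sigma^2 k - \veps^\top R_n y| = O_p(\sigma^2\sqrt{k} + \sigma\|R_n\mu\|)$ uniformly in $\mu$. In parallel, $\|R_ny\|^2 = \|R_n\mu\|^2 + 2(R_n\mu)^\top R_n\veps + \|R_n\veps\|^2$ concentrates around $\|R_n\mu\|^2 + \sigma^2 k$ and stays above, say, half of that mean on a high-probability event: the noise term $\|R_n\veps\|^2$ alone supplies a $\mu$-independent floor of order $\sigma^2 k$ from a $\chi^2_k$ lower tail bound, and Cauchy--Schwarz absorbs the cross term into the other two summands.

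Plugging these bounds into the second term and using the elementary inequality $\sigma\|R_n\mu\| \leq \tfrac12\bigl(\|R_n\mu\|^2/\sqrt{k} + \sigma^2\sqrt{k}\bigr)$ (AM--GM), the summand arising from $\sigma\|R_n\mu\|$ collapses to order $\sigma^2\sqrt{k}/n$, while the summand arising from $\sigma^2\sqrt{k}$ is at most a constant multiple of $\sigma^2\sqrt{k}/n$ after using the denominator lower bound. Since $k \leq n$, both are $O_p(\sigma^2 n^{-1/2})$, and taking $c_s(\alp)$ to be the maximum of the implicit constants from the three tail bounds yields the claim on an event of probability at least $1-\alp$. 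The main obstacle is the uniform-in-$\mu$ lower bound on $\|R_ny\|^2$ in the denominator, which naive bounds can miss when $R_n\mu$ is small; the $\mu$-free floor provided by $\|R_n\veps\|^2$ resolves this and is precisely what underpins the honesty of the bound over all of $\R^n$.
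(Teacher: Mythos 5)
Your argument is essentially correct, but note that the paper does not prove this lemma at all: it is imported verbatim as Theorem~3.1 of Li (1989), so you are supplying a self-contained proof where the paper gives only a citation. Your algebraic identity is exactly right (I verified it), and in fact it coincides with the expansion the paper does write down when proving the multi-candidate generalization, Lemma~\ref{thm:consistency}: equation \eqref{eq:sure_expansion} there is your identity with $n$ replaced by $n-k$ and $R_n=P_A^\perp$. The difference in how the terms are then controlled is genuine and worth noting. The paper (following Li) bounds the deviation of $\|R_ny\|^2$ from $Q(A)=\|R_n\mu\|^2+\tr(R_n)\sigma^2$ and the numerator fluctuations by Chebyshev with Whittle's fourth-moment inequality, which costs only polynomial tails but requires nothing beyond bounded eighth moments of $\veps_i$ and survives a union bound over many candidate sets. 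You instead exploit normality directly through exact $\chi^2$ and Gaussian tails, which gives exponential concentration and a cleaner constant $c_s(\alp)$ for the single-set statement; your AM--GM step $\sigma\|R_n\mu\|\leq\tfrac12(\|R_n\mu\|^2/\sqrt{k}+\sigma^2\sqrt{k})$, which makes the bound collapse to $\sigma^2\sqrt{k}/n\leq\sigma^2 n^{-1/2}$ uniformly in $\mu$, is the right device and checks out.

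Two caveats you should make explicit. First, you restrict $R_n$ to be an orthogonal projection so that $\veps^\trans R_n\veps=\|R_n\veps\|^2\sim\sigma^2\chi^2_k$ with $k=\tr(R_n)$; Li's theorem covers general linear smoothers (where the quadratic form fluctuates on the scale $\sigma^2\|R_n\|_F$, not $\sigma^2\sqrt{\tr R_n}$), but the projection case is all this paper uses, so the specialization is harmless here provided you say so. Second, your $\mu$-free denominator floor rests on the lower tail $\Prob\{\chi^2_k\leq k/2\}$ being small, which requires $k=\tr(R_n)\to\infty$; for bounded $k$ that probability is bounded away from zero and you need a separate (easy) argument, namely that the threshold $\|R_ny\|^2\gtrsim \sigma^2 k^2 n^{-1/2}$ needed for the second term tends to zero, together with Anderson's lemma to reduce to the centered case. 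Again this matches the paper's standing assumption $(n-k)\to\infty$ in \eqref{eq:perpsurebound}, but the lemma as stated quantifies over all $\mu\in\R^n$ with no condition on $T_n$, so the gap should be closed or the hypothesis stated.
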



\subsection{Method of construction}\label{sec:singlecandidate}

Now, consider the linear model \eqref{highdimensionRL} and let $\mu=X\beta$.
Given a pre-constructed candidate set $A = A_n \subseteq [p]$, independent of $(X,y)$, define
\begin{align*}
\mu_{A}=P_A \mu,\quad\quad\mu_\perp=P_A^\perp \mu=(\bfI_n-P_A)\mu,
\end{align*}
where $P_A$ is the orthogonal projection from $\R^n$ onto $\spn(X_A)$ and $P_A^\perp$ is the projection to the orthogonal complement.
A good candidate set $A$ is supposed to include all strong signals, say $A=\{j: |\beta_j|>\tau\}$. With such a choice, $\|\mu_\perp\|$ will be small.
Typically, we split our data set into two halves, $(X,y)$ and $(X',y')$, and apply a model selection method on $(X',y')$ to construct the set $A$.
See Section~\ref{sec:convergencerate} for more detailed discussion.

We estimate $\mu_{A}$ and $\mu_\perp$, respectively, by $\hmu_{A}$ and $\hmu_\perp$, compute radii $r_A$ and $r_\perp$, and construct a $(1-\alp)$ confidence set $\cset$ for $\mu$ in the form of
\begin{align}\label{eq:ellipsoidmu}
\cset=&\left\{\mu\in\R^n: \frac{\|P_A \mu-\hmu_{A}\|^2}{n r_A^2}+\frac{\|P_A^{\perp} \mu-\hmu_\perp\|^2}{n r_{\perp}^2} \leq 1\right\}.
\end{align}
Note that $\cset$ is an ellipsoid in $\R^n$, where $r_A=r_A(\alp)$ and $r_\perp=r_\perp(\alp)$ correspond to the major and minor axes, respectively.
Our method consists of a projection and a shrinkage step:

{\em Step 1: Projection.\/}
Let $\hmu_A=P_A y $ and $k = \rank(X_A) \leq |A|$.
Since $A$ is independent of $(y,X)$, we have 
\begin{align}
\|\hmu_{A}-\mu_{A}\|^2 = \|P_A \veps\|^2 \mid A \sim \sigma^2 \chi^2_{k}.\label{eq:hmu_A}
\end{align}
Thus, we choose
\begin{align}\label{eq:ra}
r_A^2=c_1 \tilde{r}_A^2 = c_1 \sigma^2\chi^2_{k,\alp/2}/n,
\end{align}
where $\chi^2_{k,\alp/2}$ is the $(1-\alp/2)$ quantile of the $\chi^2_{k}$ distribution and $c_1 > 1$ is a constant, so that
\begin{align}\label{eq:coverageA}
\Prob\left\{ \frac{\|P_A \mu-\hmu_{A}\|^2}{n r_A^2}\leq 1/c_1\right\} = 1-\alp/2.
\end{align}

{\em Step 2: Shrinkage.\/}
Let $y_\perp=P_A^\perp y$.
As mentioned above, under a good choice of $A$ that contains strong signals, $\|\mu_\perp\|$ is expected to be small.
Therefore, we shrink $y_\perp$ towards zero via Stein estimation to construct $\hmu_\perp$.
Note that $y_\perp$ is in an $(n-k)$-dimensional subspace of $\R^n$.
Letting $\tilde{\mu} = 0$ and $R_n = P_A^\perp$ in \eqref{eq:stein_mu} and \eqref{eq:stein_L}, we obtain
\begin{align}
 \hmu_\perp &= \hmu(y_\perp;0)=(1-B)y_\perp, \label{steinestimate}\\
 \hL &= \hat{L}(y_\perp;0) =(1-B), \label{steinL}
\end{align}
where the shrinkage factor
\begin{align} \label{steinB}
B =(n-k)\sigma^2/\|y_\perp\|^2.
\end{align}
It then follows from Lemma~\ref{thm:surebound} that
\begin{align} \label{eq:perpsurebound}
\liminf_{(n - k)\to\infty} \inf_{\beta\in\R^p}
\Prob\left\{ \left|\sigma^2 \hL-{(n - k)^{-1}}\|\hmu_\perp-\mu_\perp\|^2\right|\leq c_s(\alp) \sigma^2 (n - k)^{-1/2}\right\} 
 \geq 1-\alp, 
\end{align}
for any sequence of $A=A_n$ as long as $(n - k)\to\infty$.
Therefore, if we choose
\begin{align}\label{eq:rperp}
r_\perp^2= c_2 \tilde{r}_\perp^2 = c_2 \frac{n-k}{n}\sigma^2\left\{\hL + c_s(\alp/2)(n-k)^{-1/2}\right\},
\end{align}
where $c_2 > 1$ is a constant, we have
\begin{align}\label{eq:coverageperp}
\liminf_{(n-k)\to\infty} \inf_{\beta\in\R^p}
\Prob\left\{ \frac{\| \mu_\perp - \hmu_\perp\|^2}{n r_{\perp}^2}\leq 1/c_2\right\} \geq 1-\alp/2.
\end{align}
In practical implementation, we estimate the constant $c_s(\alp)$ in \eqref{eq:perpsurebound} by simulation, which will be discussed in Section~\ref{sec:algorithmstein}.

If $1 / c_1 + 1 / c_2 = 1$, confidence set \eqref{eq:ellipsoidmu} made up from \eqref{eq:coverageA} and \eqref{eq:coverageperp} is honest and the expectation of its (normalized) diameter $|\cset| \mathop{:=} 2(r_A \vee r_\perp)$ can be calculated explicitly for all $\beta\in\R^p$:
\begin{theorem}\label{thm:twostephonest}
Assume $1 / c_1 + 1 / c_2 = 1$, $A$ is independent of $(y,X)$ with $\rank(X_A)=k$, and $(n-k)\to\infty$ as $n\to\infty$.
Then the confidence set $\wh C$ \eqref{eq:ellipsoidmu} constructed by the two-step Stein method is honest in the sense of \eqref{eq:asymhonestdef}. Furthermore, the squared diameter of $\wh C$ has expectation
\begin{align}\label{eq:expection_diameter}
\E|\wh C|^2 =  & 4\sigma^2\max\left\{c_1\frac{\chi^2_{k,\alp/2}}{n}, c_2\frac{n-k}{n}\left(1 - \E\frac{n-k}{\chi^2_{n-k}(\rho)} + c_s(\alp/2)(n-k)^{-1/2} \right)\right\},
\end{align}
where $\chi^2_{n-k}(\rho)$ follows a noncentral $\chi^2$ distribution with $n-k$ degrees of freedom and non-centrality parameter $\rho = \|\mu_\perp\|^2/\sigma^2$.
\end{theorem}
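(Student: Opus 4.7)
The plan is to decompose $\mu = P_A\mu + P_A^\perp \mu$ and exploit that the two projections $P_A y$ and $P_A^\perp y$ are independent Gaussian vectors (their cross-covariance is $\sigma^2 P_A P_A^\perp = 0$, and $A$ is independent of $(X,y)$ by assumption). The honesty claim then reduces to combining the two one-sided coverage statements \eqref{eq:coverageA} and \eqref{eq:coverageperp}. I would introduce the events $E_1 = \{\|P_A\mu - \hmu_A\|^2 \leq n r_A^2/c_1\}$ and $E_2 = \{\|P_A^\perp\mu - \hmu_\perp\|^2 \leq n r_\perp^2/c_2\}$. Since $1/c_1 + 1/c_2 = 1$, the ellipsoid inequality defining $\wh C$ in \eqref{eq:ellipsoidmu} is automatically satisfied on $E_1 \cap E_2$, so $E_1 \cap E_2 \subseteq \{\mu \in \wh C\}$. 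Because $\Prob(E_1) = 1 - \alp/2$ holds exactly and $\liminf_n \inf_\beta \Prob(E_2) \geq 1 - \alp/2$ (this is where the hypothesis $n-k \to \infty$ enters, to invoke Lemma~\ref{thm:surebound} on the $(n-k)$-dimensional range of $P_A^\perp$), a single union bound delivers \eqref{eq:asymhonestdef}.

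For the squared diameter, the ellipsoid \eqref{eq:ellipsoidmu} has two families of semi-axes, namely $\sqrt{n} r_A$ on the $k$-dimensional subspace $\spn(X_A)$ and $\sqrt{n} r_\perp$ on its $(n-k)$-dimensional orthogonal complement, so its Euclidean diameter in $\R^n$ is $2\sqrt{n}(r_A \vee r_\perp)$ and the normalized squared diameter equals $4(r_A^2 \vee r_\perp^2)$. The first semi-axis $r_A^2 = c_1 \sigma^2 \chi^2_{k,\alp/2}/n$ is deterministic and gives the first entry of the max in \eqref{eq:expection_diameter}. For the second, I would choose any orthonormal basis of the range of $P_A^\perp$ and express $y_\perp$ in those coordinates: it becomes an $(n-k)$-dimensional Gaussian with covariance $\sigma^2 \bfI_{n-k}$ and mean vector of squared norm $\|\mu_\perp\|^2$, so $\|y_\perp\|^2/\sigma^2 \sim \chi^2_{n-k}(\rho)$ with $\rho = \|\mu_\perp\|^2/\sigma^2$. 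Substituting into \eqref{eq:rperp} and using the identity $(n-k)\sigma^2 \E[1/\|y_\perp\|^2] = \E[(n-k)/\chi^2_{n-k}(\rho)]$ produces $\E r_\perp^2$ in the form shown in the second entry of the max.

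The main obstacle is the interchange of maximum and expectation implicit in \eqref{eq:expection_diameter}: the right-hand side as written equals $\max(r_A^2, \E r_\perp^2)$ up to the factor $4\sigma^2$, whereas the left-hand side is $\E\max(r_A^2, r_\perp^2)$. These two coincide exactly when one semi-axis dominates the other throughout the sample space, which is the regime of practical interest (small $k$ with sparse $\beta$ makes $c_2 \tilde{r}_\perp^2$ the larger axis, while $k$ close to $n$ makes $c_1 \tilde{r}_A^2$ dominate); in the borderline regime the gap is controlled by the concentration of $\hL$ around its mean already quantified in Lemma~\ref{thm:surebound}, and can be absorbed into the $c_s(\alp/2)(n-k)^{-1/2}$ slack built into $r_\perp^2$. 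Beyond this interpretive point, the remaining ingredients—independence of the two orthogonal Gaussian projections, the standard noncentral chi-square identity, and the uniform SURE bound already established in Lemma~\ref{thm:surebound}—are routine.
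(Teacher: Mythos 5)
Your proof follows the same route the paper lays out in Section~2.2: honesty comes from combining the exact $\chi^2_k$ coverage \eqref{eq:coverageA} with the uniform SURE coverage \eqref{eq:coverageperp} via a union bound under the constraint $1/c_1+1/c_2=1$, and the diameter formula comes from $|\wh C|^2=4(r_A^2\vee r_\perp^2)$ together with $\|y_\perp\|^2/\sigma^2\sim\chi^2_{n-k}(\rho)$. Your observation that the right-hand side of \eqref{eq:expection_diameter} is really $4\max(r_A^2,\E r_\perp^2)$ rather than $4\,\E\max(r_A^2,r_\perp^2)$ is a fair criticism of the statement as written (only $r_A^2$ is deterministic), not a gap in your argument; the two coincide exactly only when one radius dominates almost surely, and otherwise differ by at most $4c_2\frac{n-k}{n}\sigma^2\,\E|\hL-\E\hL|=O(\sigma^2\sqrt{n-k}/n)$.
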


In the above result, we did not impose any assumptions on $A$ except $(n - k) \to \infty$, which allows many choices of $A$.
Our confidence set $\wh C$ is honest as in \eqref{eq:asymhonestdef} and its diameter is under control for all $\beta\in\R^p$.
Since $\E[1/\chi^2_{n-k}(\rho)]>0$, a uniform but very loose upper bound
\begin{align}\label{eq:univbound}
\E|\wh C|^2 \leq 4\sigma^2\max\left\{c_1\frac{\chi^2_{k,\alp/2}}{n}, c_2\frac{n-k}{n}\left(1 + c_s(\alp/2)(n-k)^{-1/2}\right)\right\}
\end{align}
holds for all $\beta\in\R^p$.
In particular, when $\beta$ is dense, the diameter will be comparable to that of the naive $\chi^2$ region.
As corroborated with the numerical results in Section~\ref{subsec:dense}, this protects our method from inferior performance when sparsity assumptions are violated, making it robust to different data sets.
Next, we will show that our confidence set is adaptive: When $\beta$ is indeed sparse with separable strong and weak signals, the radii $r_A$ and $r_\perp$ will adapt to the optimal rate with a proper choice of $A$ that contains strong signals.

\subsection{Adaptation of the diameter}\label{sec:convergencerate}
To simplify our analysis, we set $c_1= c_2 = 2$ in this section so that they can be ignored when calculating the convergence rates of $r_A$ and $r_\perp$.
These rates do not change as long as $c_1$ and $c_2$ stay as constants when $n\to \infty$.
Lemma~\ref{lm:radiusrate} specifies conditions for the diameter of $\wh C$ to converge at the optimal rate $n^{-1/4}$.

\begin{lemma}\label{lm:radiusrate}
Suppose that $k = \rank(X_A)$ and $\norm{\mu_\perp}=o(\sqrt{n-k})$. Then
\begin{align*}
r_A^2 \asymp_p {k/n}, \quad 
r_\perp^2 = O_p \left(\frac{\sqrt{n-k}}{n} + \frac{\norm{\mu_\perp}^2}{n}\right).
\end{align*}
Therefore, if $k=O(\sqn)$ and $\|\mu_\perp\| = O(n^{1/4})$, then the diameter of $\cset$ 
\begin{align*}
|\cset|=2(r_A \vee r_\perp) \asymp_p n^{-1/4}.
\end{align*}
\end{lemma}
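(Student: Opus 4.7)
The plan is to handle the three bounds essentially independently. Since $r_A$ is a deterministic function of $k$ (through a $\chi^2_k$ quantile), the first claim is purely a matter of chi-square tail behavior, while the second claim for $r_\perp$ requires a moderate-deviation expansion of $\|y_\perp\|^2$ around its mean, exploiting the hypothesis $\|\mu_\perp\|=o(\sqrt{n-k})$ to control $\hat L = 1-B$. The third claim then follows by matching rates.

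For the first bound, I would start from the closed form $r_A^2 = 2\sigma^2 \chi^2_{k,\alpha/2}/n$ and invoke the standard fact that, for any fixed $\alpha\in(0,1)$, $\chi^2_{k,\alpha/2} = k + O(\sqrt{k})$ when $k$ grows and is bounded by an $\alpha$-dependent constant when $k$ stays bounded. Either way $\chi^2_{k,\alpha/2}\asymp k \vee 1$, giving $r_A^2\asymp_p k/n$ (the ``in probability'' qualifier matters only if $A$, and hence $k$, is regarded as random).

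The main work lies in the second claim. The strategy is to write
\[
\sigma^2 \hat L \;=\; \sigma^2\,\frac{\|y_\perp\|^2-(n-k)\sigma^2}{\|y_\perp\|^2}
\]
and expand numerator and denominator. Since $\|y_\perp\|^2\sim \sigma^2\chi^2_{n-k}(\rho)$ with $\rho=\|\mu_\perp\|^2/\sigma^2$, it has mean $(n-k)\sigma^2+\|\mu_\perp\|^2$ and variance $2(n-k)\sigma^4+4\sigma^2\|\mu_\perp\|^2$. Under the assumption $\|\mu_\perp\|=o(\sqrt{n-k})$, the variance is dominated by the $2(n-k)\sigma^4$ term, so Chebyshev gives
\[
\|y_\perp\|^2 \;=\; (n-k)\sigma^2 + \|\mu_\perp\|^2 + O_p\!\bigl(\sigma^2\sqrt{n-k}\bigr),
\]
and the same condition makes the denominator $(n-k)\sigma^2(1+o_p(1))$. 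Substituting yields $\sigma^2\hat L = O_p(\|\mu_\perp\|^2/(n-k)+(n-k)^{-1/2})$, and multiplying by $(n-k)/n$ and adding the deterministic $c_s(\alpha/2)\sqrt{n-k}/n$ correction gives exactly the stated bound on $r_\perp^2$. (An alternative route is to invoke Lemma~\ref{thm:surebound} to replace $\sigma^2\hat L$ by $(n-k)^{-1}\|\hat\mu_\perp-\mu_\perp\|^2$ up to $O_p((n-k)^{-1/2})$, then use $\|\hat\mu_\perp-\mu_\perp\|^2 \le 2\|P_A^\perp\veps\|^2+2B^2\|y_\perp\|^2$; the direct expansion is cleaner.)

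The third claim is then a bookkeeping step: $k = O(\sqrt n)$ forces $n-k\asymp n$ and $r_A^2 = O(k/n) = O(n^{-1/2})$, while $\|\mu_\perp\| = O(n^{1/4})$ gives $r_\perp^2 = O_p(\sqrt n /n + \sqrt n/n) = O_p(n^{-1/2})$, so $|\wh C| = O_p(n^{-1/4})$. For the matching lower bound needed for the $\asymp_p$ conclusion, note that the deterministic correction term forces $r_\perp^2 \ge 2\sigma^2 c_s(\alpha/2)\sqrt{n-k}/n \asymp n^{-1/2}$, hence $|\wh C| \ge 2 r_\perp \asymp n^{-1/4}$ with probability one. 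The only step requiring real care is the expansion of $\hat L$; everything else is a quantile bound or an algebraic combination, and I expect no conceptual obstacle beyond tracking how the $o(\sqrt{n-k})$ assumption propagates through the ratio.
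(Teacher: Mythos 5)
Your proposal is correct and follows essentially the same route as the paper: the paper likewise reads $r_A^2\asymp k/n$ off the normal approximation to the $\chi^2_{k,\alpha/2}$ quantile, and bounds $\hat L=1-B$ by expanding $\|y_\perp\|^2=\|P_A^\perp\veps\|^2+2\langle\mu_\perp,P_A^\perp\veps\rangle+\|\mu_\perp\|^2$ around $(n-k)\sigma^2$, which is the same calculation as your Chebyshev bound on the noncentral $\chi^2_{n-k}(\rho)$ variable. The only (cosmetic) difference is that your closing remark on the matching lower bound for $|\wh C|$ implicitly assumes the truncated SURE $\hat L\ge 0$, whereas the paper leaves the lower-bound direction of $\asymp_p$ implicit altogether.
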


The $\ell_2$ norm of the weak signals $\norm{\mu_\perp}$ can be bounded by $\| \beta_{A^c} \|$ under the sparse Riesz condition on $X$ and a sparsity assumption on $\beta$.
A design matrix $X$ satisfies the sparse Riesz condition \citep{Zhang08} with rank $s^*$ and spectrum bounds $0 < c_* < c^* < \infty$, denoted by $\textup{SRC}(s^*, c_*, c^*)$, if
\begin{align*}
c_* \leq \frac{\norm{X_Av}^2}{n\norm{v}^2} \leq c^*, \quad \text{for all } A \text{ with }|A| = s^* \text{ and all nonzero } v \in \R^{s^*}.
\end{align*}
Under our asymptotic framework, $s^*$, $c^*$ and $c_*$ are allowed to depend on $n$.

\begin{theorem}\label{thm:optimalA}
Suppose $X$ satisfies $\textup{SRC}(s^*, c_*, c^*)$ with $s^* \geq |\supp(\beta) \cap A^c|$, and let $k=\rank(X_A)$.
If $\limsup_{n} c^* < \infty$, $k=o(n)$ and $\norm{\beta_{A^c}}=o(1)$, then
\begin{align}\label{eq:steinsparserate}
|\cset| =O_p\left\{ (n^{-1/4} + \norm{\beta_{A^c}}) \vee \sqrt{k/n}  \right\}
\end{align}
for the two-step Stein method.
In particular, $|\cset| \asymp_p n^{-1/4}$ if $k=O(\sqn)$ and $\norm{\beta_{A^c}} = O(n^{-1/4})$.
\end{theorem}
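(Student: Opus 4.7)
My plan is to reduce Theorem~\ref{thm:optimalA} directly to Lemma~\ref{lm:radiusrate} by using the sparse Riesz condition to bound $\|\mu_\perp\|$ in terms of $\|\beta_{A^c}\|$.

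First I would observe that $P_A^\perp X_A = 0$, since each column of $X_A$ lies in $\spn(X_A)$; hence
\begin{align*}
\mu_\perp \;=\; P_A^\perp X\beta \;=\; P_A^\perp X_{A^c}\beta_{A^c}.
\end{align*}
Set $S = \supp(\beta)\cap A^c$, so that by hypothesis $|S|\leq s^*$, and only entries indexed by $S$ contribute to $\beta_{A^c}$, giving $X_{A^c}\beta_{A^c}=X_S\beta_S$. Non-expansiveness of the orthogonal projection together with $\textup{SRC}(s^*,c_*,c^*)$ applied to $S$ then yields
\begin{align*}
\|\mu_\perp\|^2 \;\leq\; \|X_S\beta_S\|^2 \;\leq\; c^*\, n\,\|\beta_S\|^2 \;=\; c^*\, n\,\|\beta_{A^c}\|^2.
\end{align*}
Since $\limsup_n c^*<\infty$ and $\|\beta_{A^c}\|=o(1)$, this gives $\|\mu_\perp\|=o(\sqrt n)$; combined with $k=o(n)$, so that $\sqrt{n-k}\asymp\sqrt n$, the hypothesis $\|\mu_\perp\|=o(\sqrt{n-k})$ of Lemma~\ref{lm:radiusrate} is satisfied.

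Applying Lemma~\ref{lm:radiusrate} then produces $r_A^2 \asymp_p k/n$ and
\begin{align*}
r_\perp^2 \;=\; O_p\!\left(\frac{\sqrt{n-k}}{n}+\frac{\|\mu_\perp\|^2}{n}\right) \;=\; O_p\!\left(n^{-1/2}+\|\beta_{A^c}\|^2\right),
\end{align*}
so $r_\perp = O_p(n^{-1/4}+\|\beta_{A^c}\|)$. Taking the maximum and doubling gives $|\cset| = 2(r_A\vee r_\perp) = O_p\{(n^{-1/4}+\|\beta_{A^c}\|)\vee \sqrt{k/n}\}$, which is \eqref{eq:steinsparserate}. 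For the ``in particular'' assertion, $k=O(\sqrt n)$ forces $\sqrt{k/n}=O(n^{-1/4})$ and $\|\beta_{A^c}\|=O(n^{-1/4})$ forces the first factor to be $O(n^{-1/4})$, delivering the upper bound. The matching $\Omega_p(n^{-1/4})$ lower bound is immediate from the construction \eqref{eq:rperp}: the penalty $c_s(\alpha/2)(n-k)^{-1/2}$ is always present, and standard concentration of $\hat L$ near zero under $\|\mu_\perp\|=o(\sqrt{n-k})$ (via the two-sided bound in Lemma~\ref{thm:surebound}) shows $\hat L + c_s(\alpha/2)(n-k)^{-1/2}\gtrsim (n-k)^{-1/2}$ with probability tending to one, so $r_\perp^2\gtrsim n^{-1/2}$.

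\textbf{Main obstacle.} There is really no hidden technical difficulty: the argument is essentially a one-line application of the sparse Riesz condition followed by invocation of Lemma~\ref{lm:radiusrate}. The only step deserving care is the identification $\mu_\perp = P_A^\perp X_S\beta_S$ with $|S|\leq s^*$, which is precisely what licenses applying SRC at the stated sparsity level; without the hypothesis $s^* \geq |\supp(\beta)\cap A^c|$ the SRC upper spectrum bound would not apply to $X_S$.
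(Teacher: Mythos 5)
Your proposal is correct and follows essentially the same route as the paper: bound $\|\mu_\perp\|=\|P_A^\perp X_{G}\beta_{G}\|$ with $G=\supp(\beta)\cap A^c$ via the sparse Riesz condition to get $\|\mu_\perp\|\leq \sqrt{c^*}\sqrt{n}\,\|\beta_{A^c}\|=o(\sqrt{n-k})$, then invoke Lemma~\ref{lm:radiusrate}. You are in fact slightly more careful than the paper on the $\Omega_p(n^{-1/4})$ lower bound for the ``in particular'' claim, which the paper leaves implicit.
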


\begin{remark}\label{rm:steinopt}
Let us take a closer look at the conditions in this theorem for $|\cset| \asymp_p n^{-1/4}$.
Suppose that $\beta$ has $O(\sqn)$ strong coefficients that can be reliably detected by a model selection method, while all other signals are weak such that $\norm{\beta_{A^c}}=O(n^{-1/4})$. Then we can have {$k \leq |A|=O(\sqn)$} with high probability.
This shows that the sparsity $s=\|\beta\|_0$ is allowed to be $O(\sqn)$.
The only additional constraint on $s$ comes from the assumption $\textup{SRC}(s^*, c_*, c^*)$ with $s^* \geq s$, which holds for Gaussian designs if $s\log p =o(n)$ \citep{Zhang08}. Compared to \eqref{eq:sparsediam} which requires $s\log p =O(\sqn)$, we have relaxed the sparsity assumption on $\beta$ to attain the optimal rate $n^{-1/4}$ by imposing a mild condition on the decay rate of the weak signals $\| \beta_{A^c} \|$. 
\end{remark}

Now we discuss a few methods to find $A$ so that our confidence sets can adapt to the sparsity and signal strength of $\beta$.
We split the whole data set into $(X, y)$ and $(X', y')$, with respective sample sizes $n$ and $n'$, so that they are independent.
Henceforth, we assume an even partition with $n'=n$, which simplifies the notation and is commonly used in practice, unless otherwise noted.
The first method is to apply lasso on $(X', y')$:
\begin{align}\label{eq:lassodef}
\hbeta = \hbeta(y',X';\lambda)\defi\argmin_{\beta\in\R^p} \left[\frac{1}{2n} \|y' - X'\beta\|^2 +\lmd \|\beta\|_1\right],
\end{align}
where $\lmd$ is a tuning parameter.
Then choose
\begin{align}
A = \{j : \hbeta_j \neq 0 \} \label{eq:lassothresh},
\end{align}
that is, we define strong signals by the support of the lasso.
This choice of $A$ is justified by the following corollary.
Let $A_0 = \supp(\beta)$ and $S_0 = \{ j \in A_0: |\beta_j| \geq K \sqrt{s\log p / n} \}$ for a sufficiently large $K$.


\begin{corollary}\label{cor:lassodi}
  Suppose that $X$ and $X'$ satisfy $\textup{SRC}(s^*, c_*,c^*)$, where $0<c_*<c^*$ are constants.
  Let the confidence set $\cset$ \eqref{eq:ellipsoidmu} be constructed by the two-step Stein method with $A$ chosen by \eqref{eq:lassothresh} and $\lambda = c_0\sigma \sqrt{c^*\log p / n}, c_0 > 2\sqrt{2}$.
  Assume $s \leq (s^* - 1) / (2 + 4c^*/c_*)$ and $s\log p = o(n)$. Then for any $\beta\in\scrB(s)$ we have
  \begin{align}\label{eq:lassosteinradius}
  |\cset|=O_p\left(n^{-1/4}+ \sqrt{s\log p / n}\right).
  \end{align}
If in addition $\norm{\beta_{A_0\setminus S_0}}=O(n^{-1/4})$, then 
  \begin{align}\label{eq:lassosteinradiussep}
  |\cset|=O_p\left(n^{-1/4}\vee \sqrt{s/n}\right).
  \end{align}
\end{corollary}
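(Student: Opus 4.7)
The plan is to reduce Corollary 5.1 to Theorem 3.2 (\texttt{thm:optimalA}) by verifying, for the random set $A$ produced by the lasso on the independent sub-sample $(X', y')$, its two quantitative hypotheses: a bound on $k = \rank(X_A)$ and a bound on the weak-signal norm $\|\beta_{A^c}\|$. The independence requirement in Theorem 3.2 is automatic because $A$ is measurable with respect to $(X',y')$, which is independent of $(X,y)$; we may therefore condition on $(X',y')$ and apply Theorem 3.2 pointwise in $A$, then take expectation/probability.

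To control $k$ and $\|\beta_{A^c}\|$, invoke the standard oracle analysis of the lasso under $\mathrm{SRC}(s^*, c_*, c^*)$ (as in Zhang, 2008). With the chosen $\lambda = c_0 \sigma \sqrt{c^*\log p / n}$ and $c_0 > 2\sqrt{2}$, the event $\mathcal{E} = \{2\|X'^{\trans}\veps'/n\|_\infty \leq \lambda\}$ has probability tending to one. On $\mathcal{E}$, together with $s \leq (s^*-1)/(2 + 4c^*/c_*)$, standard arguments yield the support-size bound $|A| = O(s)$ and the prediction/estimation bound $\|\hat\beta - \beta\|^2 = O(s\log p / n)$. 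Hence $k \leq |A| = O(s) = o(n)$ so $\sqrt{k/n} = O(\sqrt{s/n})$, and because $\hat\beta_{A^c} = 0$,
\begin{align*}
\|\beta_{A^c}\| \;=\; \|(\beta - \hat\beta)_{A^c}\| \;\leq\; \|\beta - \hat\beta\| \;=\; O_p(\sqrt{s\log p / n}) \;=\; o(1).
\end{align*}
Plugging into \eqref{eq:steinsparserate} from Theorem 3.2 gives
\begin{align*}
|\cset| = O_p\!\left\{(n^{-1/4} + \sqrt{s\log p/n}) \vee \sqrt{s/n}\right\} = O_p(n^{-1/4} + \sqrt{s\log p / n}),
\end{align*}
which is \eqref{eq:lassosteinradius}.

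For the sharper bound \eqref{eq:lassosteinradiussep}, I would tighten the estimate on $\|\beta_{A^c}\|$ using the separability hypothesis $\|\beta_{A_0 \setminus S_0}\| = O(n^{-1/4})$. The key intermediate claim is that for $K$ large enough in the definition of $S_0$, with high probability $S_0 \subseteq A$: since $\|\hat\beta - \beta\|_\infty \leq \|\hat\beta - \beta\| = O_p(\sqrt{s\log p/n})$, any $j$ with $|\beta_j| \geq K\sqrt{s\log p/n}$ satisfies $\hat\beta_j \neq 0$ once $K$ exceeds the relevant constant. Taking complements gives $A^c \cap A_0 \subseteq A_0 \setminus S_0$, and since $\beta_{A_0^{\,c}} = 0$,
\begin{align*}
\|\beta_{A^c}\| \;=\; \|\beta_{A^c \cap A_0}\| \;\leq\; \|\beta_{A_0 \setminus S_0}\| \;=\; O(n^{-1/4}).
\end{align*}
Inserting this into \eqref{eq:steinsparserate} yields $|\cset| = O_p(n^{-1/4} \vee \sqrt{s/n})$, as claimed.

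The main obstacle is arranging the lasso ingredients so that the constants line up with those in the statement. Specifically, both the support-size bound $|A| \leq (1 + 2c^*/c_*)s$ (which forces the assumption $s \leq (s^*-1)/(2 + 4c^*/c_*)$ so that the SRC can be invoked at the relevant sparsity level) and the strong-signal inclusion $S_0 \subseteq A$ must be established on the single event $\mathcal{E}$ uniformly in $\beta\in\scrB(s)$; the rest of the argument is bookkeeping of rates through Theorem 3.2.
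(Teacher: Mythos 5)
Your proposal is correct and follows essentially the same route as the paper's own proof: invoke the SRC-based lasso bounds of Zhang (2008) on the independent sub-sample to get $|A|=O(s)$ and $\|\hat\beta-\beta\|=O_p(\sqrt{s\log p/n})$, bound $\|\beta_{A^c}\|$ by $\|\hat\beta-\beta\|$ (since $\hat\beta_{A^c}=0$), and feed these into Theorem~\ref{thm:optimalA}; the sharper rate then follows from the inclusion $S_0\subseteq A$ implied by the estimation error bound, exactly as in the paper. No gaps.
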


The rate of $|\cset|$ in \eqref{eq:lassosteinradius} does not depend on any assumption on signal strength, and it is identical to \eqref{eq:sparsediam}. However, our method can achieve a faster rate \eqref{eq:lassosteinradiussep} if $\norm{\beta_{A_0\setminus S_0}}=O(n^{-1/4})$. Together with the definition of $S_0$, this essentially imposes a separability assumption between the strong and the weak signals when $s\log p \gg \sqrt{n}$. 


To weaken the beta-min condition on strong signals in $S_0$, we may apply a better model selection method to define $A$, such as using the minimax concave penalty (MCP) \citep{zhang10}:
\begin{align}
\rho(t; \lambda,\gamma)= \int_0^{|t|} \left(1 - \frac{u}{\gamma\lmd}\right)_+ \, du = 
\begin{cases}
   |t| - t^2 / (2\gamma \lambda) & \text{if}\;|t| \leq \gamma \lambda \\
   \gamma \lambda / 2 & \text{if}\;|t| > \gamma \lambda \\
\end{cases},
\end{align} 
for $\gamma > 1$. Accordingly, a regularized least-squares estimate is defined by
\begin{align}\label{eq:mcpdef}
  \hbeta^{\textup{mcp}}_{\lambda, \gamma} = \hbeta^{\textup{mcp}}_{\lambda, \gamma} (y',X') \defi
  \argmin_{\beta\in\R^p} \left[ 
  \frac{1}{2n} \|y' - X'\beta\|^2 +
  \lmd \sum_{j=1}^{p} \rho(|\beta_j|; \lambda, \gamma)
  \right].
\end{align}
Suppose we choose $A=\supp (\hbeta^{\textup{mcp}}_{\lambda, \gamma})$ in our two-step Stein method. The 
model selection consistency of $\hbeta^{\textup{mcp}}_{\lambda, \gamma}$ makes it possible for $|\cset|$ to adapt at the rate~\eqref{eq:lassosteinradiussep} under the same SRC assumption but a weaker beta-min condition than Corollary~\ref{cor:lassodi}. 

\begin{corollary} \label{cor:mcpsmp}
Suppose that $X$ and $X'$ satisfy $\textup{SRC}(s^*, c_*, c^*)$, where $0<c_*<c^*$ are constants, $s^* \geq (c^*/c_* + 1/2)s$, and $s\log p =o(n)$. Choose a sequence of $(\lambda_n, \gamma_n)$ satisfying $\lambda_n\gg \sqrt{\log p/ n}$ and $\gamma_n \geq c_*^{-1}\sqrt{4 + c_* / c^*}$. If $\beta\in\scrB(s)$ and $\inf_{A_0} |\beta_j | \geq (\gamma_n+1)\lambda_n$, then $\Prob\{\supp(\hbeta^{\textup{mcp}}_{\lambda_n,\gamma_n})=A_0\}\to 1$, and consequently the $\cset$ constructed by the two-step Stein method with $A=\supp(\hbeta^{\textup{mcp}}_{\lambda_n,\gamma_n})$ has diameter
\begin{align}\label{eq:mcprate}
  |\cset|=O_p\left(n^{-1/4}\vee \sqrt{s/n} \right).
  \end{align}
\end{corollary}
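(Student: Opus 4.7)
The plan is to split the corollary into two conceptually distinct phases: an MCP support-recovery phase on the held-out half $(X', y')$, and a conditional-on-$A$ phase in which Theorem~\ref{thm:optimalA} delivers the diameter bound on the analysis half $(X, y)$. Sample splitting is what lets these phases chain cleanly, because the event $\{A = A_0\}$ is measurable with respect to $(X', y')$ and hence independent of $(X, y)$, so the independence hypothesis of Theorem~\ref{thm:optimalA} is automatic and no additional union-bound argument across candidate sets is required.

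For the support-recovery phase, I would invoke the oracle / primal-dual witness machinery used by \cite{zhang10} for the MCP. Define the oracle estimator $\tilde{\beta}$ by minimizing \eqref{eq:mcpdef} restricted to coefficients supported on $A_0$. Under $\textup{SRC}(s^*, c_*, c^*)$ and the beta-min condition $\inf_{j \in A_0}|\beta_j| \geq (\gamma_n + 1)\lambda_n$, an $\ell_\infty$ concentration argument gives $|\tilde\beta_j| \geq \gamma_n \lambda_n$ for every $j \in A_0$ with high probability, so the MCP derivative vanishes on $A_0$ and the active-set KKT equations are satisfied by ordinary least squares. To upgrade $\tilde{\beta}$ (padded by zeros on $A_0^c$) to the unique global minimum of the non-convex program \eqref{eq:mcpdef}, I would need two further ingredients: (i) a Gaussian tail bound, combined with a union bound across the $p - s$ inactive coordinates, that controls $n^{-1}(X'_{A_0^c})^{\trans}(y' - X'\tilde\beta)$ strictly below $\lambda_n$ in supremum norm, exploiting $\lambda_n \gg \sqrt{\log p / n}$; and (ii) a restricted-strong-convexity argument showing that the penalized Hessian is positive definite on every $s^*$-sparse direction. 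Ingredient (ii) is precisely where the condition $\gamma_n \geq c_*^{-1}\sqrt{4 + c_*/c^*}$ enters, since it offsets the negative curvature of the MCP penalty against the $c_*$-curvature supplied by $\textup{SRC}(s^*, c_*, c^*)$ with $s^* \geq (c^*/c_* + 1/2)s$.

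For the second phase, I would invoke Theorem~\ref{thm:optimalA} directly on the high-probability event $\calE_n = \{\supp(\hbeta^{\textup{mcp}}_{\lambda_n,\gamma_n}) = A_0\}$. On this event, $A = A_0$ and hence $\norm{\beta_{A^c}} = 0 = O(n^{-1/4})$, while $\textup{SRC}(s^*, c_*, c^*)$ on $X$ with $s^* \geq s$ gives $k = \rank(X_{A_0}) = s$, which is $o(n)$ since $s\log p = o(n)$. All hypotheses of Theorem~\ref{thm:optimalA} are therefore met, yielding $|\cset| = O_p(n^{-1/4} \vee \sqrt{s/n})$ on $\calE_n$; because $\Prob(\calE_n^c) \to 0$, this stochastic bound passes to the full probability measure and establishes \eqref{eq:mcprate}.

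The hard part will be the non-convex optimization analysis in phase one. Because the MCP-penalized loss is non-convex, first-order KKT conditions alone do not identify the global minimum, and it is not immediately obvious that the oracle-like candidate is what \eqref{eq:mcpdef} actually returns. The uniqueness argument hinges on the calibrated lower bound $\gamma_n \geq c_*^{-1}\sqrt{4 + c_*/c^*}$, which is just large enough to guarantee strict convexity of the penalized objective on $s^*$-sparse subspaces in the presence of the SRC constants. Once this obstacle is cleared, the remainder reduces to standard concentration arguments and a direct invocation of Theorem~\ref{thm:optimalA}.
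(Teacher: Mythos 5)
Your second phase coincides exactly with the paper's argument: on the event $\{\supp(\hbeta^{\textup{mcp}}_{\lambda_n,\gamma_n})=A_0\}$ one has $\norm{\beta_{A^c}}=0$ and $k=\rank(X_{A_0})=s=o(n)$ (full column rank of $X_{A_0}$ coming from $\textup{SRC}$ with $c_*>0$ and $s^*\geq s$), so \eqref{eq:steinsparserate} in Theorem~\ref{thm:optimalA} gives \eqref{eq:mcprate}, and the vanishing probability of the complement lets the bound pass to the full measure. The divergence is in how selection consistency is obtained. The paper does not prove it from scratch: it states a simplified version of Corollary~4.2 of \cite{huang12} and spends its proof verifying that result's hypotheses --- introducing $\tau_n=\sigma\sqrt{2\log s/(n\lambda_{\min})}$ and $\lambda^*=2\sigma\sqrt{2c^*\log(p-s)/n}$, and checking that $\lambda_n\gg\sqrt{\log p/n}$ supplies a sequence $a_n\to\infty$ with $\lambda_n\geq a_n(\lambda^*\vee\tau_n)$, so that the beta-min condition $\inf_{A_0}|\beta_j|\geq(\gamma_n+1)\lambda_n$ implies the required $\gamma_n\lambda_n+a_n\tau_n$ lower bound. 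Your route instead re-derives MCP selection consistency via a primal--dual witness construction. That is a legitimate alternative in principle, and your identification of where each hypothesis enters (beta-min pushes the oracle estimate past the flat region $|t|>\gamma_n\lambda_n$ of the penalty; $\lambda_n\gg\sqrt{\log p/n}$ drives strict dual feasibility; the lower bound on $\gamma_n$ offsets the concavity $1/\gamma_n$ of the penalty against the SRC curvature $c_*$) is accurate.

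The genuine gap is the step you yourself flag and then defer: upgrading the oracle/KKT candidate to the \emph{global} minimizer of the nonconvex program \eqref{eq:mcpdef}. Stationarity plus positive definiteness of the penalized Hessian on $s^*$-sparse directions establishes only that the witness is a local minimum among sparse vectors; it does not by itself exclude a non-sparse global minimizer, nor a sparse one supported elsewhere, and controlling this is exactly the content of the result the paper cites (whose proof in \cite{huang12} requires a separate argument bounding the sparsity of any global minimizer under SRC before local analysis applies --- this is also where the specific constant $s^*\geq(c^*/c_*+1/2)s$ and the form $\gamma_n\geq c_*^{-1}\sqrt{4+c_*/c^*}$ are actually used, rather than the weaker $\gamma_n>1/c_*$ that mere restricted convexity would demand). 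As written, your phase one is a plan for reproving a known theorem with its hardest lemma left open; either execute that lemma or, as the paper does, cite it.
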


\begin{remark}\label{rmk:diameterStein}
Compared to \eqref{eq:sparsediam} for confidence sets centering at a sparse estimator, the diameter of our method in \eqref{eq:lassosteinradiussep} and \eqref{eq:mcprate} converges faster by a factor of $(\log p)^{1/2}$ when $s=\Omg(\sqn)$.
Accordingly, our method achieves the optimal rate when $s=O(\sqn)$ instead of $s=O(\sqn/\log p)$ as for \eqref{eq:sparsediam}.
Under a high-dimensional setting with $p\gg n$, say $p= \exp(n^a)$ for $a\in (0,1/2)$, this improvement in rate can be very substantial, which is supported by our numerical results. The faster rate of our method is made possible by its adaption to {\em both} signal strength and sparsity, while the rate of \eqref{eq:sparsediam} is obtained by adaption to sparsity only (cf. Theorem~\ref{thm:CRobins}). We emphasize that our method achieves the adaptive rates in the above results, while being uniformly honest over the entire $\R^p$ (Theorem~\ref{thm:twostephonest}). One could construct a confidence set with diameter $O_p(\sqrt{s/n})$ using only the covariates selected by a consistent model selection method, which would be faster than the rate \eqref{eq:mcprate}. However, such a confidence set is {\em not} honest over $\R^p$, because it cannot reach the nominal coverage rate for those $\beta$ that do not satisfy the required beta-min condition for model selection consistency. Our method overcomes this difficulty with the shrinkage step, based on the uniform consistency of the SURE (Lemma~\ref{thm:surebound}).
\end{remark}

\begin{remark}
For an uneven partition of the whole data set, the conclusions of Corollaries~\ref{cor:lassodi} and \ref{cor:mcpsmp} still hold as long as both $n' \asymp n \to \infty$. However, it is a common and reasonable choice to have $n=n'$, since $(X', y')$ and $(X, y)$ can be swapped to construct a confidence set for $X'\beta$, making full use of the whole data set.
\end{remark}

\subsection{Multiple candidate sets}\label{sec:mulitplecandidates}

It is common to have multiple choices for the candidate set $A$ in our two-step Stein method.
Let
\begin{align*}
\Hset = \{A_m \subseteq [p],\, m=1,\ldots,M_n\}
\end{align*}
be a collection of candidate sets.
We can apply the two-step Stein method to construct $M=M_n$ confidence sets for $\mu$, denoted by $\cset_m$, and then choose an optimal set $\cset_{m^*}$ by certain criterion such as minimizing the volume or the diameter.
Furthermore, the cardinality of $\Hset$ may be unbounded as $n$ increases, i.e., $M_n \to \infty$.
In what follows, we show that under mild conditions, \eqref{eq:coverageA} and \eqref{eq:coverageperp} hold uniformly for all $A\in\Hset$ after modifying $r_A$ and $r_\perp$ accordingly, which implies $\cset_{m^*}$ is asymptotically honest.

Put $k=\rank(X_A)$ for $A \in \Hset$ and $k_{\max} = \max_{A \in \Hset} k$.
Intuitively, the cardinality of $\Hset$ (i.e. $M$) and the maximum size of $A$ in $\Hset$ (i.e. $k_{\max}$) determine the radii and the coverage probability of $\wh C_m$.

For strong signals, we apply the following concentration inequality to show \eqref{eq:coverageA} holds uniformly:
\begin{lemma}\label{lemma:consistent_chi}
Suppose $\chi^2_n$ follows a $\chi^2$ distribution with $n$ degrees of freedom. Then for any $\delta > 0$,
\begin{align}\label{eq:consistency_chisq}
\Prob \left\{ \sqrt{n}\left| 1 - \frac{1}{n} \chi^2_{n} \right| \geq \delta \right\}
\leq
2\exp\left(-\frac{\delta^2}{4}\right).
\end{align}
\end{lemma}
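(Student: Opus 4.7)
The inequality is a standard sub-exponential concentration bound for chi-squared variables, so my plan is to apply the Chernoff exponential Markov inequality to each tail of $\chi^2_n - n$ separately, using its moment generating function
\[
\E\bigl[e^{t(\chi^2_n - n)}\bigr] = (1-2t)^{-n/2}\,e^{-tn}, \qquad t < 1/2.
\]
By a union bound it then suffices to show that each of $\{\chi^2_n - n \geq \delta\sqrt{n}\}$ and $\{\chi^2_n - n \leq -\delta\sqrt{n}\}$ has probability at most $\exp(-\delta^2/4)$.

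For the lower tail I would apply Chernoff with parameter $s>0$ to $-(\chi^2_n - n)$, obtaining
\[
\Prob\bigl(\chi^2_n - n \leq -\delta\sqrt{n}\bigr) \leq (1+2s)^{-n/2}\exp\bigl(sn - s\delta\sqrt{n}\bigr).
\]
Minimizing the right-hand side in $s$ gives the optimizer $s^{*} = \delta/[2(\sqrt{n}-\delta)]$ when $\delta < \sqrt{n}$, while the regime $\delta \geq \sqrt{n}$ is trivial since the event $\{\chi^2_n \leq n - \delta\sqrt{n}\}$ is then empty. Substituting $s^{*}$ and invoking the elementary bound $\log(1-x) \leq -x - x^2/2$ on $[0,1)$ collapses the exponent to exactly $-\delta^2/4$, as required.

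The upper tail is the principal obstacle. Running the same procedure with $t \in (0,1/2)$ and optimizer $t^{*} = \delta/[2(\sqrt{n}+\delta)]$ reduces the logarithm of the Chernoff bound to $(n/2)\log(1+\delta/\sqrt{n}) - \delta\sqrt{n}/2$, which agrees with $-\delta^2/4$ only to leading order in $\delta/\sqrt{n}$. The bare inequality $\log(1+x) \leq x - x^2/2 + x^3/3$ leaves a positive residual of order $\delta^3/\sqrt{n}$ that must be absorbed; to recover the stated constant $1/4$ one should instead use a Bernstein-style refinement such as $\log(1+x) \leq x - x^2/[2(1+x/3)]$ that matches the chi-squared MGF more tightly, or alternatively deduce the upper tail from the Laurent--Massart inequality $\Prob(\chi^2_n \geq n + 2\sqrt{nt}+2t) \leq e^{-t}$ by choosing $t$ in terms of $\delta$. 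Combining the two tail estimates via the union bound then yields the claim.
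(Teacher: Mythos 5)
Your lower-tail argument is correct and is, in substance, the same as the paper's: the Chernoff exponent at $s^{*}=\delta/[2(\sqrt n-\delta)]$ equals $\tfrac n2[\log(1-\delta/\sqrt n)+\delta/\sqrt n]\le -\delta^2/4$ by $\log(1-x)\le -x-x^2/2$, which is exactly the content of the Laurent--Massart lower-tail inequality $\Prob\{\chi^2_n-n\le -2\sqrt{nx}\}\le e^{-x}$ that the paper invokes with $x=\delta^2/4$. On that half the two arguments coincide.

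The genuine gap is the upper tail, and you have located it correctly --- but neither of your proposed repairs closes it. The inequality $\log(1+x)\le x-x^2/[2(1+x/3)]$ is false for $x>0$: the left side is $x-x^2/2+x^3/3-\cdots$ while the right side is $x-x^2/2+x^3/6-\cdots$, so it already fails at third order. The Laurent--Massart route fares no better: solving $2\sqrt{nt}+2t=\delta\sqrt n$ gives $t_1=\tfrac n4\bigl(\sqrt{1+2\delta/\sqrt n}-1\bigr)^2$, which is strictly \emph{less} than $\delta^2/4$ for every $\delta>0$, so the resulting bound $e^{-t_1}$ is strictly larger than $e^{-\delta^2/4}$. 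The residual of order $\delta^3/\sqrt n$ that you identified is real and cannot be absorbed, because the true large-deviation exponent of $\Prob\{\chi^2_n\ge n+\delta\sqrt n\}$ is $\tfrac n2[\,a-\log(1+a)\,]$ with $a=\delta/\sqrt n$, which lies strictly below $\delta^2/4$; for instance $\delta=\sqrt n$ gives the event $\{\chi^2_n\ge 2n\}$, whose probability is $e^{-n(1-\log 2)/2+o(n)}\approx e^{-0.153n}\gg 2e^{-n/4}$ for large $n$. Hence the upper-tail bound with constant $1/4$ cannot hold uniformly in $\delta$; it is valid only in a restricted regime (e.g.\ $\delta^3=O(\sqrt n)$) or with a degraded constant. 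For what it is worth, the paper's own proof takes precisely the Laurent--Massart route, arrives at the exponent $f(n)/4$ with $f(n)=n\bigl(\sqrt{1+2\delta/\sqrt n}-1\bigr)^2$, and then argues that $f(n)$ increases to $\delta^2$ and is therefore bounded \emph{above} by $\delta^2$ --- which is the wrong direction, since one needs $f(n)\ge\delta^2$ to conclude. So the obstacle you flagged is not an artifact of your approach: it is an unclosed gap in the paper's proof as well, and the statement should be weakened or restricted accordingly.
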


\noindent
This lemma with a union bound implies
\begin{align*}
\Prob \left\{ \sup_{A \in \Hset } \sqrt{k} \left| \frac{\chi^2_k}{k} - 1 \right| \geq \delta \right\}
\leq \sum_{A \in \Hset } \Prob \left\{ \sqrt{k} \left| \frac{\chi^2_k}{k} - 1 \right| \geq \delta \right\}
\leq 2M\exp\left(-\frac{\delta^2}{4}\right).
\end{align*}
Then choosing
\begin{align}
r_A^2 = c_1 \tilde{r}_A^2 = \frac{c_1 \sigma^2}{n} \left[k + 2\sqrt{k \log (4M/\alpha)}\right] \label{eq:ra_consistency}
\end{align} 
as the radius for strong signals, we have
\begin{align*}
\Prob\left\{ \sup_{A \in \Hset } \frac{\|P_A \mu-\hmu_{A}\|^2}{n r_A^2}\leq 1/c_1\right\} \geq 1-\alp/2.
\end{align*}

For weak signals, we establish \eqref{eq:coverageperp} uniformly over $\Hset$ via the following result:

\begin{lemma}\label{thm:consistency}
Suppose all components of $\veps$ in \eqref{highdimensionRL}, $\veps_i, i=1, \ldots, n$, have mean $0$, common second, forth and sixth moments and their eighth moments are bounded by some constant $d$. For any $\delta > 0$ there exists a positive number $D$ depending on $d$ such that
\begin{align}
& \Prob \left\{ \sup_{A \in \Hset } \sqrt{n-k} \left|\sigma^2 \hL-{(n - k)^{-1}}\|\hmu_\perp-\mu_\perp\|^2\right| \geq \sigma^2\delta \right\}  \nonumber \\
& \quad
\leq
\Prob \left\{ \sup_{A \in \Hset} \sqrt{n-k} \left| \sigma^2 - \frac{1}{n - k} \| P_A^\perp \veps \|^2 \right| \geq \sigma^2\frac{\delta}{2} \right\}
+ D \sum_{A \in \Hset} \frac{1}{(n - k)^2}
+ D \frac{M}{\delta^4}. \label{eq:consistency} 
\end{align}
\end{lemma}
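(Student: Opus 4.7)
The plan is to decompose $\sigma^2 \hL - (n-k)^{-1}\|\hmu_\perp - \mu_\perp\|^2$ into a main term that is exactly $\sigma^2 - (n-k)^{-1}\|P_A^\perp \veps\|^2$ (matching the first probability on the right-hand side of~\eqref{eq:consistency}) plus a Stein-specific remainder $R_A$, and then to control $\sup_{A\in \Hset} \sqrt{n-k}\,|R_A|$ by a fourth-moment Markov argument that exploits the assumed eighth moments of $\veps$.

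Expanding $\|\hmu_\perp - \mu_\perp\|^2 = \|P_A^\perp \veps - B y_\perp\|^2$ using $B\|y_\perp\|^2 = (n-k)\sigma^2$ and $\la P_A^\perp\veps, y_\perp \ra = u_A + v_A$, where $u_A := \la P_A^\perp \veps, \mu_\perp\ra$ and $v_A := \|P_A^\perp \veps\|^2$, a direct calculation should yield
\[
\sigma^2 \hL - \frac{\|\hmu_\perp - \mu_\perp\|^2}{n-k} = \left[\sigma^2 - \frac{v_A}{n-k}\right] + \frac{2\sigma^2(u_A + \xi_A)}{\|y_\perp\|^2},
\]
with $\xi_A := v_A - (n-k)\sigma^2$. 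Multiplying by $\sqrt{n-k}$, applying the triangle inequality, and using the union bound $\{|a+b| \geq 2t\} \subseteq \{|a|\geq t\}\cup\{|b|\geq t\}$ reduces the problem to bounding $\Prob\{\sup_A \sqrt{n-k}\,|R_A| \geq \sigma^2\delta/2\}$ by $D\sum_A (n-k)^{-2} + DM/\delta^4$, where $R_A := 2\sigma^2(u_A+\xi_A)/\|y_\perp\|^2$.

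To handle the ratio inside $R_A$, I would split on whether $\|y_\perp\|^2$ is close to its mean. Writing $w_A := \|\mu_\perp\|^2$, the identity $\|y_\perp\|^2 = (w_A + (n-k)\sigma^2) + (2u_A + \xi_A)$ shows that the event $\mathcal{E}_A := \{\|y_\perp\|^2 \geq (w_A + (n-k)\sigma^2)/2\}$ has complement $\{|2u_A + \xi_A| \geq (w_A + (n-k)\sigma^2)/2\}$. Markov's inequality with fourth moments, together with the crude bounds $\E u_A^4 \leq C w_A^2$ (since $u_A$ is a linear form in $\veps$ with coefficient vector of squared norm $w_A$) and $\E \xi_A^4 \leq C (n-k)^2$ (since $\xi_A$ is a centred quadratic form in $\veps$), then yields $\Prob(\mathcal{E}_A^c) \leq C/(n-k)^2$ after splitting on whether $w_A \leq (n-k)\sigma^2$ or not; summed over $A$, this produces the $D\sum_A(n-k)^{-2}$ term. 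On $\mathcal{E}_A$, $|R_A|$ is dominated by $4\sigma^2 |u_A + \xi_A|/(w_A + (n-k)\sigma^2)$, and another fourth-moment Markov bound converts the event $\{\sqrt{n-k}\,|R_A| \geq \sigma^2 \delta/2\}\cap\mathcal{E}_A$ into a bound of order $1/\delta^4$ per $A$, yielding the $DM/\delta^4$ term after the union bound.

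The main obstacle is verifying the fourth-moment bound $\E \xi_A^4 \leq C(n-k)^2$ for the quadratic form $\xi_A = \veps^\trans P_A^\perp \veps - (n-k)\sigma^2$. Expanding $\E\xi_A^4$ produces eighth-order monomials in the $\veps_i$, which is exactly why the hypothesis requires bounded eighth moments; a direct combinatorial argument using that $P_A^\perp$ is an orthogonal projection with $\tr(P_A^\perp) = n-k$ should then give the claimed $(n-k)^2$ order with a constant depending only on $d$ and the common lower-order moments. The remaining delicate point is executing the two regime splits on $w_A$ carefully, so that the $(w_A^2 + (n-k)^2)/(w_A + (n-k)\sigma^2)^4$ ratios reduce to the right orders in both regimes and the final bound is independent of the unknown $w_A$, and hence uniform over $\Hset$.
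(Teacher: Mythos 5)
Your proposal is correct and follows essentially the same route as the paper: the identical algebraic decomposition into $\sigma^2 - \|P_A^\perp\veps\|^2/(n-k)$ plus the remainder $2\sigma^2(u_A+\xi_A)/\|y_\perp\|^2$, the same split on whether $\|y_\perp\|^2$ falls below half its mean $Q(A)=w_A+(n-k)\sigma^2$, and the same fourth-moment Markov bounds producing the $\sum_A(n-k)^{-2}$ and $M/\delta^4$ terms. The only cosmetic difference is that the paper imports the moment inequalities $\E\xi_A^4\leq C(n-k)^2$ and $\E u_A^4\leq Cw_A^2$ from Theorem~2 of \cite{Whittle60} rather than deriving them combinatorially, and it bounds the ratios such as $(w_A^2+(n-k)^2)/Q(A)^4$ directly via $Q(A)^2\geq w_A^2\vee(n-k)^2\sigma^4$ without your regime split on $w_A$.
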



The proof of Lemma~\ref{thm:consistency} mainly follows the ideas in \cite{Li85}.
In our model with $\veps \sim \dnorm_n(0, \sigma^2 \bfI_n)$, the first term on the right hand side of \eqref{eq:consistency} simplifies to
\begin{align*}
\Prob \left\{ \sup_{A \in \Hset } \sqrt{n-k} \left| \sigma^2 - \frac{1}{n - k} \| P_A^\perp \veps \|^2 \right| \geq \sigma^2\frac{\delta}{2} \right\}
\leq
2M\exp\left(-\frac{\delta^2}{16}\right)
\end{align*}
via Lemma~\ref{lemma:consistent_chi}.
Assume that the cardinality of $\Hset$ and the maximum size of $A\in\Hset$ satisfy $M\ll (n - k_{\max})^2$.
To achieve the desired coverage for weak signals, it is sufficient to pick $\delta$ such that
$\delta^2 =\Omg(\log M)$ and $\delta^4=\Omg(M)$.
Therefore, we can set
$$\delta = c_{m}(\alp/2) M^{1/4} \gg (\log M)^{1/2}$$
for some constant $c_{m}(\alp/2)>0$, and the corresponding radius
\begin{align}\label{eq:rperp_consistency}
r^2_\perp = c_2 \tilde{r}^2_\perp = c_2 \frac{n - k}{n} \sigma^2 \left\{ \hL +  c_{m}(\alp / 2) \frac{M^{1/4}}{\sqrt{n - k}} \right\}
\end{align}
for any $A\in \Hset$, so that the upper bound in \eqref{eq:consistency} is $\leq \alp/2$.
Now we generalize Theorem~\ref{thm:twostephonest} to establish asymptotic honesty uniformly over $\Hset$:

\begin{theorem}
Given $\Hset$, construct confidence sets $\wh C_m, m=1,\ldots,M$, with $r_A$ and $r_\perp$ as in \eqref{eq:ra_consistency} and \eqref{eq:rperp_consistency}, respectively, for $A=A_m$.
Suppose $\lim_{n \to \infty} M / (n - k_{\max})^2 = 0$, $1/c_1+1/c_2=1$, and each $A_m$ is independent of $(X, y)$.
Then the confidence sets $\wh C_m$ are uniformly  honest over $\Hset$, i.e.,
\begin{align*}
\liminf_{n\to\infty} \inf_{\beta\in\R^p}\Prob\left[\bigcap_{m} \left\{X\beta \in \wh C_m\right\}\right]\geq 1-\alp.
\end{align*}
Consequently, $\cset_{m^*}$ chosen by any criterion is asymptotically honest.
\end{theorem}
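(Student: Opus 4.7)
The plan is to reduce uniform honesty to two one-sided union bounds, one for the projection step and one for the shrinkage step. Define the global events
\begin{align*}
E_1 = \bigcap_{A\in\Hset}\left\{\frac{\|P_A \mu - \hmu_A\|^2}{n r_A^2} \leq \frac{1}{c_1}\right\},\qquad
E_2 = \bigcap_{A\in\Hset}\left\{\frac{\|P_A^\perp \mu - \hmu_\perp\|^2}{n r_\perp^2} \leq \frac{1}{c_2}\right\},
\end{align*}
where $r_A$ and $r_\perp$ are taken from \eqref{eq:ra_consistency} and \eqref{eq:rperp_consistency} for each $A = A_m$. On $E_1\cap E_2$, the identity $1/c_1 + 1/c_2 = 1$ forces $X\beta \in \wh C_m$ simultaneously for every $m$; in particular $X\beta \in \wh C_{m^*}$ for any data-driven $m^*$, which yields the final sentence of the theorem. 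It thus suffices to show $\sup_\beta \Prob(E_1^c)\leq \alp/2$ for all $n$ and $\limsup_n \sup_\beta \Prob(E_2^c)\leq \alp/2$; the union bound $\Prob(E_1^c \cup E_2^c) \leq \Prob(E_1^c) + \Prob(E_2^c)$ then delivers the conclusion.

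For the projection step, independence of each $A_m$ from $(X,y)$ gives $\|P_A\mu-\hmu_A\|^2 = \|P_A\veps\|^2$, distributed as $\sigma^2\chi^2_k$ conditional on $A$ (cf.\ \eqref{eq:hmu_A}), where $k=\rank(X_A)$. By the choice \eqref{eq:ra_consistency}, the event inside $E_1$ is equivalent to $\sqrt{k}(\chi^2_k/k - 1)\leq 2\sqrt{\log(4M/\alp)}$. Applying Lemma~\ref{lemma:consistent_chi} with $\delta = 2\sqrt{\log(4M/\alp)}$ bounds the failure probability for each $A$ by $2\exp(-\delta^2/4) = \alp/(2M)$, and summing over the $M$ sets in $\Hset$ yields $\sup_\beta \Prob(E_1^c)\leq \alp/2$ for every $n$, uniformly in $\beta$.

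For the shrinkage step, set $\delta = c_m(\alp/2) M^{1/4}$ in Lemma~\ref{thm:consistency}. The choice \eqref{eq:rperp_consistency} makes $E_2$ equal to the complement of the event on the left-hand side of \eqref{eq:consistency}, so $\Prob(E_2^c)$ is controlled by the three summands on the right-hand side. The paragraph immediately after Lemma~\ref{thm:consistency} already bounds the first summand by $2M\exp(-\delta^2/16) = 2M\exp(-c_m^2(\alp/2) M^{1/2}/16) \to 0$, via Lemma~\ref{lemma:consistent_chi} applied to $\|P_A^\perp\veps\|^2 \sim \sigma^2 \chi^2_{n-k}$ and a union bound; the second summand is $D\sum_{A\in\Hset}(n-k)^{-2} \leq DM/(n-k_{\max})^2 \to 0$ by the hypothesis $M/(n-k_{\max})^2 \to 0$; and the third summand equals $DM/\delta^4 = D/c_m^4(\alp/2)$, which is constant in $n$.

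The chief obstacle is that the third summand does not vanish with $n$. The remedy, already built into the setup, is to fix the constant $c_m(\alp/2)$ in \eqref{eq:rperp_consistency} large enough at the outset so that $D/c_m^4(\alp/2) \leq \alp/2 - \eta$ for some $\eta > 0$; any $c_m(\alp/2) \geq (4D/\alp)^{1/4}$ works, and depends only on $\alp$ and $D$, hence is legitimately independent of $n$ and $\beta$. Combining the three summand estimates gives $\limsup_n \sup_\beta \Prob(E_2^c) \leq \alp/2$, and together with the projection-step bound this completes the proof of $\liminf_n \inf_\beta \Prob\left(\bigcap_m \{X\beta \in \wh C_m\}\right) \geq 1-\alp$.
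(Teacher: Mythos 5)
Your argument is correct and follows essentially the same route as the paper's (which is given in-text in Section~\ref{sec:mulitplecandidates} rather than as a displayed proof): a union bound with Lemma~\ref{lemma:consistent_chi} for the projection radii, Lemma~\ref{thm:consistency} with $\delta = c_m(\alpha/2)M^{1/4}$ for the shrinkage radii, and the identity $1/c_1+1/c_2=1$ to combine the two events into simultaneous coverage. The only slight imprecision is the claim that the first summand $2M\exp(-c_m^2(\alpha/2)M^{1/2}/16)$ tends to zero, which holds when $M\to\infty$ but for bounded $M$ is a positive constant; your device of fixing $c_m(\alpha/2)$ large enough at the outset absorbs this term just as it does the third summand, so the conclusion stands.
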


\begin{remark}
The increment of $r_A^2$ in \eqref{eq:ra_consistency}, $2\sqrt{k \log (4M/\alpha)}/n$, reflects the cost for achieving uniform honesty over $\Hset$.
But this factor will not cause a slower rate for $r_A$ if $\log M=O_p(k)$, 
where the $k$ here is the size of the selected candidate set $A_{m^*}$.
Compared with \eqref{eq:rperp}, the factor $M^{1/4}/\sqrt{n-k}$ in \eqref{eq:rperp_consistency}, also the cost for uniform honesty, will in general lead to slower convergence of $r_\perp$.
However, this is a worthwhile price to protect our method from an improper candidate set $A$ that does not satisfy the assumptions in Theorem~\ref{thm:optimalA}.
For example, if the candidate set $A$ misses some strong signals, we may end up with $\hL \asymp_p 1$ and the radius of weak signals $r_{\perp}$ will not converge to $0$ at all.
Such bad choices of $A$ will be excluded if $\cset_{m^*}$ is chosen by minimizing its volume over $\Hset$.
In this sense, our method provides a data-driven selection of an optimal candidate set.
\end{remark}

To construct $\Hset$, we threshold the lasso $\hbeta$ in \eqref{eq:lassodef} calculated from $(X', y')$ to obtain
\begin{align}\label{eq:threshold}
A_m =\{j\in[p]: |\hbeta_j|>\tau_m\},
\end{align}
for a sequence of threshold values $\tau_m=a_m \lmd$, e.g. $a_m\in[0,4]$.
It is possible for two different $\tau_m$ to define the same $A$, which will be counted once in $\Hset$.
By setting $\tau_m = 0$ for some $m$, $A=\supp(\hbeta)$ will be included in $\Hset$, though it may not be selected as the optimal $\cset_{m^*}$.
In the proof of Corollary~\ref{cor:lassodi}, we have shown $\norm{\hbeta}_0 = O_p(\sqrt{n})$, and therefore both $M$ and $k_{\max}$ are $ O_p(\sqrt{n})$, which means
$M \ll (n - k_{\max})^2$ with high probability.
As a result, we can guarantee uniform honesty over all $\cset_m$.
Other choices of $\Hset$ are possible, such as stepwise variable selection with BIC.
It is possible that $A=\varnothing$ for a large value of $\tau_m$. In this special case, $r_A=0$, so the confidence set reduces to a ball, i.e.,
$\left\{\mu\in\R^n: {\|\mu-\hmu_\perp\|^2}\leq n r_{\perp}^2\right\}$.

\subsection{Algorithm and implementation}\label{sec:algorithmstein}
We implement our method with a sequence of candidate sets $A_m$ defined by \eqref{eq:threshold}.
Given the data set, $\sigma^2$, $\lambda$ in \eqref{eq:lassodef} and threshold values $\{a_m\lambda\}_{1 \leq m \leq M}$, this section describes some technique details in our algorithm to construct the confidence set \eqref{eq:ellipsoidmu} by the two-step Stein method.

{\em Data splitting.\/}
We split the original data set into $(X', y')$ and $(X, y)$.
Apply lasso on $(X', y')$ to get $\hbeta$ in \eqref{eq:lassodef} with the tuning parameter $\lambda$.
Threshold $\hbeta$ by $\tau_m = a_m\lambda$ for $m = 1, \ldots, M$ in \eqref{eq:threshold} to define candidate sets $A_m$.
Note that $A_m$, $m=1,\ldots, M$, are independent of $(X, y)$.

{\em Choice of $c_1$ and $c_2$.\/}
When $A \neq \varnothing$, we consider two criteria to choose the constants $c_1$ in \eqref{eq:ra} and $c_2$ in \eqref{eq:rperp}.
The first criterion is to minimize the log-volume of $\cset$, namely,
\begin{align*}
\log V(\cset) = k\log (r_{A}) + (n-k) \log (r_{\perp})
\end{align*}
up to an additive constant, which becomes a constrained optimization problem
\begin{align}\label{eq:logVolOptimization}
&\min_{c_1, c_2} \left\{ k\log (\sqrt{c_1}\tilde{r}_{A}) + (n-k) \log (\sqrt{c_2}\tilde{r}_{\perp}) \right\}, \\
&\text{ subject to } 1/ c_1 + 1 / c_2 = 1 \text{ and } 1 <c_1,c_2 \leq E, \nonumber
\end{align}
where $\tilde{r}_A$ and $\tilde{r}_\perp$ are defined in \eqref{eq:ra} and \eqref{eq:rperp} and $E>2$ is a pre-determined upper bound.
It is easy to obtain the solution
\begin{align} \label{eq:c1_c2_vol}
c_1 = \frac{E}{E - 1} \vee \left(\frac{n}{k} \wedge E\right), \qquad
c_2 = \frac{E}{E - 1} \vee \left(\frac{n}{n-k} \wedge E\right).
\end{align}
For all numerical results in this paper, we use $E=10$.
Without the constraint $c_1, c_2 \leq E$, the minimizer would be $(c_1, c_2) = (n/k, n/(n-k))$ so that under the conditions of Corollary~\ref{cor:lassodi},
$r_A = \sqrt{n / k} \tilde{r}_A \asymp_p 1$
and thus the diameter $|\cset|$ would not converge to $0$.
Therefore, a finite upper bound $E$ must be imposed.

The second criterion is to minimize the diameter $|\cset|$
\begin{align}\label{eq:volMinmax}
\min_{c_1, c_2} \max \{ r_A, r_\perp \}, \text{ subject to } 1/ c_1 + 1 / c_2 = 1,
\end{align}
which yields the solution
\begin{align} \label{eq:c1_c2_dia}
c_1 = (\tilde{r}_A^2 + \tilde{r}_\perp^2) / \tilde{r}_A^2, \qquad c_2 = (\tilde{r}_A^2 + \tilde{r}_\perp^2) / \tilde{r}_\perp^2.
\end{align}
As a result, we have
$r_A = r_\perp  = (\tilde{r}_A^2 + \tilde{r}_\perp^2)^{1/2}$ and the confidence set reduces to a ball.

{\em Computation of $c_s(\alpha)$.\/}
For any candidate set $A$, the radius $r_\perp$ \eqref{eq:rperp} depends on the constant $c_s(\alpha)$, which is essentially the quantile of the deviation between $\sigma^2\hL$ and the loss of the Stein estimator $\hmu_\perp$.
We use the following simulation procedure to estimate $c_s(\alpha)$:
First draw $\check{Y}_j \sim \dnorm_n(0, \sigma^2 \bfI_{n})$ for $j=1,2,\ldots,N$.
For each $j$, compute
\begin{align}
\check{\mu}_j=\left(1 - \frac{n\sigma^2}{\| \check{Y}_j\|^2}\right) \check{Y}_j \quad \text{and} \quad \check{L}_j =\left(1 - \frac{n\sigma^2}{\|\check{Y}_j\|^2}\right)_+. \label{eq:cAlphaSimulation}
\end{align}
Then the $(1 - \alpha)$ quantile of the empirical distribution of
\begin{align}
\frac{\sqrt{n}}{\sigma^2}\left| \sigma^2 \check{L}_j - n^{-1}{\|  \check{\mu}_j \|^2}\right|, \quad j=1,\ldots,N, \label{eq:cs_simulation}
\end{align}
is a consistent estimator of $c_s(\alpha)$ as long as $\|\mu_\perp\|=o(\sqn)$, which is the case under the assumptions of Corollary~\ref{cor:lassodi}.
Expression \eqref{eq:cs_simulation} can be written as a function of a $\chi^2_n$ random variable, which simplifies its simulation.

Clearly, the estimate of $c_s(\alpha)$ does not depend on $A$ and is used for any candidate set $A\in \Hset$ in our implementation.
Moreover, we find the multiple set adjustments on the radii, i.e., the factors of $(\log M)^{1/2}$ and $M^{1/4}$, are usually negligible given a reasonable sample size, say $n\geq 100$.
Therefore, we simply use the radii $r_A$ and $r_\perp$ in \eqref{eq:ra} and \eqref{eq:rperp} for each $A\in\Hset$.

Algorithm~\ref{alg:stein} summarizes the two-step Stein method with multiple candidate sets $A_m$.

\begin{algorithm}[ht]
\caption{Two-step Stein method\label{alg:stein}}
\begin{algorithmic}
\FOR{$m = 1, \ldots, M$}
    \STATE $A=A_m$
    \STATE compute $\hmu_{A}=P_A y$ and $\hmu_{\perp}$ by \eqref{steinestimate}
    \STATE compute $c_1$ and $c_2$ according to one of the two criteria
    \STATE compute $r_A$ and $r_\perp$ by \eqref{eq:ra} and \eqref{eq:rperp}
    \STATE construct $\wh C_m$ in the form of \eqref{eq:ellipsoidmu}
\ENDFOR
\STATE find $m^*$ by minimizing the volume or the diameter of $\cset_{m}$ over $m$
\end{algorithmic}
\end{algorithm}

\begin{remark}
In the calculation of $r_\perp$ and $c_s(\alpha)$, we use truncated SURE for $\hat{L}=(1-B)_+$ in \eqref{steinL} and similarly for $\check{L}_j$ in \eqref{eq:cAlphaSimulation}. Such a truncated rule has been used for the James-Stein estimator \citep{Efron73} and does not affect the asymptotic validity of our method.
\end{remark}

For all numerical results in this paper, we assume the noise variance $\sigma^2$ is known.
In real applications, one may use sample splitting to estimate $\hsigma=\hsigma(y',X')$ from $(X',y')$ and then plug $\sigma=\hsigma$ into the construction of confidence sets. As long as $\hsigma$ is consistent, all the asymptotic results in this work still hold.
For high-dimensional linear models, the scaled lasso 
provides a consistent $\hsigma$ \citep{sun12}.

\section{Competing methods}\label{sec:twosteplasso}

To illustrate the effectiveness of our two-step Stein method, we first present three alternative procedures that can be derived by extending ideas from construction of nonparametric regression confidence sets in conjunction with lasso estimation. Since all of them make use of oracle properties, we review an error bound for lasso prediction due to \cite{Bickel09}.

\subsection{Lasso prediction error}
Given $X$, $y$ and $\lambda>0$, consider the lasso estimator $\hbeta=\hbeta(y,X; \lambda)$ defined as in \eqref{eq:lassodef}.
Let $\omega(X) = \max_j (\|X_j\|^2/n)$.
Error bounds of lasso prediction have been established under the restricted eigenvalue assumption \citep{Bickel09}.
For $S\subseteq [p]$ and $c_0>0$, define the cone
\begin{equation}\label{eq:conedef}
\scrC(S,c_0) \defi \left\{\delta\in\R^p:
\sum_{j\in S^c} |\delta_{j}| \leq c_0 \sum_{j\in S}  | \delta_{j}|\right\}.
\end{equation}
We say the design matrix $X$ satisfies $\textup{RE}(s,c_0)$,
for $s \in [p]$ and $c_0>0$, if
\begin{equation}\label{eq:REdef}
\kappa(s,c_0;X) \defi \min_{|S| \leq s} \min_{\delta \ne 0}
\left\{\frac{\|X \delta\|}{\sqn \|\delta_{S} \|}: \delta \in \scrC(S,c_0) \right\} >0.
\end{equation}

\begin{lemma}[Theorem 7.2 in \cite{Bickel09}]\label{lemma:LassoLoss}
Let $n\geq 1$ and $p\geq 2$.
Suppose that $\|\beta\|_0 \leq s$ and $X$ satisfies
Assumption $\textup{RE}(s,3)$. Choose $\lambda=K \sigma \sqrt{\log(p)/n}$
for $K > 2\sqrt{2}$. Then we have
\begin{align}\label{eq:LassoPrediction}
\Prob\left\{\|X(\hbeta-\beta)\|^2
\leq\frac{16K^2\sigma^2 \omega(X)}{\kappa^2(s,3;X)}s\log p\right\}
\geq 1-p^{1-K^2/8}.
\end{align}
\end{lemma}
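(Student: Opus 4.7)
The plan is to give the standard oracle inequality argument for the lasso. I would start from the basic inequality obtained from the optimality of $\hbeta$: since $\hbeta$ minimizes the lasso objective,
\begin{align*}
\frac{1}{2n}\|y - X\hbeta\|^2 + \lambda\|\hbeta\|_1 \leq \frac{1}{2n}\|y - X\beta\|^2 + \lambda\|\beta\|_1.
\end{align*}
Writing $y = X\beta + \veps$ and letting $\delta = \hbeta - \beta$, this rearranges to
\begin{align*}
\frac{1}{2n}\|X\delta\|^2 \leq \frac{1}{n}\veps^\trans X\delta + \lambda\bigl(\|\beta\|_1 - \|\hbeta\|_1\bigr).
\end{align*}

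Next, I would introduce the noise control event $\calE = \{2\|X^\trans\veps/n\|_\infty \leq \lambda\}$. Each coordinate $X_j^\trans\veps/n$ is Gaussian with variance $\sigma^2\|X_j\|^2/n^2 \leq \sigma^2\omega(X)/n$, so a standard Gaussian tail bound plus a union bound over $j\in[p]$ with $\lambda = K\sigma\sqrt{\log p/n}$ and $K>2\sqrt{2}$ gives $\Prob(\calE) \geq 1 - p^{1-K^2/8}$. This handles the probability claim in the lemma; the rest is deterministic on $\calE$.

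On $\calE$, I would bound $\veps^\trans X\delta/n \leq (\lambda/2)\|\delta\|_1 = (\lambda/2)(\|\delta_S\|_1 + \|\delta_{S^c}\|_1)$ where $S = \supp(\beta)$. Combining with $\|\beta\|_1 - \|\hbeta\|_1 \leq \|\delta_S\|_1 - \|\delta_{S^c}\|_1$ (triangle inequality on $S$ and $S^c$, using $\beta_{S^c}=0$), one arrives at
\begin{align*}
\frac{1}{2n}\|X\delta\|^2 + \frac{\lambda}{2}\|\delta_{S^c}\|_1 \leq \frac{3\lambda}{2}\|\delta_S\|_1.
\end{align*}
In particular $\|\delta_{S^c}\|_1 \leq 3\|\delta_S\|_1$, so $\delta \in \scrC(S,3)$, which is exactly the cone condition required to invoke the RE assumption.

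Finally, I would apply $\textup{RE}(s,3)$, namely $\|X\delta\|^2/n \geq \kappa^2(s,3;X)\|\delta_S\|^2$, together with $\|\delta_S\|_1 \leq \sqrt{s}\|\delta_S\|$, to get
\begin{align*}
\frac{1}{2n}\|X\delta\|^2 \leq \frac{3\lambda}{2}\sqrt{s}\,\|\delta_S\| \leq \frac{3\lambda\sqrt{s}}{2\kappa(s,3;X)}\cdot\frac{\|X\delta\|}{\sqn},
\end{align*}
which rearranges to $\|X\delta\|^2 \leq 9\lambda^2 s n/\kappa^2(s,3;X)$. Plugging in $\lambda^2 = K^2\sigma^2\log p/n$ yields the stated bound up to the constant (the factor $16$ in the lemma absorbs $\omega(X)$, which enters if one prefers to write the Gaussian tail bound in terms of $\omega(X)\sigma^2/n$ rather than assuming column normalization; the same argument goes through). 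The main obstacle, such as it is, is bookkeeping the constants so that the factor $16K^2\omega(X)$ matches exactly; the probabilistic content is entirely in the single Gaussian maximal inequality that establishes $\Prob(\calE)$, and everything after that is an algebraic rearrangement combining the basic inequality, the cone condition, and RE.
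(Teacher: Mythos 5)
The paper does not actually prove this lemma: it is imported from Theorem~7.2 of \cite{Bickel09}, with the only modification being the insertion of the factor $\omega(X)$ to drop the column-normalization convention (see Remark~\ref{rm:sparityrange}). Your argument --- basic inequality, Gaussian maximal inequality for the event $\calE$, cone condition, then $\textup{RE}(s,3)$ --- is the standard proof of that cited theorem, and in the normalized case $\omega(X)=1$ it is correct; it even yields the sharper constant $9$ in place of $16$.

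The one substantive gap is exactly at the one point where the paper's statement departs from the original: the role of $\omega(X)$. You claim $\omega(X)$ is ``absorbed'' into the constant $16$, but that is not where it enters. With $\lambda=K\sigma\sqrt{\log p/n}$ fixed and $\|X_j\|^2/n\le\omega(X)$, each coordinate $X_j^\trans\veps/n$ has standard deviation up to $\sigma\sqrt{\omega(X)/n}$, so the union bound gives $\Prob(\calE^c)\lesssim p^{1-K^2/(8\omega(X))}$ rather than $p^{1-K^2/8}$: the quantity $\omega(X)$ lands in the \emph{exponent} of the probability, not in the multiplicative constant of the risk bound. If you instead enlarge the threshold in the noise event to $\sqrt{\omega(X)}\,\lambda/2$ so as to recover the stated probability, then the bound $\veps^\trans X\delta/n\le(\lambda/2)\|\delta\|_1$ fails on that event, the cone constant degrades from $3$ to $(\sqrt{\omega}+2)/(2-\sqrt{\omega})$ (and is not even finite once $\omega\ge4$), and $\textup{RE}(s,3)$ no longer suffices. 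The clean fix --- and the one the paper effectively uses downstream, e.g.\ the choice $\lambda=c_0\sigma\sqrt{c^*\log p/n}$ in Corollary~\ref{cor:lassodi} --- is to inflate the penalty itself to $\lambda=K\sigma\sqrt{\omega(X)\log p/n}$; your argument then goes through verbatim and produces the factor $16K^2\sigma^2\omega(X)$ with probability $1-p^{1-K^2/8}$. As written, your proof establishes the lemma only for $\omega(X)\le1$, which happens to hold for the paper's normalized designs but is not what the displayed statement asserts.
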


\begin{remark}\label{rm:sparityrange}
The original theorem in \cite{Bickel09} assumes that all the diagonal elements of the Gram matrix $X^\trans X/n$ are $1$ for simplicity, while we remove this assumption by including the term $\omega(X)$.
\end{remark}

\subsection{Another adaptive method}\label{sec:Robinmethod}

Here we develop another adaptive method following the procedure in Section~3 of \cite{Robins06}, which constructs a confidence set for $\mu$ from $y\sim\dnorm_n(\mu,\sigma^2\bfI_n)$ via sample splitting.
Applied to the linear model~\eqref{highdimensionRL},  the method can be described as follows.
Split the original data set into $(X', y')$ and $(X, y)$, of which the former is used to obtain an initial lasso estimate $\hbeta=\hbeta(y',X';\lambda)$ \eqref{eq:lassodef}, and the latter is used to compute two quantities
\begin{align}\label{eq:lossEst}
R_{n} = \frac{1}{n}\| y - X\hat{\beta} \|^2 - \sigma^2, \quad \quad \hat{\tau}^2_n = \frac{2\sigma^4}{n} + \frac{4\sigma^2}{n^2}\| X\beta - X\hbeta \|^2,
\end{align}
where $R_n$ is an estimate of the loss $\|X\beta - X\hbeta\|^2/n$.
Then, a confidence ball for $\mu=X\beta$ is constructed in the form of
\begin{align}\label{eq:confidencesetrobins}
\wh C_a=\left\{ \mu \in \R^n : \frac{R_{n} -  n^{-1}\| \mu - X\hbeta \| ^2}{\hat{\tau}_n} \geq - z_\alpha \right\},
\end{align}
where $z_\alpha$ is the $(1 - \alpha)$ quantile of the standard normal distribution.
Note that $\hat{\tau}_n$ in \eqref{eq:confidencesetrobins} contains the term $\| \mu - X\hbeta \|$ as well so an explicit form of the confidence ball is
\begin{align*}
 &\left\{  \mu \in \R^n : \frac{1}{n}\| \mu - X\hbeta \|^2  \leq  r^2_a = R_n + O\left( \sqrt{(R_n + 1) / n}\right) \right\},
\end{align*}
where $r_a$ is the radius. 

To establish the convergence rate of the diameter of $\wh C_a$, we need an assumption, similar to RE$(s, c_0)$, on the restricted maximum eigenvalue of $X^\trans X/n$ over the cone $\scrC(S,c_0)$ \eqref{eq:conedef}.
For $s\in[p]$ and $c_0>0$, let
\begin{align*}
\zeta(s, c_0; X) \defi \max_{|S| \leq s} \max_{\delta \ne 0}
\left\{\frac{\|X \delta\|}{\sqn \|\delta_{S} \|}: \delta \in \scrC(S,c_0) \right\}.
\end{align*}

\begin{theorem}\label{thm:CRobins}
The $(1-\alp)$ confidence set $\wh C_a$~\eqref{eq:confidencesetrobins} is honest for all $\beta\in\R^p$.
Suppose $s\log p=o(n)$, the sequence $X=X(n)$ satisfies
\begin{align*}
\liminf_{n \to \infty}\kappa(2s,3;X) = \kappa > 0, 
\quad \limsup_{n \to \infty} \zeta(s, 3; X) = \zeta < \infty,
\quad \limsup_{n \to \infty}\omega(X)= \omega < \infty,
\end{align*}
and so does the sequence $X'=X'(n)$.
Then with a proper choice of $\lmd \asymp\sqrt{\log p/n}$, for any $\beta\in\scrB(s)$ the diameter
\begin{align}\label{eq:adaptive_diameter}
|\wh C_a| = O_p\left(n^{-1/4}+\sqrt{s \log p/n}\right).
\end{align}
\end{theorem}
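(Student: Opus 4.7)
The plan is to handle the two conclusions of Theorem~\ref{thm:CRobins} separately. Honesty will be obtained by exhibiting an exact mean-zero pivot at the true $\mu = X\beta$ whose conditional variance is precisely $\hat\tau_n^2$; the diameter bound will then be extracted by inverting the defining inequality of $\wh C_a$ and feeding in the lasso prediction error from Lemma~\ref{lemma:LassoLoss}. The structural feature that makes both steps go through is that $\hbeta = \hbeta(y', X'; \lmd)$ is computed from the independent split $(X', y')$, so that under $\Prob_\beta$ the estimate $\hbeta$ is independent of $\veps$.

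For honesty, I first substitute $\mu = X\beta$ into $R_n - n^{-1}\|\mu - X\hbeta\|^2$ and use $y = X\beta + \veps$ to produce
\begin{align*}
R_n - n^{-1}\|X\beta - X\hbeta\|^2 \;=\; n^{-1}\bigl(\|\veps\|^2 - n\sigma^2\bigr) \;+\; 2n^{-1}\veps^\trans X(\beta - \hbeta) \;=:\; W_1 + W_2.
\end{align*}
Conditional on $\hbeta$, both summands have mean zero, and a direct computation gives $\Var(W_1) = 2\sigma^4/n$, $\Var(W_2 \mid \hbeta) = 4\sigma^2 L_n/n$ with $L_n := n^{-1}\|X(\beta - \hbeta)\|^2$, and $\operatorname{Cov}(W_1, W_2 \mid \hbeta) = 0$ (the cross term vanishes because $\E[\veps_i^3]=0$). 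These variances sum to exactly $\hat\tau_n^2$ evaluated at $X\beta$. To turn pointwise coverage into uniform coverage I would decompose $\veps$ into its projection onto $a := X(\beta - \hbeta)$ and the orthogonal complement; this rewrites $W_1 + W_2$, up to an $O_p(n^{-1})$ remainder, as a sum of two \emph{independent} pieces — a Gaussian contribution carrying $W_2$ and a centred $\chi^2_{n-1}$ contribution carrying the bulk of $W_1$. A Berry--Esseen bound applied to this sum delivers $(W_1+W_2)/\hat\tau_n \to \dnorm(0,1)$ at a rate $O(n^{-1/2})$ that is independent of $\beta$, giving $\liminf_n \inf_\beta \Prob_\beta\{(W_1+W_2)/\hat\tau_n \geq -z_\alpha\} \geq 1-\alpha$.

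For the diameter, writing $u := n^{-1}\|\mu - X\hbeta\|^2$ the defining inequality $(R_n - u)/\hat\tau_n \geq -z_\alpha$ becomes $u \leq R_n + z_\alpha\sqrt{2\sigma^4/n + 4\sigma^2 u/n}$. Using $\sqrt{a+b}\leq\sqrt{a}+\sqrt{b}$ and solving the resulting quadratic in $\sqrt{u}$ yields $u \leq R_n + c\, z_\alpha \sqrt{(R_n+1)/n}$ for an absolute constant $c$, and hence $r_a^2 = R_n + O\bigl(\sqrt{(R_n+1)/n}\bigr)$. The same decomposition above gives $R_n = L_n + W_1 + W_2$ with $|W_1|=O_p(n^{-1/2})$ and $|W_2|=O_p(\sqrt{L_n/n})$ from the variance formulas. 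Lemma~\ref{lemma:LassoLoss} applied to $(X', y')$, whose $\textup{RE}(s,3)$ constant and $\omega$-factor are bounded by assumption, gives $n^{-1}\|X'(\beta - \hbeta)\|^2 = O_p(s\log p/n)$ uniformly over $\beta \in \scrB(s)$; the standard lasso-cone argument then places $\hbeta - \beta$ inside $\scrC(\supp(\beta),3)$ on that event, and combining the restricted eigenvalue bound on $X'$ with the restricted maximum eigenvalue bound on $X$ transfers the prediction error to $L_n = O_p(s\log p/n)$. Plugging in yields $r_a^2 = O_p(s\log p/n) + O_p(n^{-1/2})$, so $|\wh C_a| = 2r_a = O_p(n^{-1/4} + \sqrt{s\log p/n})$.

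The main obstacle is the uniform-in-$\beta$ step of the honesty argument. Because $L_n$ sweeps all of $[0,\infty)$ as $\beta$ varies over $\R^p$, the mixing weight $4\sigma^2 L_n/n$ inside $\hat\tau_n^2$ can be arbitrarily small or arbitrarily large, and a naive CLT applied to $W_1+W_2$ offers no uniform rate. The orthogonal decomposition of $\veps$ relative to $a = X(\beta - \hbeta)$ is what saves the day: it recasts the pivot as a sum of two independent random variables whose normalised distance from $\dnorm(0,1)$ admits a Berry--Esseen bound depending only on $n$ and the moments of $\veps_i$, not on $\beta$. A secondary, more routine obstacle is the transfer of the lasso prediction error from the design $X'$ used for fitting to the design $X$ used in $R_n$, which rests on the cone membership of $\hbeta - \beta$ and the joint spectral control on both designs assumed in the theorem.
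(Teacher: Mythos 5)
Your proposal is correct, and on the diameter bound it follows essentially the same route as the paper: invert the defining inequality to get $r_a^2 = R_n + O\bigl(\sqrt{(R_n+1)/n}\bigr)$, bound $\|\hbeta-\beta\|^2 = O_p(s\log p/n)$ via the lasso theory on $(X',y')$ together with the cone membership $\hbeta-\beta\in\scrC(A_0,3)$, and transfer to $L_n = n^{-1}\|X(\hbeta-\beta)\|^2 \leq \zeta^2\|\hbeta-\beta\|^2$ using the restricted maximum eigenvalue of $X$. (The paper gets the $\ell_2$ bound by citing Corollary~B.2 and Theorem~7.2 of Bickel et al.\ directly --- this is precisely where the $\kappa(2s,3;X')$ hypothesis is consumed; your phrase ``the standard lasso-cone argument'' glosses over that step but the needed assumption is in place.) Where you genuinely diverge is on honesty: the paper disposes of it in one line by citing Theorem~3.1 and Proposition~2.1 of Robins and van der Vaart, whereas you reconstruct that result --- the exact decomposition $R_n - L_n = W_1 + W_2$, the variance calculation matching $\hat\tau_n^2$, and the splitting of $\veps$ along $a = X(\beta-\hbeta)$ into an independent Gaussian piece plus a centred $\chi^2_{n-1}$ piece to get a Berry--Esseen rate uniform in $\beta$. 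Your route buys a self-contained proof and makes transparent \emph{why} the coverage is uniform over all of $\R^p$ (the only delicate point being that when $L_n$ is large the Gaussian summand dominates, so the Kolmogorov distance to normality must be controlled via the $\chi^2$ part of the convolution rather than a naive Berry--Esseen sum over all summands --- worth spelling out if you write this up); the paper's route buys brevity at the cost of opacity. Both are valid.
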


These properties have been informally discussed in the introduction (Section~\ref{sec:intro}).
Although $\wh C_a$ is also honest over the entire parameter space, the upper bound on its diameter critically depends on the sparsity of $\beta$.
The scaling $s\log p=o(n)$ is the minimum requirement for the lasso to be consistent in estimating $\mu$ or $\beta$.
In general, this scaling is also needed for the RE assumption to hold with $\liminf_{n }\kappa(2s,3;X)  > 0$ \citep{negahban12} and for the upper bound on $|\wh C_a|$ to be informative.
This is different from the universal bound~\eqref{eq:univbound} on $\E|\wh C|^2$ for the two-step method.
The diameter $|\wh C_a|$ adapts to the optimal rate for sufficiently sparse $\beta$ as $s\log p=O(\sqn)$;
see Remark~\ref{rmk:diameterStein} for related discussion. Our numerical results in Section~\ref{subsec:dense} demonstrate that $|\wh C_a|$ can be 10 times larger than the diameter of our two-step Stein method when $\beta$ is not sparse.

\subsection{An oracle lasso method}

We calculate the lasso $\hbeta=\hbeta(y, X; \lambda)$ from the whole data set without sample splitting, which we denote by $(X,y)$ in this subsection.

Assuming the true sparsity $s_{\beta}=\|\beta\|_0$ is known (the oracle), a $(1-\alpha)$ confidence ball for $X\beta$ is constructed as
\begin{align*}
\left\{ \mu \in \R^n : \frac{1}{n}\| \mu - X\hbeta \|^2 \leq c_o(\alpha) \sigma^2\frac{s_{\beta} \log p}{n} \mathop{:=} r^2_o \right\},
\end{align*}
where $c_o(\alpha)$ is a constant depending on the design matrix $X$ and the tuning parameter $\lambda$.
We estimate $c_o(\alpha)$ by a similar procedure to be described in Section~\ref{sec:lasso_algorithm} for a two-step lasso method. 
Although there are sharper upper bounds, e.g. $O(s_{\beta} \log (p/s_{\beta})/n)$, for lasso prediction error (e.g. Chapter~11 in \cite{tibshirani15}), our choice of $\lambda$ is tuned to achieve the desired coverage rate in our numerical results and thus the corresponding $r_o$ is already optimized in this sense.

It should be pointed out that the oracle lasso is {\em not} implementable in practice since the true sparsity $s_{\beta}$ is unknown.
In theory, it can build a confidence set with a diameter on the order of $(s_{\beta}\log p/n)^{1/2}$, potentially faster than the rate $n^{-1/4}$, however, the constant $c_o(\alp)$ can be large and difficult to approximate. Indeed, in comparison with the oracle lasso, our method often constructs confidence sets with a smaller volume even under highly sparse settings, which highlights the practical usefulness of our two-step method.

\subsection{A two-step lasso method}\label{sec:lasso_algorithm}

To appreciate the advantage of using Stein estimates in the shrinkage step of our construction, we compare our method with a two-step lasso method, in which we replace the Stein estimate by the lasso to build a confidence set for $\mu_\perp$, the mean for weak signals.
Consider the two-step method in Section~\ref{sec:singlecandidate} with a given candidate set $A$. Let $k = \rank(X_A)$ and further assume $A$ contains strong signals only, that is, $A \subseteq \supp(\beta)$.
We use the same method to find $\hmu_A$ and $r_A$ \eqref{eq:ra} in the projection step.
Like the oracle lasso, we assume the {true} sparsity $s_{\beta}=\|\beta\|_0$ is given and construct a confidence set for $\mu_\perp$ based on the error bound for lasso prediction.

Apply lasso on $(P_A^\perp X, y_\perp)=(P_A^\perp X, P_A^\perp y)$ with a tuning parameter
\begin{align} \label{eq:lambda2}
\lambda_2 = K\sigma \sqrt{\log (p - k) / n}, \quad  \quad K > 2\sqrt{2},
\end{align}
to find the estimate
\begin{align}\label{weaksignalLasso}
\tilde{\beta} = \tilde{\beta}(\lambda_2)=\argmin_{\beta\in\R^p} \left[\frac{1}{2n} \| y_\perp - P_A^\perp X \beta\|^2 +\lmd_2 \|\beta\|_1\right].
\end{align}
It is natural to estimate the center $\mu_\perp=P_A^\perp\mu$ by the lasso prediction $\hmu_\perp = P_A^\perp X\tilde{\beta}$.
As a corollary of Lemma~\ref{lemma:LassoLoss}, we find an error bound for $\|\hmu_\perp-\mu_\perp\|^2$:

\begin{corollary}\label{thm:hoestCIlasso}
Let $n\geq 1$ and $p\geq 2$. Suppose that $\|\beta\|_0\leq s$ and Assumption $\textup{RE}(s,3)$ holds for $X$. Choose $\lambda_2$ as in \eqref{eq:lambda2}. Then for any fixed $A \subseteq \supp(\beta)$ with $k=\rank(X_A)<s$, we have
\begin{align}\label{eq:Lassoperp}
\Prob\left\{\|P_A^\perp X(\tdbeta-\beta)\|^2
\leq\frac{16K^2\sigma^2 \omega(X)}{\kappa^2(s,3;X)}(s-k)\log (p-k)\right\}
\geq 1-(p-k)^{1-K^2/8}.
\end{align}
\end{corollary}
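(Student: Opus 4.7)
The plan is to reduce the problem to a standard lasso on fewer variables, transfer the RE condition from $X$ to the reduced design, and then invoke Lemma~\ref{lemma:LassoLoss}. First, $P_A^\perp X_A = 0$, so the $|A|$ columns of $P_A^\perp X$ indexed by $A$ are identically zero; they do not affect the data-fit term in \eqref{weaksignalLasso} but do contribute to the $\ell_1$ penalty, so any minimizer $\tilde\beta$ satisfies $\tilde\beta_A=0$. Setting $\tilde X \defi P_A^\perp X_{A^c}$, the lasso then reduces to estimating $\gamma\in\R^{p-|A|}$ from $y_\perp = \tilde X \gamma^* + P_A^\perp\veps$, where $\gamma^* = \beta_{A^c}$ has sparsity $\|\gamma^*\|_0 \leq |\supp(\beta)|-|A|\leq s-k$ and $\|P_A^\perp X(\tilde\beta-\beta)\|^2 = \|\tilde X(\tilde\beta_{A^c} - \gamma^*)\|^2$.

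The key technical step is to transfer the RE condition: I claim $\kappa(s-k,3;\tilde X)\geq \kappa(s,3;X)$. After replacing $A$ by a maximal linearly independent subset of its columns (which preserves $P_A$), assume $|A|=k$. Fix $\delta\in\R^{p-|A|}$ with $\delta\in\scrC(S,3)$ and $|S|\leq s-k$. Because $P_A X_{A^c}\delta \in \spn(X_A)$, there exists $\tilde\delta_A\in\R^k$ with $X_A\tilde\delta_A = -P_A X_{A^c}\delta$; let $\tilde\delta = (\tilde\delta_A,\delta)\in\R^p$. Then $X\tilde\delta = X_{A^c}\delta - P_A X_{A^c}\delta = \tilde X \delta$, $\supp(\tilde\delta)\subseteq A\cup \supp(\delta)$, and $\|\tilde\delta_{(A\cup S)^c}\|_1 = \|\delta_{S^c}\|_1 \leq 3\|\delta_S\|_1 \leq 3\|\tilde\delta_{A\cup S}\|_1$, so $\tilde\delta\in\scrC(A\cup S,3)$ with $|A\cup S|\leq s$. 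Applying $\textup{RE}(s,3)$ on $X$ then gives $\|\tilde X\delta\|/\sqn = \|X\tilde\delta\|/\sqn \geq \kappa(s,3;X)\|\tilde\delta_{A\cup S}\|\geq \kappa(s,3;X)\|\delta_S\|$, which proves the claim.

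With the reduction and the RE transfer in hand, I would apply Lemma~\ref{lemma:LassoLoss} (or rerun its proof) to the reduced lasso problem on $\tilde X$, which has $\leq p-k$ effective columns and sparsity $\leq s-k$. The degenerate noise $P_A^\perp\veps$ enters the analysis only through the inner products $X_j^\trans P_A^\perp\veps$ for $j\in A^c$, each Gaussian with variance $\sigma^2\|P_A^\perp X_j\|^2 \leq \sigma^2 n\omega(X)$, so the tail behavior is identical to that in the proof of Lemma~\ref{lemma:LassoLoss} with $p$ replaced by $p-k$. A Gaussian union bound then puts the good event $\{2\|\tilde X^\trans P_A^\perp\veps\|_\infty/n\leq \lambda_2\}$ on probability at least $1-(p-k)^{1-K^2/8}$, and on this event the standard basic-inequality plus cone-RE argument yields $\|\tilde X(\tilde\beta_{A^c}-\gamma^*)\|^2 \leq 16K^2\sigma^2\omega(X)(s-k)\log(p-k)/\kappa^2(s,3;X)$, which together with the identity $\|\tilde X(\tilde\beta_{A^c}-\gamma^*)\|^2 = \|P_A^\perp X(\tilde\beta-\beta)\|^2$ is exactly \eqref{eq:Lassoperp}.

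The main obstacle is the RE transfer: it is not immediate, because the inequality $\|P_A^\perp X\delta\|\leq\|X\delta\|$ goes the wrong way for a lower bound, so $\textup{RE}(s,3)$ on $X$ does not automatically descend to $\tilde X$. The augmentation $\tilde\delta=(\tilde\delta_A,\delta)$ that absorbs the projected piece $P_A X_{A^c}\delta$ into the $A$-coordinates sidesteps the issue, and the one book-keeping detail is verifying that $\tilde\delta$ remains in the cone with $|A\cup S|\leq s$, which is what forces the preliminary reduction to $|A|=k$.
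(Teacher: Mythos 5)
Your proposal is correct and follows essentially the same route as the paper's proof: the reduction to a lasso on the $A^c$ columns with $\tilde\beta_A=0$, the key RE transfer via augmenting $\delta$ with an $A$-block that absorbs $P_A X_{A^c}\delta$ into $\spn(X_A)$, and the final appeal to Lemma~\ref{lemma:LassoLoss} are all the same steps the paper takes. The only (cosmetic) difference is that the paper factors $P_A^\perp = VV^\trans$ to work with a genuine $(n-k)$-sample lasso with nondegenerate noise $V^\trans\veps$, whereas you keep the $n$-row degenerate-noise formulation and check directly that the Gaussian tail bounds are unaffected; the two are equivalent.
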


\noindent
Accordingly, the radius for weak signals is chosen as
\begin{align}
r_\perp^2 = c_2 \tilde{r}_\perp^2 = c_2 c_l(\alpha/2)\sigma^2\frac{(s_{\beta} - k) \log (p - k) }{n}, \label{eq:lassorperp}
\end{align}
where $c_l(\alpha/2) = c_l(\alpha/2; P_A^\perp X)$ is a constant.
Lastly, we combine $(\hmu_\perp, r_\perp)$ with $(\hmu_A, r_A)$ as in \eqref{eq:ellipsoidmu} to define the confidence set $\wh C$.

Again we use sample splitting to define the candidate set ${A}$ by thresholding the lasso estimate 
$\hbeta(y', X'; \lambda)$ in \eqref{eq:lassodef} with a threshold value $\tau = \Omega_p( \|\hbeta - \beta\|_\infty)$ so that $\Prob \left({A} \subseteq \supp(\beta)\right) \to 1$, satisfying the assumption in Corollary~\ref{thm:hoestCIlasso}.
Upper bounds on $\|\hbeta - \beta\|_\infty$ are available under certain conditions; see, for example, Theorem~11.3 in \cite{tibshirani15}.

\begin{remark}\label{rm:lassorates}
Suppose $\beta$ is sufficiently sparse so that $s_{\beta}\log p\ll \sqn$.
Then, it follows that both $r_A$ and $r_\perp$ of the two-step lasso converge faster than the rate of $n^{-1/4}$.
This is not surprising and shows the advantage of the oracle knowledge of the true sparsity $s_{\beta}$. Of course, in practice we do not know $s_{\beta}$ and therefore, this two-step lasso method, like the oracle lasso, is not implementable for real problems.
The numerical comparisons in the next section will show that our two-step Stein method, which does not use the true sparsity in its construction, is more appealing than the two-step lasso:
Its adaptation to the underlying sparsity is comparable to the two-step lasso, while its coverage turns out to be much more robust.
\end{remark}


We follow the same procedure as the two-step Stein method to implement the two-step lasso method with multiple candidate sets $A_m, m=1,\ldots,M$ --- threshold $\hbeta(y', X'; \lambda)$ with a sequence of threshold values to construct $A_m$ \eqref{eq:threshold} and then choose the confidence set with the minimum volume or diameter.
The main difference is how to approximate $c_l(\alpha)$ in \eqref{eq:lassorperp}, which is done by the following approach.

We first use
$b = \max_{i \in [p]} (X_i'^\trans y')/\|X_i'\|^2$
as a rough upper bound for $\|\beta\|_{\infty}$.
For $j = 1,2,\ldots, N$, we draw an $s_{\beta}$-sparse vector, $\gamma_j \in \R^p$, of which the nonzero components follow $\mathcal{U}(-b, b)$.
Then we sample $Y_j^* \sim \dnorm_{n}(X\gamma_j, \sigma^2 \bfI_{n})$ and calculate lasso estimate $\hat{\gamma}_j(\lambda)=\hbeta(Y_j^*,X;\lambda)$ as in \eqref{eq:lassodef} with the tuning parameter $\lambda$ for all $j$.
Let $c_j = {\| X (\hat{\gamma}_j(\lambda) - \gamma_j) \|^2}/{(\sigma^2 s_{\beta} \log p)}$.
For a large $N$, $c_l(\alpha)$ can be approximated by the $(1 - \alpha)$ quantile of $\{ c_j\}$.
Here, $\lambda = \nu \cdot K\sigma^2\sqrt{\log p / n}$, where $\nu \leq 1$ is a pre-determined constant.
This choice is slightly smaller than the theoretical value in Lemma~\ref{lemma:LassoLoss},
but gives a stable estimate of $c_l(\alpha)$ with the desired coverage. 
As we calculate $b$ with $(X',y')$ in the above, our estimate
of $c_l(\alpha)$ is independent of the response $y$.
It is possible that a candidate set $A_m$ defined by \eqref{eq:threshold} may contain $s$ or more predictors.
In this case, we will only include the largest $s-1$ predictors in terms of their absolute lasso coefficients, as Corollary~\ref{thm:hoestCIlasso} requires $|A_m|<s$.

\section{Numerical results}\label{sec:simulation}

We will first compare our method with the above competing methods when $\beta$ is sparse relative to the sample size, i.e., $s/n$ is small, and then consider the more challenging settings in which the sparsity $s$ is comparable to $n$.

\subsection{Simulation setup}
The rows of $X$ and $X'$, both of size $n \times p$, are independently drawn from $\dnorm_p(0, \Sigma)$ and the columns are normalized to have an identical $\ell_2$ norm. 
We use three designs for $\Sigma$ as in \cite{Dezeure15}: 
\begin{align*}
\text{Toeplitz:}& \qquad \Sigma_{i,j} = 0.5^{|i-j|}, \\
\text{Exp.decay:}& \qquad (\Sigma^{-1})_{i,j} = 0.4^{|i-j|}, \\
\text{Equi.corr:}& \qquad \Sigma_{i,j} = 0.8 \text{ for all } i\neq j, \Sigma_{i,i} = 1 \text{ for all } i.
\end{align*}
The support of $\beta$ is randomly chosen and its $s$ nonzero components are generated in two ways:
\begin{enumerate}
  \item They are drawn independently from a uniform distribution $\mathcal{U}(-b, b)$. 
  \item Half of the nonzero components follow $\mathcal{U}(-b, b)$ while the other half following $\mathcal{U}(-0.2, 0.2)$, so there are two signal strengths under this setting.
\end{enumerate}
Lastly, $y$ and $y'$ are drawn from $\dnorm_n(X\beta, \sigma^2 \bfI_n)$ and $\dnorm_n(X'\beta, \sigma^2 \bfI_n)$, respectively.
In our results, we chose $n = n' = 200$, $p=800$, $\sigma^2=1$ and $s=10$, and $b$ took 10 values evenly spaced between $(0, 1)$ and $(1, 5)$.
In total, we had $60$ simulation settings, each including one design for $\Sigma$, one way of generating $\beta$, and one value for $b$. 
Under each setting, $100$ data sets were generated independently, so that the total number of data sets  used in this simulation study was 6,000.

The confidence level $1- \alpha$ was set to $0.95$. 
The threshold values $\{a_m\}$ in \eqref{eq:threshold} were evenly spaced from $0$ to $4$ with a step of $0.05$.
All the competing methods use lasso in some of the steps, and the tuning parameter $\lambda$ was chosen by three approaches:
1) the minimum theoretical value in \cite{Bickel09}, $ \lambda_{val} = 2\sqrt{2}\sigma \sqrt{\log p / n}$, 
2) cross validation $\lambda_{cv}$, and 3) one standard error rule $\lambda_{1se}$.
For the one standard error rule, we choose the largest $\lambda$ whose test error in cross validation is within one standard error of the error for $\lambda_{cv}$.
Since it is time-consuming to approximate $c_o(\alpha) = c_o(\alpha; X, \lambda)$ for the oracle lasso when $\lambda$ is chosen by a data-dependent way, we set $c_o(\alpha; X, \lambda_{cv})=\eta_1c_o(\alpha; X, \lambda_{val})$ and $c_o(\alpha; X, \lambda_{1se})=\eta_2c_o(\alpha; X, \lambda_{val})$, where the factors $\eta_k$ were chosen such that the overall coverage rate across data sets simulated with $b > 0.3$ was around the desired level.

Unlike the adaptive method in Section~\ref{sec:Robinmethod} and our two-step methods, the oracle lasso method does not require sample splitting. 
Consequently, a confidence set is constructed based on the whole data set including both $(X, Y)$ and $(X', Y')$ for a fair comparison.
We compare the geometric average radius $\bar{r} = (r_A^{|A|} r_\perp^{n - |A|})^{1/n}$ of our two-step methods with $r_a$ of the adaptive method and $r_o$ of the oracle lasso.
This is equivalent to comparing the volumes of the confidence sets.

\subsection{Results on the two-step Stein method}\label{sec:simulation_stein}

In this subsection we compare the two-step Stein method with the adaptive method and the oracle lasso. 
The constants $c_1$ and $c_2$ of our method were chosen by minimizing the volume in \eqref{eq:logVolOptimization} with  upper bound $E=10$.

Figure~\ref{fig:radius_stein_1} compares the geometric average radius $\bar{r}$ among the three methods against the signal strength $b$ under the first way of drawing $\beta$. Every point in a panel was computed by averaging $\bar{r}$ from $100$ data sets under a particular simulation setting.
It is seen from the figure that $\bar{r}$ by our method was dramatically smaller than the other two methods for almost every setting.
This suggests that the volumes of our confidence sets were orders of magnitude smaller than the other two methods, as the ratio of the radii will be raised to the power of $n=200$ for comparing volumes.
When $X$ was drawn from the equal correlation (Equi.corr) design, $\bar{r}$ of the oracle lasso and the adaptive methods kept increasing as $b$ increased, while $\bar{r}$ by our method became stable after $b>2$.
Overall, the equal correlation design was more challenging than the other two designs, for which our method outperformed the other two methods with the largest margin.
Unlike the other two methods, our method was less sensitive to the choices of $\lambda$ and the designs of $X$.
Essentially, $r_A$ and $r_\perp$ by our method are determined by the candidate set $A$. 
Even if a different $\lambda$ is used, our method can choose adaptively an optimal $A$ close to $\supp(\beta)$, showing the advantage of using multiple candidate sets.

\begin{figure}[t]
\begin{center}
\includegraphics[width=0.8\textwidth]{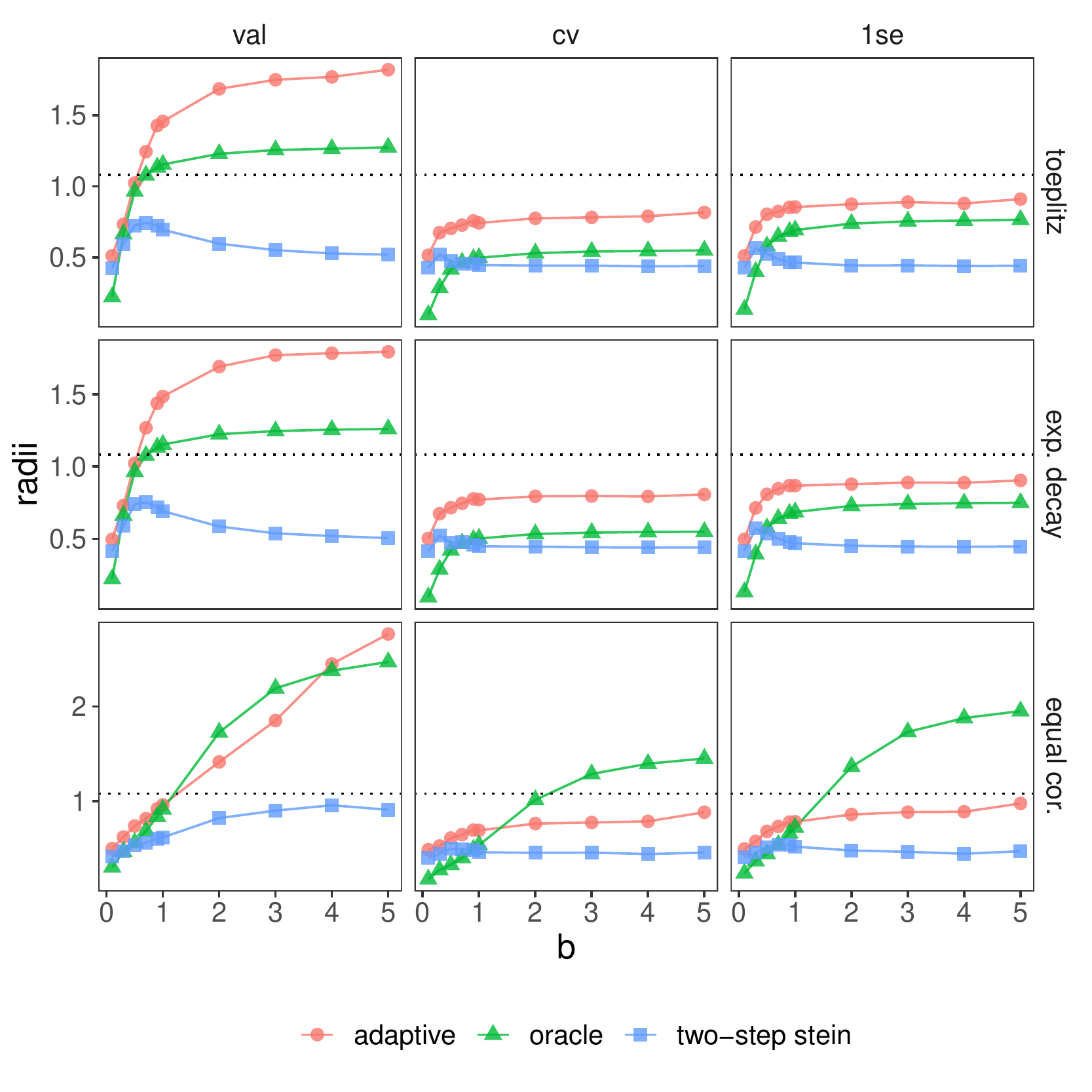}
\caption{Geometric average radius against $b$ under the first way of generating $\beta$. Each panel reports the results for one type of design (row) and one way of choosing $\lambda$ (column), where the dashed line indicates the naive $\chi^2$ radius.}
\label{fig:radius_stein_1}
\end{center}
\end{figure}

In a similar way, Figure~\ref{fig:radius_stein_2} plots $\bar{r}$ against $b$ in the second scenario of drawing $\beta$.
When $b$ is large (e.g, $b \geq 1$), the $\beta$ contains a mixture of weak and strong signals.
Again, we see that $\bar{r}$ of our method was smaller than the other two competitors for most settings. 
The average radius by our method often decreased as $b > 1$, which shows that our method can properly distinguish strong signals and weak signals.

\begin{figure}[htp]
\begin{center}
\includegraphics[width=0.8\textwidth]{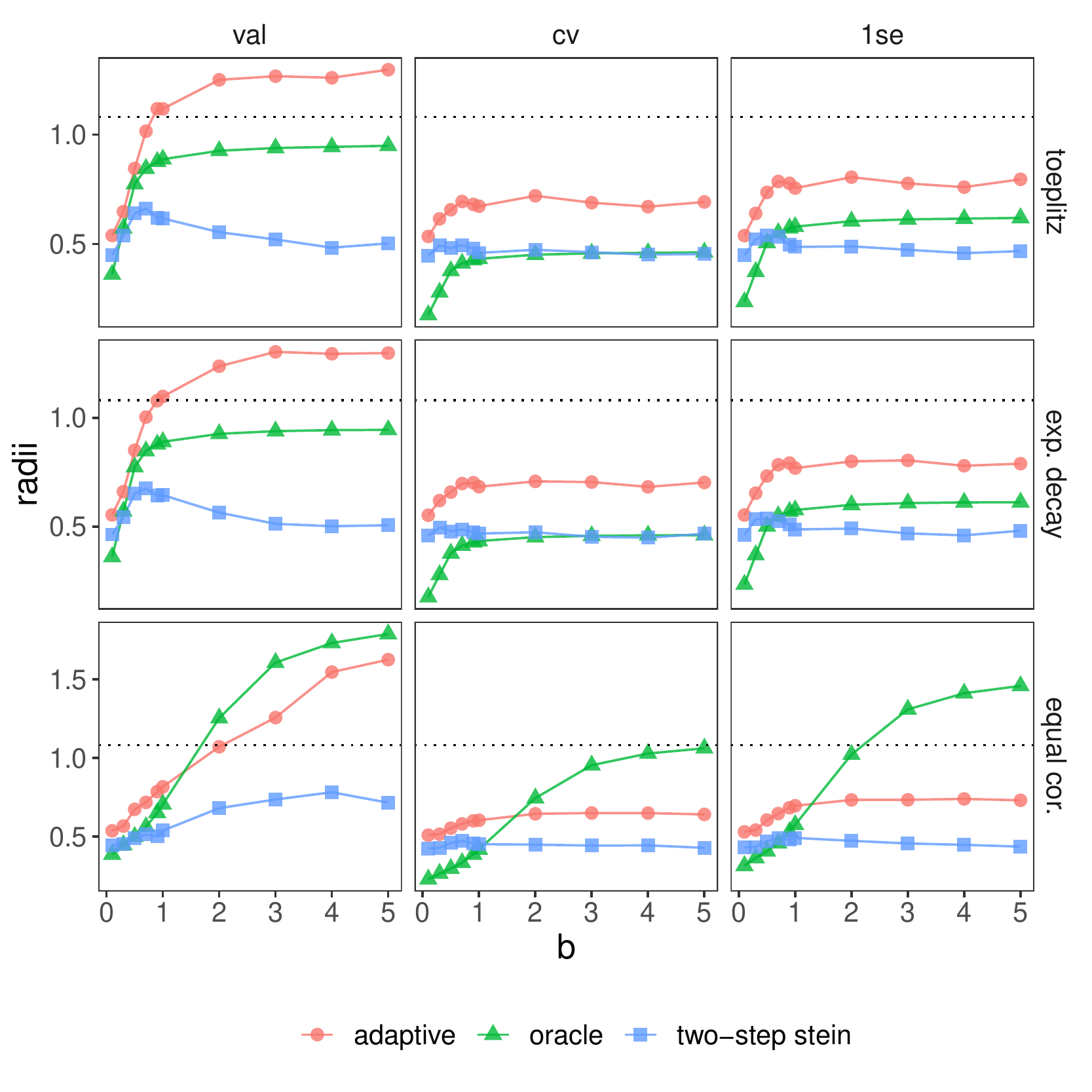}
\caption{Average radius $\bar{r}$ against $b$ in the second scenario of generating $\beta$.}
\label{fig:radius_stein_2}
\end{center}
\end{figure}

The coverage rates, each computed from $100$ data sets, for each of the three ways of choosing $\lambda$ are summarized in Figure~\ref{fig:coverage_stein}. We pooled the results from three types of design matrices together in the figure, because the coverage rates distributed similarly across them.
The coverage rates of our method matched the desired $95\%$ confidence level very well,
with coverage rate $>0.9$ for $96\%$ of the cases.
This result is particularly satisfactory for a quite small sample size of $n=200$. The adaptive method also showed
a good coverage, but slightly more conservative than the desired level. The oracle lasso had the most variable coverage rate across different settings when $\lambda$ was selected in a data-dependent way ($\lambda_{cv}$ or $\lambda_{1se}$). In fact, its coverage could drop below $0.5$ for these two cases (not shown in the figure). This shows the difficulty in practice to construct stable confidence sets using error bounds like \eqref{eq:LassoPrediction} even with a known sparsity.
Together with the results in Figures~\ref{fig:radius_stein_1} and \ref{fig:radius_stein_2}, this comparison demonstrates the advantage of the proposed two-step Stein method: 
It builds much smaller confidence sets, while closely matching the desired confidence level.
In particular, our confidence sets were uniformly smaller than those by the adaptive method (Section~\ref{sec:Robinmethod}) for all simulation settings and all choices of $\lambda$.

\begin{figure}[t]
\begin{center}
\includegraphics[width=0.6\textwidth]{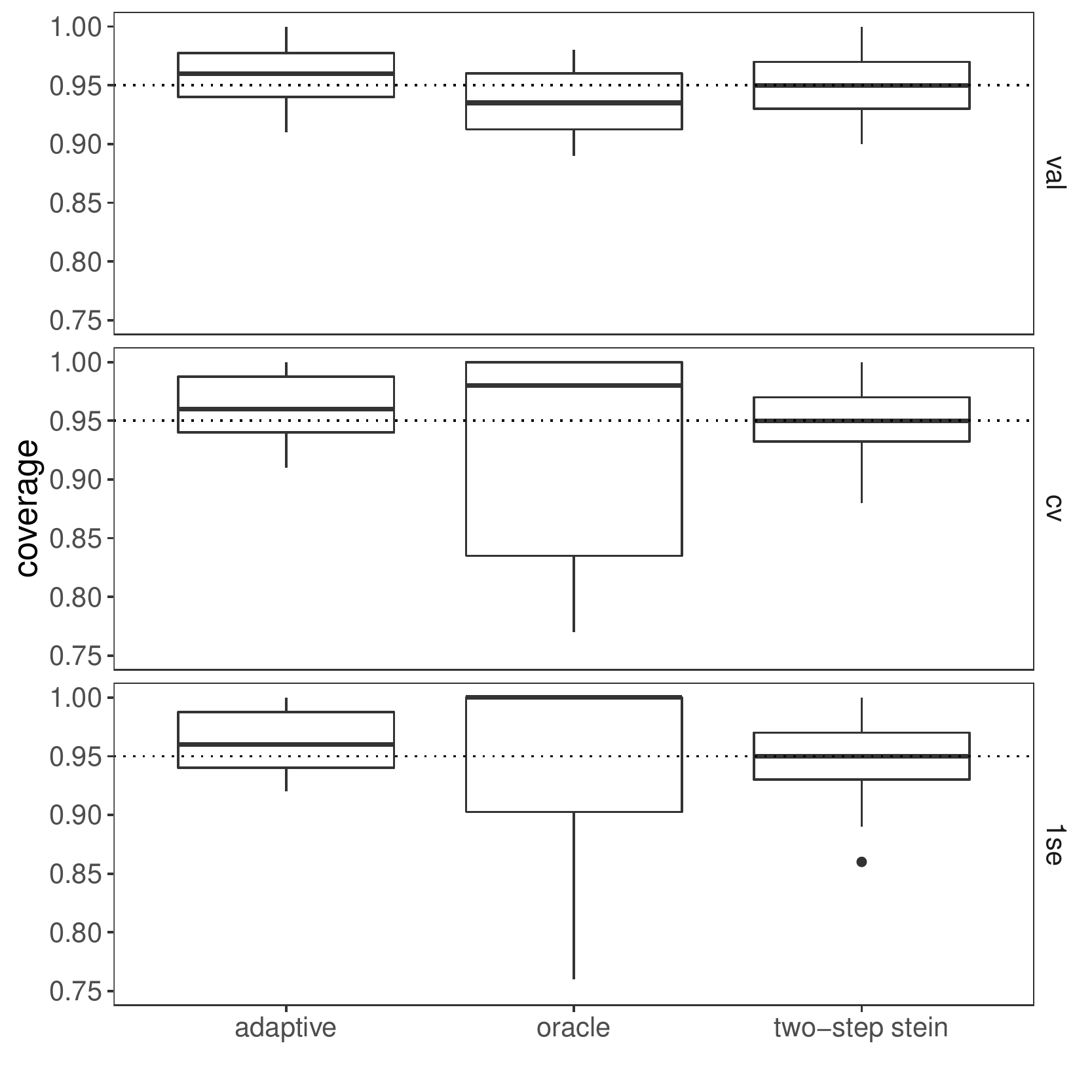}
\caption{Box plots of coverage rates for each choice of $\lambda$, pooling data from three designs.
The dashed lines indicate the desired confidence level of $95\%$.}
\label{fig:coverage_stein}
\end{center}
\end{figure}

\subsection{Comparison with the two-step lasso method}

We discussed in Section~\ref{sec:algorithmstein} two ways to choose $c_1$ and $c_2$, that is, by minimizing the volume or by minimizing the diameter of the confidence set for our proposed two-step framework.
Here we compare the two-step Stein method and the two-step lasso, each with the two ways to choose the constants. 
The two-step Stein method by minimizing the volume (abbreviated as TSV)  is the same method used in the previous comparison. 
Similarly, we use the short-hand TSD, TLV, and TLD for the two-step Stein method by minimizing diameter, the two-step lasso method by minimizing volume and by minimizing diameter, respectively.
The true sparsity $s=10$ was given to the two-step lasso methods.
Only the first scenario of generating $\beta$ was considered in this comparison, since most results in the second scenario were similar.
Figure~\ref{fig:radius_lasso} shows the plots of radius against $b$ by the four methods under different settings, while Figure~\ref{fig:coverage_lasso2} reports the distribution of the coverage rates.
The two-step lasso methods apply the lasso twice, one to generate candidate sets $A_m$ and the other to compute $\hmu_\perp$ and $r_\perp$ for weak signals. 
To clarify, the three ways of choosing $\lambda$ in these figures refer to the step to generate candidate sets $A_m$, while $\lambda_2$ in \eqref{weaksignalLasso} was set to $\nu K \sigma^2 \sqrt{\log (p - |A|) / (n - |A|)}$, where $\nu  = 0.5$ in our simulation.

\begin{figure}[t]
\begin{center}
\includegraphics[width=0.8\textwidth]{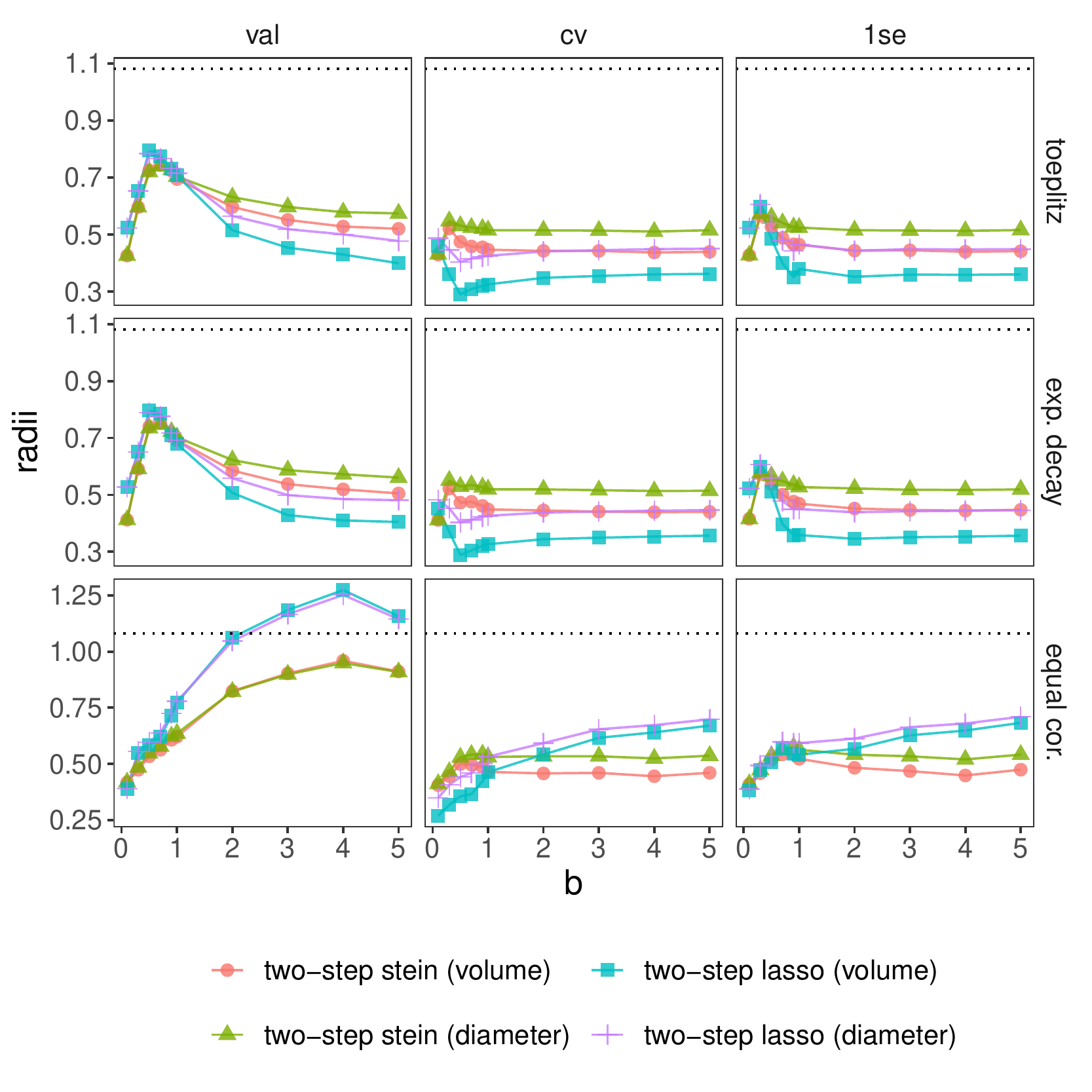}
\caption{Average radius $\bar{r}$ against $b$ in the first scenario of generating $\beta$.} 
\label{fig:radius_lasso}
\end{center}
\end{figure}

\begin{figure}[t]
\begin{center}
\includegraphics[width=0.6\textwidth]{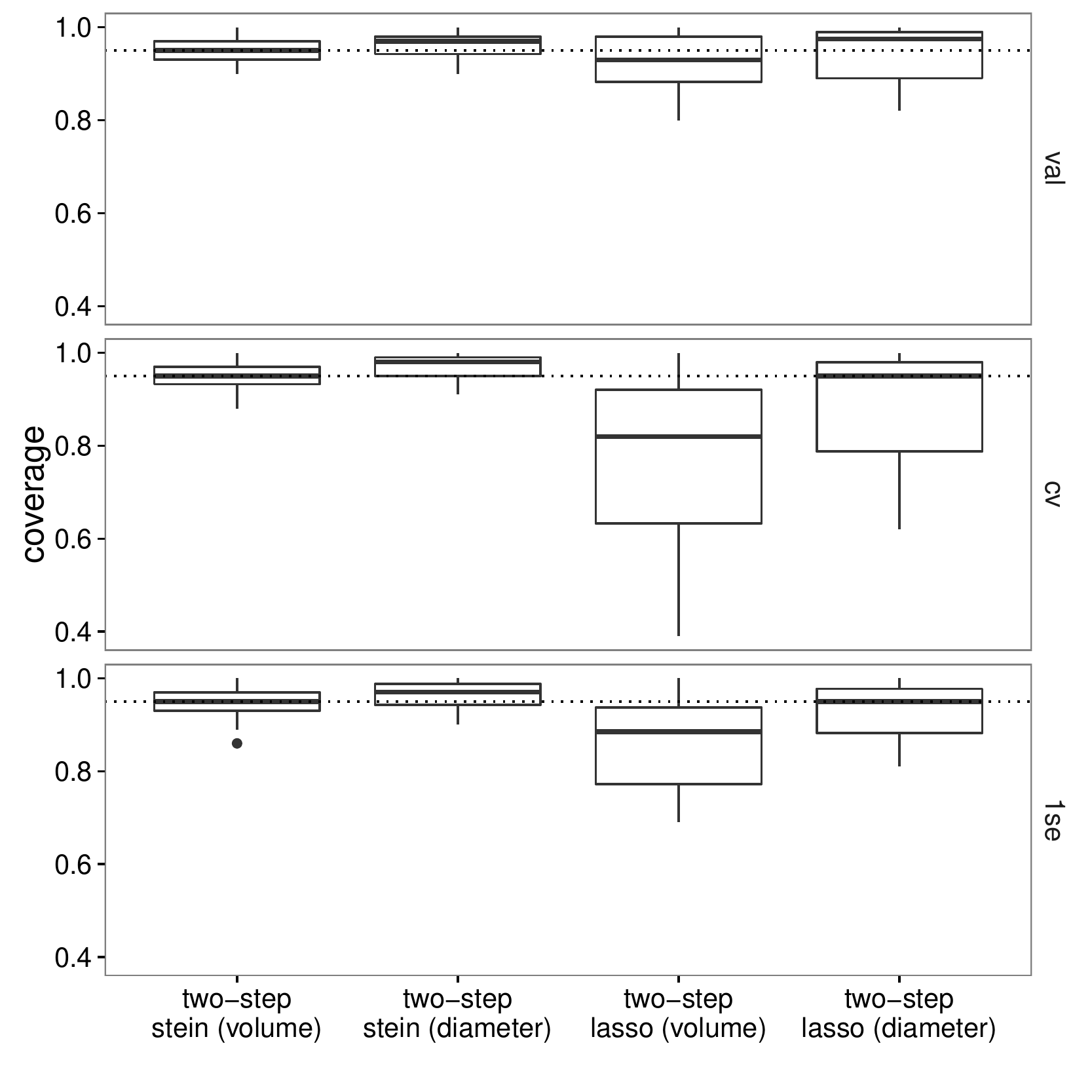}
\caption{Box plots of coverage rates for each choice of $\lambda$. The dashed lines indicate the desired confidence level of $95\%$.}
\label{fig:coverage_lasso2}
\end{center}
\end{figure}

We make the following observations from the two figures.
First, the two-step Stein methods showed a substantially more satisfactory coverage than the two-step lasso methods. 
The coverage was close to 0.95 for both TSV and TSD, while the coverage rates of TLV and TLD had a much larger variance and were especially poor when $\lambda$ was chosen via cross validation. 
The confidence sets by the two-step lasso methods had a slightly smaller average radius than the two-step Stein methods for the Toeplitz and the exponential decay designs. However, given their low and unstable coverage rates, this does not imply the two-step lasso methods constructed better confidence sets.
Recall that $|\cset| = O_p(n^{-1/4}\vee \sqrt{s/n})$ for the two-step Stein methods and $|\cset| = O_p(\sqrt{s\log p /n})$ for the two-step lasso methods. 
The signals were very sparse in our simulation, with $s=10$ much smaller than $p$, favorable for the two-step lasso methods.
Even so, we find the two-step Stein methods very competitive, noting that the radii of both TSV and TSD were actually comparable or slightly smaller than the two-step lasso methods for the equal correlation designs, in which the predictors were highly correlated. 
This comparison demonstrates that the two-step Stein method is more appealing in practice, as it does not require any prior knowledge about the underlying sparsity but gives a better and more stable coverage.
Second, both ways of choosing the constants $c_1$ and $c_2$ worked well for the two-step Stein method. 
On the contrary, it is seen from Figure~\ref{fig:coverage_lasso2} that the coverage rate of TLV was significantly lower than that of TLD in the bottom two panels. 
Lastly, between using $\lambda_{cv}$ and $\lambda_{1se}$ in the lasso for defining candidate sets $A_m$, we recommend the latter, as it tends to give comparable radii but a better coverage, especially for the two-step lasso.

We also compared the performance between the oracle lasso method and TLD, both constructing confidence sets based on the lasso prediction \eqref{eq:LassoPrediction} with a known sparsity. 
The coverage rates of the two methods were quite comparable as reported in Figures~\ref{fig:coverage_stein} and \ref{fig:coverage_lasso2}. 
The geometric average radius of the oracle lasso method (Figure~\ref{fig:radius_stein_1}) was $2$ to $5$ times that of TLD (Figure~\ref{fig:radius_lasso}). 
The difference was especially significant when the signal strength was high (large $b$). 
This comparison confirms that, by separating strong and weak signals, our two-step framework can greatly improve the efficiency of the constructed confidence sets.

\subsection{Dense signal settings} \label{subsec:dense}

We have shown the advantages of our two-step Stein method in the last two subsections under sparse settings. 
Recall that the dimension of our data was $(n,p)=(200,800)$ with sparsity $s=10$ for $\beta$ in the previous comparisons. 
The goal of this subsection is to illustrate the stable performance of our method when the true signal is dense. 
As such, we changed the sparsity to $s=100$ for the first way of generating $\beta$ and $s=200$ for the second way of generating $\beta$. 
We focused on the equal correlation design, which was the most difficult one among the three designs. 
With the same set of values for the signal strength $b$, we had 20 distinct parameter settings for data generation in this comparison, and again we simulated 100 data sets under each setting. 
The tuning parameter $\lambda$ was selected as $\lambda_{1se}$ for all the results here. 

\begin{figure}
\begin{center}
\includegraphics[width=0.7\textwidth]{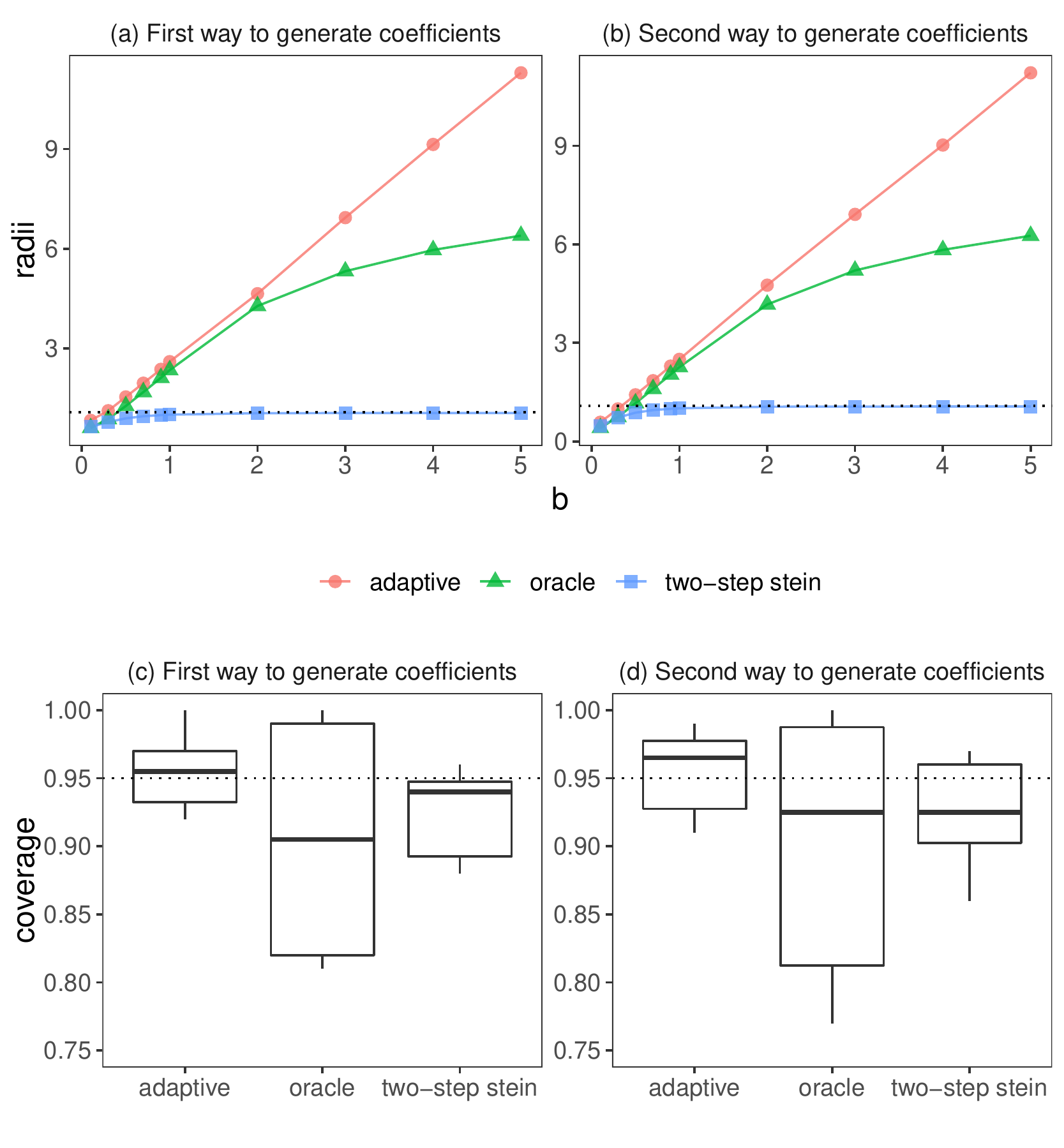}
\caption{Comparison results under dense signal settings. (a) and (b) Geometric average radius against $b$.
(c) and (d) Box plots of the coverage rates. }
\label{fig:dense}
\end{center}
\end{figure}

Figure~\ref{fig:dense} compares the geometric average $\bar{r}$ against $b$ and the coverage among the adaptive method, the oracle lasso and our two-step Stein method. 
In all the scenarios reported in panels~(a) and (b), our method outperformed the other two methods with very big margins in terms of the volume of a confidence set.
For $b>1$, the radius of our method approached the naive radius ${(\chi^2_{n,\alp}/n)^{1/2}}$ as suggested by Theorem~\ref{thm:twostephonest}, while the radii of the oracle lasso and the adaptive methods kept increasing to much greater than the naive $\chi^2$ radius.
This shows that the two competing methods failed to construct acceptable confidence sets when the signal was dense. Since the sparsity level $s$ is comparable to $n$ for the data sets here, the upper bounds for the diameters of these two methods, $|\wh C_o|=O_p(\sqrt{s\log p /n})$ and $|\wh C_a|=O_p(n^{-1/4} + \sqrt{s\log p /n})$, are no longer useful or even valid.
It is seen from Figure~\ref{fig:dense}(c) and (d) that the coverage rates of the two-step Stein method were much better than the oracle lasso, but slightly lower than the adaptive method. 
Nevertheless, our confidence sets still maintained a minimum coverage of $0.9$ in most cases, which is quite satisfactory given the way smaller diameters than the adaptive method. 

\begin{figure}%
    \centering
    \includegraphics[width=0.6\textwidth]{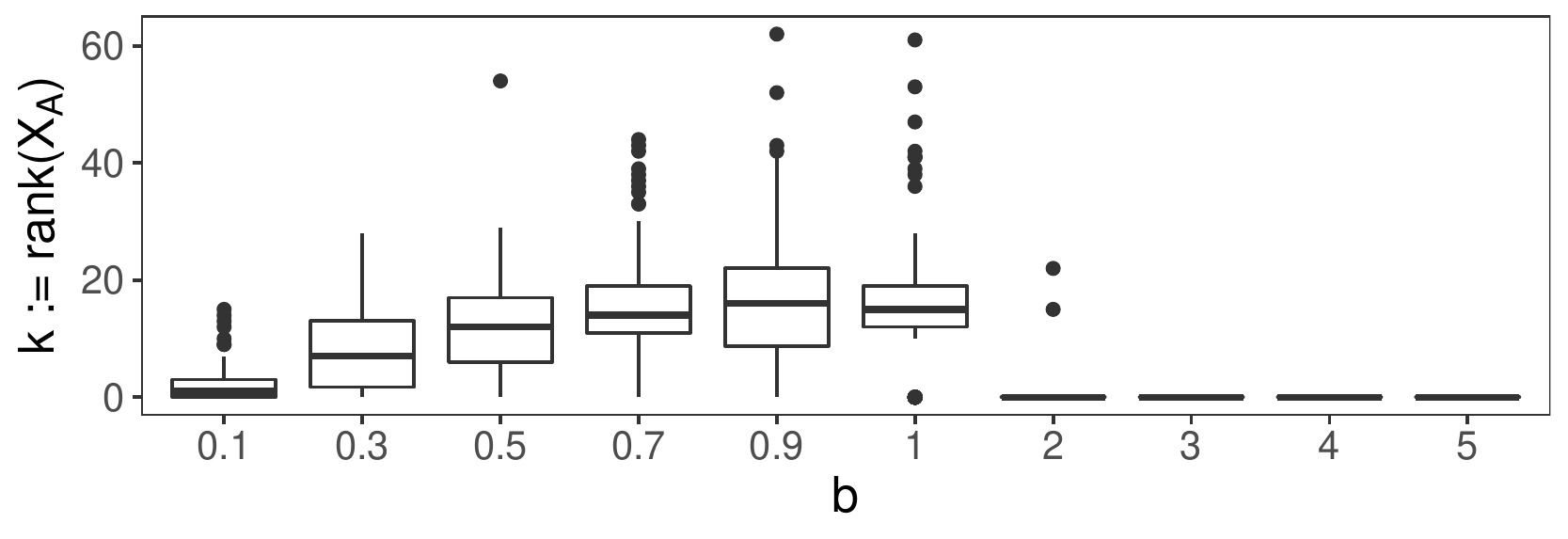}
    \caption{The box plot of $k$ across data sets for each value of $b$}%
    \label{fig:k_b}%
\end{figure}

To understand the behavior of our method in this dense signal setting, we examined the number of variables selected as strong signals in the set $A$, i.e., $k=|A|$. Figure~\ref{fig:k_b} displays the box plot of $k$ across 100 data sets for each value of $b$ under the first way to generate $\beta$. When $b\leq 1$, our two-step method still chose a nonempty candidate set, but $k$ dropped to $0$ for $b \geq 2$, i.e., $A=\varnothing$. Note that the radius of our method will be close to the naive $\chi^2$ radius when $k=n$ or $k=0$; 
see \eqref{eq:expection_diameter} in Theorem~\ref{thm:twostephonest}. 
When the signal strength $b\leq 1$, some small nonzero coefficients are close to zero so $\beta$ is effectively quite sparse, in which case the lasso can select a good subset $A$ of strong signals. 
On the contrary, when $b$ is large, the lasso will not be able to select a majority of the strong signals, leaving $\|\mu_\perp\|=\|P_A^\perp \mu\|$ too big. 
In this setting, our method automatically adjusts its ``optimal" choice to $A=\varnothing$, constructing a confidence set centered at the Stein estimate $\hmu(y;0)$ \eqref{steinestimate} with radius estimated via the SURE.



\section{Discussion}\label{sec:discussion}

For high-dimensional regression, oracle inequalities for sparse estimators cannot be directly utilized to construct honest and adaptive confidence sets due to the unknown signal sparsity.
To overcome this difficulty, we have developed a two-step Stein method, via projection and shrinkage, to construct confidence sets for $\mu=X\beta$ in \eqref{highdimensionRL} by separating signals into a strong group and a weak group.
Not only is honesty achieved over the full parameter space $\R^p$, but also our confidence sets can adapt to the sparsity and strength of $\beta$.
We also implemented an adaptive way to choose a proper subspace for the projection step among multiple candidate sets, which protects our method from a poor separation between strong and weak signals.
Our two-step Stein method showed very satisfactory performance in extensive numeric comparisons, outperforming other competing methods under various parameter settings. 

The focus of this work is on the confidence set for $\mu=X\beta$.
Although related, it is different from the problem of inference on $\beta$. 
In general, it is difficult to infer a confidence set for $\beta$ from the confidence set for $X\beta$ without any constraint on $X$ and $\beta$, because $X$ does not have a full column rank under the high-dimensional setting. 
However, if we know that $\|\beta\|_0\leq s$, then a confidence set $\wh C$ for $\mu$ can be converted into a confidence set for $\beta$ as
$\wh B:=\{\beta\in \scrB(s): X\beta \in \wh C\}$,
which is the union of $s$-dimensional subspaces intersecting $\wh C$. 
It is interesting future work to study the convergence rate of $\wh B$ and related computational issues, such as how to draw $\beta$ from $\wh B$.
On the other hand, if $X$ satisfies $\textup{SRC}(s, c_*, c^*)$, then 
\begin{align*}
c^*\| \beta \| ^2 \geq  \| X \beta \|^2/n,\quad \forall\; \beta \in \scrB(s).
\end{align*}
A hypothesis test about the mean $X\beta$ can be carried out by using the confidence set $\wh C$ to obtain a lower bound on  $\| X \beta \|$, which carries over to a lower bound on $\|\beta\|$ with the above inequality and thus can be used to perform a test about $\beta$. 
See \cite{Nickl13} for a related discussion. We have also demonstrated that our method works well even when the underlying $\beta$ is dense, e.g. $\|\beta\|_0\asymp n$, which is important for practical applications.
See \cite{bradic18} for recent theoretical results on high-dimensional inference for non-sparse $\beta$.

Another future direction is to incorporate the confidence set $\wh C$ with the method of estimator augmentation \citep{zhou14, zhou17} for lasso-based inference. 
Estimator augmentation can be used to simulate from the sampling distribution of the lasso without solving the lasso problem repeatedly, provided a point estimate of $\mu=X\beta$. 
Given $\wh C$, one may randomize the point estimate of $\mu $ by sampling from the confidence set, 
which has been shown to improve the inferential performance of estimator augmentation \citep{min09}.

\appendix
\section{Proofs}\label{sec:proof}
\begin{proof}[Proof of Lemma~\ref{lm:radiusrate}]
By the law of large number, we have
\begin{align}\label{eq:chisqorder}
\frac{\chi^2_{k, \alpha} - k}{\sqrt{2k}} = o(1) + \Phi^{-1}(\alpha) 
\Rightarrow \chi^2_{k, \alpha} = k + o(\sqrt{2k}) + \sqrt{2k} \Phi^{-1}(\alpha)  \asymp k,
\end{align}
where $\Phi^{-1}$ is the inverse of the cumulative distribution function of $\dnorm(0,1)$.
It follows from \eqref{eq:ra} and \eqref{eq:chisqorder} that
\begin{align}\label{eq:raratio}
r_A^2 = c_1 \cdot \sigma^2 \chi^2_{k, \alpha} / n  \asymp k / n.
\end{align}
Let $\veps_\perp=P_A^\perp \veps$. Under the normality assumption of $\veps$, we have
\begin{align*}
    1/B &= \frac{\|y_\perp\|^2}{(n-k)\sigma^2} = 
    \frac{\norm{\veps_\perp}^2 + 2\langle \mu_\perp, \veps_\perp \rangle + \norm{\mu_\perp}^2 }{(n - k)\sigma^2}  \\
    &= 1 + O_p\left(\frac{1}{\sqrt{n - k}}\right) + O_p\left( \frac{\norm{\mu_\perp}}{n - k} \right) + \frac{\norm{\mu_\perp}^2}{(n-k)\sigma^2}.
\end{align*}
It follows, by noting $\norm{\mu_\perp} = o(\sqrt{n - k})$, that 
\begin{align} \label{eq:Lorder}
    \hL = 1 - B = O_p\left(\frac{1}{\sqrt{n - k}}\right) + O_p\left( \frac{\norm{\mu_\perp}^2}{n - k} \right).
\end{align}
By plugging \eqref{eq:Lorder} in \eqref{eq:rperp}, we obtain
\begin{align}\label{rperpratio}
r^2_\perp & = c_2 \cdot \sigma^2\frac{n - k}{n} \left\{O_p\left(\frac{1}{\sqrt{n-k}}\right) + O_p\left(\frac{\norm{\mu_\perp}^2}{n - k}\right) + c_s(\alpha/ 2) \frac{1}{\sqrt{n - k}} \right\} \nonumber\\
& = O_p \left(\frac{\sqrt{n - k}}{n}\right) + O_p\left( \frac{\norm{\mu_\perp}^2}{n} \right).
\end{align}
If $k=O_p(\sqn)$ and $\|\mu_\perp\|=O(n^{1/4})$, it follows from \eqref{eq:raratio} and \eqref{rperpratio} that $|\cset|\asymp_p n^{-1/4}$.
\end{proof}

\begin{proof}[Proof of Theorem~\ref{thm:optimalA}]
Under sparse Riesz condition, letting $G= A^c \cap \supp (\beta)$, we have
\begin{align*}
\norm{\mu_\perp} = \norm{P_A^\perp X_{A^c} \beta_{A^c}}  = \norm{P_A^\perp X_{G}\beta_{G}} \leq  c^*\sqrt{n}\norm{\beta_{G}} =  c^*\sqrt{n}\norm{\beta_{A^c}},
\end{align*}
which, together with $k=o(n)$ and $\|\beta_{A^c}\| = o(1)$, implies $\|\mu_\perp\| = o(\sqrt{n}) = o(\sqrt{n-k})$. Thus, by  Lemma~\ref{lm:radiusrate}, $r_\perp^2 =  O_p(n^{-1/2}+\norm{\beta_{A^c}}^2)$ and the rest of the proof is straightforward.
\end{proof}

\begin{proof}[Proof of Corollary~\ref{cor:lassodi}]
Under the choice of $\lambda$ in this corollary and the assumption that $s \leq (s^* - 1) / (2 + 4c^*/c_*)$, 
Theorem~1 and Theorem~3 in \cite{Zhang08} imply that, for any $\eps > 0$, there exists $N$ such that when $n > N$,
\begin{align}\label{eq:lassothm}
\Prob\left\{ |A| \leq M^*_1 s \text{ and } \| \hat{\beta}-\beta \| \leq M^*_2\sigma \sqrt{(s\log p)/n} \right\} > 1- \eps,
\end{align}
where $M_1^*$ and $M_2^*$ are two constants depending on $c_0$, $c_*$ and $c^*$.
It follows from \eqref{eq:lassothm} that
\begin{align*}
k \leq |A| = O_p(s) = o_p(n),\quad\| \hat{\beta}-\beta \| = O_p \left(\sqrt{s\log p/n} \right).
\end{align*}
Thus, we have
\begin{align} \label{eq:lasso_error_bound}
 \| \beta_{A^c} \| \leq \| \hat{\beta}-\beta \| = O_p\left(\sqrt{s\log p/n}\right)=o_p(1).
\end{align}
Now, all the conditions in Theorem~\ref{thm:optimalA} are satisfied, leading to \eqref{eq:lassosteinradius}.
Further, \eqref{eq:lasso_error_bound} implies that $S_0\subset A$ and thus  
$\| \beta_{A^c}\| = \| \beta_{A^c \cap A_0}\| \leq \| \beta_{A_0 \backslash S_0} \|= O(n^{-1/4})$ with probability at least $1-\eps$.
Consequently, \eqref{eq:lassosteinradiussep} follows from \eqref{eq:steinsparserate}.
\end{proof}

\begin{proof}[Proof of Corollary~\ref{cor:mcpsmp}]

If $\Prob(A=A_0)\to 1$, then the rate of $|\cset|$ in \eqref{eq:mcprate} follows immediately from \eqref{eq:steinsparserate} in Theorem~\ref{thm:optimalA}. Thus, it remains to show that $\hbeta^{\textup{mcp}}_{\lambda_n,\gamma_n}=\hbeta^{\textup{mcp}}_{\lambda, \gamma} (y',X')$ \eqref{eq:mcpdef} is model selection consistent by verifying the conditions of the following corollary, which is a simplified version of Corollary~4.2 in \cite{huang12}.

\begin{corollary} \label{cor:mcp}
Let $\lambda_{\min}$ be the smallest eigenvalue of $(X'_{A_0})^\trans X'_{A_0} / n$,
$\tau_n = \sigma \sqrt{2\log s / (n\lambda_{\min})}$ and
$\lambda^* = 2\sigma \sqrt{2c^* \log(p-s) / n}$.
Suppose that $X'$ satisfies $\textup{SRC}(s^*, c_*, c^*)$, where $0<c_*<c^*$ are constants and $s^* \geq (c^*/c_* + 1/2)s$. If a sequence of $(\lambda_n, \gamma_n)$ satisfies 
$\inf_{A_0} |\beta_j | \geq \gamma_n \lambda_n + a_n\tau_n$ with $a_n \to \infty$,
$\lambda_n \geq a_n\lambda^*$,
$n\lambda_n^2 / (4c^*) > \sigma^2$ and 
$\gamma_n \geq c_*^{-1}\sqrt{4 + c_* / c^*}$, then $\Prob\{\supp(\hbeta^{\textup{mcp}}_{\lambda_n,\gamma_n})=A_0\}\to 1$.
\end{corollary}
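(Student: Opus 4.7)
The plan is to derive this corollary as a direct application of Corollary~4.2 in \cite{huang12}, which establishes model selection consistency of the MCP under sparse eigenvalue conditions on the design, a suitable beta-min condition, and appropriate growth conditions on the tuning pair $(\lambda_n,\gamma_n)$. The entire proof is essentially a translation of the SRC-based hypotheses here into the sparse-eigenvalue hypotheses of that reference, followed by verifying that all scaling conditions line up.

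First I would translate the structural assumption. The condition $\textup{SRC}(s^*, c_*, c^*)$ gives, for every $|A|\le s^*$, a lower sparse eigenvalue bound $c_-(s^*)\geq c_*$ and an upper sparse eigenvalue bound $c_+(s^*)\leq c^*$ of $(X')^\trans X'/n$. The dimension inflation $s^*\geq (c^*/c_* + 1/2)s$ is precisely what Corollary~4.2 of \cite{huang12} needs for the MCP to distinguish $A_0$ from any $s$-sparse competitor; in particular, the factor $1/2$ ensures symmetric-difference sets of size up to $(c^*/c_*)s$ satisfy the required eigenvalue bounds. Second I would identify the noise-dominance condition: the threshold $\lambda^* = 2\sigma\sqrt{2c^*\log(p-s)/n}$ is exactly the noise level at which irrelevant covariates can be screened out with high probability, and $\lambda_n\geq a_n\lambda^*$ with $a_n\to\infty$ places us strictly above that level, matching the analogous hypothesis in \cite{huang12}.

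Third I would check the beta-min condition. The MCP is exactly linear (slope~$1$) on its flat region $\{|t|\ge\gamma_n\lambda_n\}$, so once a signal exceeds $\gamma_n\lambda_n$ plus the oracle $\ell_\infty$ estimation error on $A_0$, the estimator behaves like the unpenalized least-squares restricted to $A_0$. The quantity $\tau_n = \sigma\sqrt{2\log s/(n\lambda_{\min})}$ is precisely the maximal Gaussian deviation of this oracle estimate componentwise, so the assumed inequality $\inf_{A_0}|\beta_j|\ge \gamma_n\lambda_n + a_n\tau_n$ with $a_n\to\infty$ supplies the required slack. The technical conditions $n\lambda_n^2/(4c^*)>\sigma^2$ and $\gamma_n\ge c_*^{-1}\sqrt{4+c_*/c^*}$ are then direct algebraic counterparts of the concavity-control condition on $\gamma_n$ and the signal-to-noise condition in \cite{huang12}, the latter ensuring strict convexity of the penalized objective along all directions relevant on $A_0$.

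The only real obstacle is the notational bookkeeping: the reference uses sparse eigenvalues $c_\pm(\cdot)$ and a slightly different parametrization of the MCP, so I would carefully set up a dictionary between the two formulations and track how the constants $c_*$ and $c^*$ propagate through each of their hypotheses. Once that dictionary is in place, each hypothesis of Corollary~4.2 in \cite{huang12} is seen to follow from the corresponding condition here, and their conclusion $\Prob\{\supp(\hbeta^{\textup{mcp}}_{\lambda_n,\gamma_n}) = A_0\}\to 1$ transfers verbatim.
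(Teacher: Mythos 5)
Your proposal takes essentially the same route as the paper: the paper states this result explicitly as ``a simplified version of Corollary~4.2 in \cite{huang12}'' and offers no independent proof, so deriving it by translating the SRC hypotheses into the sparse-eigenvalue conditions of that reference and matching the tuning and beta-min conditions is exactly what is intended. Your reading of the roles of $\lambda^*$, $\tau_n$, and the $\gamma_n$ lower bound is consistent with that reduction, so the approach matches.
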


Under the SRC assumption 
$\lambda_{\min}$ is bounded from below by $c_*>0$.
It follows from $\tau_n = O(\sqrt{\log s / n})$, $\lambda^* = O(\sqrt{\log p / n})$ and $\lambda_n \gg \sqrt{\log p / n}$ that there exists $a_n \to \infty$ such that $\lambda_n \geq a_n (\lambda^*\vee\tau_n)$. Then we have the following: $\inf_{A_0} |\beta_j | \geq (\gamma_n + 1) \lambda_n \geq \gamma_n \lambda_n + a_n\tau_n$, $\lambda_n \geq a_n \lambda^*$, and $n \lambda_n^2 / (4c^*) \gg \log p > \sigma^2$ when $n$ is sufficiently large. Thus all the conditions in Corollary~\ref{cor:mcp} are satisfied under the assumptions of Corollary~\ref{cor:mcpsmp}. This completes the proof.

Technically, we did not invoke the assumption $s \log p = o(n)$ in the proof. But it is required for the sparse Riesz condition to hold (e.g. for Gaussian designs).
\end{proof}

\begin{proof}[Proof of Lemma~\ref{lemma:consistent_chi}]
We have the following inequalities for any positive $x$ and degree of freedom of $n$ from Lemma~1 in \cite{laurent00}:
\begin{align}
\Prob \left\{ \chi^2_n - n \geq 2\sqrt{n}\sqrt{x} + 2x \right\} &\leq e^{-x}, \label{eq:chi2_gt}\\
\Prob \left\{  \chi^2_n - n  \leq -2\sqrt{n}\sqrt{x} \right\} &\leq e^{-x}. \label{eq:chi2_lt}
\end{align}
The solutions of
$2\sqrt{n}\sqrt{x_1} + 2x_1 = \sqrt{n}\delta$ and  $2\sqrt{n}\sqrt{x_2} = \sqrt{n}\delta$
are plugged in \eqref{eq:chi2_gt} and \eqref{eq:chi2_lt} to obtain
\begin{align*}
& \Prob \left\{ \frac{\chi^2_n}{n} - 1 \geq \sqrt{n}\delta \right\} \leq \exp\left\{ -\frac{(\sqrt{1+2\delta/\sqrt{n}} - 1)^2}{4}n \right\}, \\
& \Prob \left\{ \frac{\chi^2_n}{n} - 1 \leq -\sqrt{n}\delta \right\} \leq \exp\left\{ -\frac{\delta^2}{4} \right\},
\end{align*}
so that
\begin{align*}
\Prob \left\{ \sqrt{n}\left| 1 - \frac{1}{n} \chi^2_{n} \right| \geq \delta \right\}
\leq
2\exp\left\{ -\frac{(\sqrt{1+2\delta / \sqrt{n}} - 1)^2}{4}n \right\}.
\end{align*}
To finish the proof, we will show that
\begin{align}
f(n) = \left(\sqrt{1+2\delta / \sqrt{n}} - 1 \right)^2 n \label{eq:fn_monotonical}
\end{align}
is bounded by $\delta^2$ for any $n$. Replacing $\sqrt{1 + 2\delta/\sqrt{n}}$ with its Taylor expansion $1 + \delta/\sqrt{n} + O(\delta^2/n)$ in \eqref{eq:fn_monotonical}, we get $f(n) = \delta^2 + O(n^{-1/2})\to \delta^2$, as $n\to\infty$.
If $f(n)$ is monotonically increasing in $n$, then $\delta^2$ is a tight upper bound of $f(n)$ for all $n$. Lastly, to prove the monotonicity, it suffices to show the derivative
\begin{align*}
f'(n) = 2 + \delta/\sqrt{n} - \frac{2 + 3 \delta/\sqrt{n}}{\sqrt{1 + 2 \delta/\sqrt{n}}} \geq 0,
\end{align*}
which can be verified easily. Now the proof is completed.
\end{proof}

\begin{proof}[Proof of Lemma~\ref{thm:consistency}]
Let
\begin{align*}
Q(A) &= \E\| P_A^\perp y \|^2
= \E\| P_A^\perp(\mu + \veps) \|^2 \\
&= \| P_A^\perp\mu \|^2 + {\tr}(P_A^\perp)\sigma^2
= \|P_A^\perp \mu \|^2 + (n - k)\sigma^2.
\end{align*}
A few steps of derivation shows that
\begin{align}
& \sigma^2 \hL-{(n - k)^{-1}}\|\hmu_\perp-\mu_\perp\|^2 \nonumber\\
& \quad = \sigma^2 - \frac{\sigma^4(n - k)}{\|P_A^\perp y\|^2} - \frac{1}{n - k} \left\|\left(1-\frac{(n-k)\sigma^2}{\|P_A^\perp y\|^2}\right) P_A^\perp y - P_A^\perp \mu \right\|^2 \nonumber\\
& \quad = \sigma^2 - \frac{1}{n - k} \| P_A^\perp \veps \|^2 + \frac{2\sigma^2}{\|P_A^\perp y\|^2}
\left( \langle \veps, P_A^\perp \mu \rangle + \norm{P_A^\perp \veps}^2 - \sigma^2(n-k) \right).  \label{eq:sure_expansion}
\end{align}
It follows from \eqref{eq:sure_expansion} that
\begin{align*}
& \Prob \left\{ \sup_{A \in \Hset } \sqrt{n-k} \left|\sigma^2 \hL-{(n - k)^{-1}}\|\hmu_\perp-\mu_\perp\|^2\right| \geq \sigma^2\delta \right\}  \\
& \quad \leq \Prob \left\{ \sup_{A \in \Hset } \sqrt{n-k} \left| \sigma^2 - \frac{1}{n - k} \| P_A^\perp \veps\|^2 \right| \geq \sigma^2\delta/2 \right\}  \\
& \qquad + \Prob \left\{ \sup_{A \in \Hset } \sqrt{n-k} \left| \frac{2\sigma^2}{\|P_A^\perp y\|^2}
\left( \langle \veps, P_A^\perp \mu \rangle + \norm{P_A^\perp \veps}^2  - \sigma^2(n-k) \right) \right| \geq \sigma^2\delta/2 \right\},
\end{align*}
where the second probability on the right hand side is bounded by
\begin{align}
& \sum_{A \in \Hset} \Prob \left\{ \left| \frac{2\sigma^2}{\|P_A^\perp y\|^2}
\left( \langle \veps, P_A^\perp \mu \rangle + \norm{P_A^\perp \veps}^2  - \sigma^2(n-k) \right) \right| \geq \frac{\sigma^2\delta}{2\sqrt{n-k}} \right\} \nonumber\\
&\quad \leq \sum_{A \in \Hset}  \left[
\Prob \left\{ \|P_A^\perp y\|^2 \leq \frac{1}{2}Q(A) \right\}\right. \nonumber \\
& \quad\quad\quad\quad \left.
+ \Prob \left\{ 2 \left| \norm{P_A^\perp \veps}^2  - (n-k)\sigma^2 \right| \geq \frac{\delta Q(A)}{2^3\sqrt{n-k}} \right\} \right. \nonumber \\
&\quad\quad\quad\quad \left. + \Prob \left\{ 2 \left| \langle \veps, P_A^\perp\mu \rangle \right| \geq \frac{\delta Q(A)}{2^3\sqrt{n-k}} \right\}
\right]. \label{eq:prob_split}
\end{align}
To prove the theorem, it suffices to show that all three probabilities in \eqref{eq:prob_split} can be bounded by
either $D/(n-k)^2$ or $D/\delta^4$ for some constant $D>0$. Before that, we introduce the following three inequalities derived from Theorem~2 in \cite{Whittle60}:
\begin{align}
\E \left( \|P_A^\perp y\|^2 - Q(A)\right)^4 &\leq D_1 \left[ \sigma^4(n-k)^2 + \|P_A^\perp \mu\|^4 \right], \label{eq:whittle1}\\
\E \left( \norm{P_A^\perp \veps}^2 - (n-k)\sigma^2\right)^4 &\leq D_1 \sigma^4 (n - k)^2, \label{eq:whittle2}\\
\E \left(\langle \veps, P_A^\perp\mu \rangle\right)^4 &\leq D_1 \|P_A^\perp \mu\|^4, \label{eq:whittle3}
\end{align}
for some constant $D_1$ depending on the moments of $\veps_i$.
In our case, $D_1$ only depends on the upper bound $d$ of the eighth moment.
The first term of \eqref{eq:prob_split} can be bounded by
\begin{align*}
& \Prob \left\{ \|P_A^\perp y\|^2 \leq \frac{1}{2}Q(A) \right\} \leq \Prob\left\{\left| \|P_A^\perp y\|^2  - Q(A) \right| \geq \frac{1}{2}Q(A)\right\} \\
& \quad\leq \frac{\E\left( \|P_A^\perp y\|^2 - Q(A)\right)^4}{\left(\frac{1}{2}Q(A)\right)^4} \qquad \text{by Chebyshev inequality}\\
& \quad\leq 16 D_1 \frac{\sigma^4(n-k)^2 + \| P_A^\perp\mu \|^4}{Q(A)^4} \qquad \text{by \eqref{eq:whittle1}}\\
& \quad\leq \frac{16D_1}{(n - k)^2}.
\end{align*}
Similarly, using \eqref{eq:whittle2} and \eqref{eq:whittle3}, we can also show that both the second and the third terms are bounded by $D_2/(\sigma^2\delta^4)$ for some $D_2>0$ depending only on $d$.
Lastly, the proof is finished by letting $D = (16D_1) \vee (D_2 / \sigma^2)$.
\end{proof}


\begin{proof}[Proof of Theorem~\ref{thm:CRobins}]
The honesty of $\wh C_a$ in \eqref{eq:confidencesetrobins} is guaranteed by Theorem~3.1 and Proposition~2.1 in \cite{Robins06} with the only assumption $y/\sqn\sim \dnorm_n(\mu/\sqn,\sigma^2\bfI_n/n)$.
It is not difficult to verify that $(X', y')$ satisfies all the conditions in Corollary~B.2 and Theorem~7.2 of \cite{Bickel09}.
Thus, with probability approaching one, we have $\|\hbeta - \beta\|^2 = O(s \log p / n)$ and $(\hbeta - \beta) \in \scrC(A_0, 3)$, as defined in {\eqref{eq:conedef}}, with $A_0=\supp(\beta)$. By the definition of $\zeta(s,3;X)$, this implies that
\begin{align}\label{eq:sample_splittng_pred_error}
\frac{1}{n}\|X(\beta - \hbeta) \|^2 \leq \zeta \|\hbeta - \beta\|^2=O_p(s \log p / n)=o_p(1).
\end{align}
Again, by Theorem~3.1 in \cite{Robins06}, we have
\begin{align*}
|\wh C_a|^2 = O_p\left(n^{-1/2} + \frac{1}{n}\|X(\beta - \hbeta) \|^2\right)  = O_p\left(n^{-1/2}+s \log p / n\right),
\end{align*}
which completes the proof.
\end{proof}

\begin{proof}[Proof of Corollary~\ref{thm:hoestCIlasso}]
Rewrite orthogonal matrix $P_A^\perp = V V^\trans$, where $V \in \R^{n \times (n-k)}$ consists of orthogonal unit column vectors.
Write the lasso estimate in \eqref{weaksignalLasso} as $\tdbeta=F(y_\perp,P_A^\perp X; n\lmd_2)$,
where $F$ is understood as a mapping with a parameter $n\lmd_2>0$. Since $P_A^\perp X_A=0$, the loss in \eqref{weaksignalLasso} becomes
\begin{align*}
\frac{1}{2n} \| y_\perp - P_A^\perp X \beta\|^2 +\lmd_2 \|\beta\|_1
&= \frac{1}{2n} \| y_\perp - P_A^\perp X_{A^c} \beta_{A^c}\|^2  +\lmd_2 \|\beta\|_1 \\
&= \frac{1}{2n} \| V^\trans y - V^\trans X_{A^c} \beta_{A^c}\|^2  +\lmd_2 \|\beta\|_1,
\end{align*}
which demonstrates that $\tilde{\beta}_A=0$ and $\tilde{\beta}_{A^c}=F(V^\trans y,V^\trans X_{A^c}; n\lmd_2)$.
Moreover, we have
\begin{align}\label{eq:equiprloss}
\| V^\trans X_{A^c} (\tdbeta_{A^c}-\beta_{A^c})\|=\| P_A^\perp X (\tdbeta-\beta)\|.
\end{align}

We will verify that the lasso problem,
{$\tilde{\beta}_{A^c}=F(V^\trans y,V^\trans X_{A^c};n\lmd_2)$,}
satisfies all the assumptions in Lemma~\ref{lemma:LassoLoss}
so that we can apply \eqref{eq:LassoPrediction} to bound the prediction error on the left side of \eqref{eq:equiprloss}.
Since $A \subseteq \supp(\beta)$, we have $\|\beta_{A^c}\|_0 \leq s - k$. Next, we show $V^\trans X_{A^c} \in \R^{(n-k)\times (p-k)}$ satisfies $\textup{RE}(s-k, 3)$.
Let $D$ be any subset of $[p-k]$ such that $|D| \leq (s - k)$.
For any nonzero $\gamma \in \R^{p-k}$ in the cone $\scrC(D, 3)$, a vector $\delta = (\eta, \gamma) \in \R^p$ can always be constructed satisfying
$$X_A \eta + P_A X_{A^c} \gamma = 0,$$
since $P_A X_{A^c} \gamma \in \spn(X_A)$.
Define a mapping $g:i \mapsto i + |A|$ for $i\in[p]$ and let $B = [|A|] \cup g(D) \subset [p]$.
Then $|B|=|A|+|D|\leq s$, and $\delta \in \scrC(B, 3)$ because
\begin{align*}
\sum_{i \in B^c} |\delta_i| = \sum_{i \in D^c} |\gamma_i| \leq 3 \sum_{i \in D} |\gamma_i| \leq 3 \sum_{i \in B} |\delta_i|,
\end{align*}
where the second step is due to $\gamma\in\scrC(D, 3)$.
Based on that $X$ satisfies $\textup{RE}(s,3)$, we arrive at the following inequality:
\begin{align*}
\frac{\|V^\trans X_{A^c}\gamma \|}{{\sqrt{n-k}}\|\gamma_D\|} &= \frac{\| X_A\eta + P_A X_{A^c}\gamma + P_A^\perp X_{A^c}\gamma \|}{\sqrt{n-k}\|\gamma_D\|} \\
& =  \frac{\sqrt{n}}{\sqrt{n-k}}\frac{\| X\delta\|}{\sqn\|\gamma_D\|}
\geq  \frac{\sqrt{n}}{\sqrt{n-k}}\frac{\| X\delta\|}{\sqn\| \delta_B\|}
\geq \frac{\sqrt{n}}{\sqrt{n-k}} \kappa(s, 3; X),
\end{align*}
which shows that $\textup{RE}(s-k, 3)$ holds for $V^\trans X_{A^c}$ and
$\kappa(s-k, 3; V^\trans X_{A^c}) \geq \sqrt{n/(n-k)}\kappa(s, 3; X) $.
Lastly,
$n\lmd_2 = K\sigma\sqrt{n\log(p-k)}\geq K\sigma\sqrt{(n-k)\log(p-k)}$, as required in Lemma~\ref{lemma:LassoLoss}.

So far, we have shown that $(V^\trans X_{A^c}, V^\trans y)$ and $\lmd_2$ satisfy all the conditions in Lemma~\ref{lemma:LassoLoss}, which with \eqref{eq:equiprloss} implies that
\begin{align*}
& \Prob\left\{\| P_A^\perp X (\tdbeta-\beta)\|^2
\leq\frac{16nK^2\sigma^2 \omega(V^\trans X_{A^c})}{(n-k) \kappa^2(s-k,3;V^\trans X_{A^c})}(s - k)\log (p - k)\right\} \\
& \quad \geq 1-(p-k)^{1-K^2/8},
\end{align*}
for any $A \subseteq \supp(\beta)$. Then inequality \eqref{eq:Lassoperp} immediately follows by noting that $\omega(V^\trans X_{A^c}) \leq \omega(X)$ and substituting $\kappa(s-k,3;V^\trans X_{A^c})$ with
$\sqrt{n/(n-k)}\kappa(s, 3; X)$.
\end{proof}

\bibliographystyle{asa}
\bibliography{sparsereferences}

\begin{thebibliography}{36}
\newcommand{\enquote}[1]{``#1''}
\expandafter\ifx\csname natexlab\endcsname\relax\def\natexlab#1{#1}\fi

\bibitem[{Baraud(2004)}]{Baraud04}
Baraud, Y. (2004), \enquote{Confidence balls in Gaussian regression,}
  \textit{Ann. Statist.}, 32, 528--551.

\bibitem[{Beran and D{\"u}mbgen(1998)}]{Beran98}
Beran, R. and D{\"u}mbgen, L. (1998), \enquote{Modulation of estimators and
  confidence sets,} \textit{Ann. Statist.}, 26, 1826--1856.

\bibitem[{Bickel et~al.(2009)Bickel, Ritov, and Tsybakov}]{Bickel09}
Bickel, P.~J., Ritov, Y., and Tsybakov, A.~B. (2009), \enquote{Simultaneous
  analysis of Lasso and Dantzig selector,} \textit{Ann. Statist.}, 37,
  1705--1732.

\bibitem[{Bradic et~al.(2018)Bradic, Fan, and Zhu}]{bradic18}
Bradic, J., Fan, J., and Zhu, Y. (2018), \enquote{Testability of
  high-dimensional linear models with non-sparse structures,} \textit{arXiv
  preprint arXiv:1802.09117}.

\bibitem[{Bull and Nickl(2013)}]{Bull13}
Bull, A.~D. and Nickl, R. (2013), \enquote{Adaptive confidence sets in
  $L^{2}$,} \textit{Probability Theory and Related Fields}, 156, 889--919.

\bibitem[{Cai and Guo(2017)}]{Cai17}
Cai, T.~T. and Guo, Z. (2017), \enquote{Confidence intervals for
  high-dimensional linear regression: Minimax rates and adaptivity,}
  \textit{Ann. Statist.}, 45, 615--646.

\bibitem[{Cai and Low(2006)}]{cai06}
Cai, T.~T. and Low, M.~G. (2006), \enquote{Adaptive confidence balls,}
  \textit{Ann. Statist.}, 34, 202--228.

\bibitem[{Dezeure et~al.(2015)Dezeure, B{\"u}hlmann, Meier, and
  Meinshausen}]{Dezeure15}
Dezeure, R., B{\"u}hlmann, P., Meier, L., and Meinshausen, N. (2015),
  \enquote{High-Dimensional Inference: Confidence Intervals, $p$-Values and
  R-Software hdi,} \textit{Statist. Sci.}, 30, 533--558.

\bibitem[{Dezeure et~al.(2017)Dezeure, B{\"u}hlmann, and Zhang}]{Dezeure2017}
Dezeure, R., B{\"u}hlmann, P., and Zhang, C.-H. (2017),
  \enquote{High-dimensional simultaneous inference with the bootstrap,}
  \textit{TEST}, 26, 685--719.

\bibitem[{Efron and Morris(1973)}]{Efron73}
Efron, B. and Morris, C. (1973), \enquote{Stein's Estimation Rule and its
  Competitors—An Empirical Bayes Approach,} \textit{Journal of the American
  Statistical Association}, 68, 117--130.

\bibitem[{Ewald and Schneider(2018)}]{ewald18}
Ewald, K. and Schneider, U. (2018), \enquote{Uniformly valid confidence sets
  based on the Lasso,} \textit{Electron. J. Statist.}, 12, 1358--1387.

\bibitem[{Genovese and Wasserman(2005)}]{genovese05}
Genovese, C.~R. and Wasserman, L. (2005), \enquote{Confidence sets for
  nonparametric wavelet regression,} \textit{Ann. Statist.}, 33, 698--729.

\bibitem[{Hastie et~al.(2015)Hastie, Tibshirani, and Wainwright}]{tibshirani15}
Hastie, T., Tibshirani, R., and Wainwright, M. (2015), \textit{Statistical
  Learning with Sparsity: The Lasso and Generalizations}, Chapman \& Hall/CRC.

\bibitem[{Hoffman and Lepski(2002)}]{hoffman02}
Hoffman, M. and Lepski, O. (2002), \enquote{Random rates in anisotropic
  regression (with a discussion and a rejoinder by the authors),} \textit{Ann.
  Statist.}, 30, 325--396.

\bibitem[{Huang et~al.(2012)Huang, Breheny, and Ma}]{huang12}
Huang, J., Breheny, P., and Ma, S. (2012), \enquote{A Selective Review of Group
  Selection in High-Dimensional Models,} \textit{Statist. Sci.}, 27, 481--499.

\bibitem[{Javanmard and Montanari(2014)}]{javanmard14}
Javanmard, A. and Montanari, A. (2014), \enquote{Confidence Intervals and
  Hypothesis Testing for High-Dimensional Regression,} \textit{Journal of
  Machine Learning Research}, 15, 2869--2909.

\bibitem[{Juditsky and Lambert-Lacroix(2003)}]{Juditsky03}
Juditsky, A. and Lambert-Lacroix, S. (2003), \enquote{Nonparametric confidence
  set estimation,} \textit{Mathematical Methods of Statistics}, 12, 410--428.

\bibitem[{Laurent and Massart(2000)}]{laurent00}
Laurent, B. and Massart, P. (2000), \enquote{Adaptive estimation of a quadratic
  functional by model selection,} \textit{Ann. Statist.}, 28, 1302--1338.

\bibitem[{Li(1985)}]{Li85}
Li, K.-C. (1985), \enquote{From Stein's Unbiased Risk Estimates to the Method
  of Generalized Cross Validation,} \textit{Ann. Statist.}, 13, 1352--1377.

\bibitem[{Li(1989)}]{Li89}
--- (1989), \enquote{Honest Confidence Regions for Nonparametric Regression,}
  \textit{Ann. Statist.}, 17, 1001--1008.

\bibitem[{Min and Zhou(2019)}]{min09}
Min, S. and Zhou, Q. (2019), \enquote{Constructing Confidence Sets After Lasso
  Selection by Randomized Estimator Augmentation,} \textit{arXiv preprint
  arXiv:1904.08018}.

\bibitem[{Negahban et~al.(2012)Negahban, Ravikumar, Wainwright, and
  Yu}]{negahban12}
Negahban, S.~N., Ravikumar, P., Wainwright, M.~J., and Yu, B. (2012),
  \enquote{A Unified Framework for High-Dimensional Analysis of $M$-Estimators
  with Decomposable Regularizers,} \textit{Statist. Sci.}, 27, 538--557.

\bibitem[{Nickl and van~de Geer(2013)}]{Nickl13}
Nickl, R. and van~de Geer, S. (2013), \enquote{Confidence sets in sparse
  regression,} \textit{Ann. Statist.}, 41, 2852--2876.

\bibitem[{Robins and van~der Vaart(2006)}]{Robins06}
Robins, J. and van~der Vaart, A. (2006), \enquote{Adaptive nonparametric
  confidence sets,} \textit{Ann. Statist.}, 34, 229--253.

\bibitem[{Schneider(2016)}]{schneider16}
Schneider, U. (2016), \enquote{Confidence Sets Based on Thresholding Estimators
  in High-Dimensional Gaussian Regression Models,} \textit{Econometric
  Reviews}, 35, 1412--1455.

\bibitem[{Stein(1981)}]{Stein81}
Stein, C.~M. (1981), \enquote{Estimation of the Mean of a Multivariate Normal
  Distribution,} \textit{The Annals of Statistics}, 9, 1135--1151.

\bibitem[{Sun and Zhang(2012)}]{sun12}
Sun, T. and Zhang, C.-H. (2012), \enquote{Scaled sparse linear regression,}
  \textit{Biometrika}, 99, 879--898.

\bibitem[{Tibshirani(1996)}]{tibshirani96}
Tibshirani, R. (1996), \enquote{Regression Shrinkage and Selection via the
  Lasso,} \textit{Journal of the Royal Statistical Society. Series B
  (Methodological)}, 58, 267--288.

\bibitem[{van~de Geer et~al.(2014)van~de Geer, B{\"u}hlmann, Ritov, and
  Dezeure}]{Van14}
van~de Geer, S., B{\"u}hlmann, P., Ritov, Y., and Dezeure, R. (2014),
  \enquote{On asymptotically optimal confidence regions and tests for
  high-dimensional models,} \textit{Ann. Statist.}, 42, 1166--1202.

\bibitem[{Whittle(1960)}]{Whittle60}
Whittle, P. (1960), \enquote{Bounds for the Moments of Linear and Quadratic
  Forms in Independent Variables,} \textit{Theory of Probability \& Its
  Applications}, 5, 302--305.

\bibitem[{Zhang(2010)}]{zhang10}
Zhang, C.-H. (2010), \enquote{Nearly unbiased variable selection under minimax
  concave penalty,} \textit{Ann. Statist.}, 38, 894--942.

\bibitem[{Zhang and Huang(2008)}]{Zhang08}
Zhang, C.-H. and Huang, J. (2008), \enquote{The sparsity and bias of the Lasso
  selection in high-dimensional linear regression,} \textit{Ann. Statist.}, 36,
  1567--1594.

\bibitem[{Zhang and Zhang(2014)}]{zhang14}
Zhang, C.-H. and Zhang, S.~S. (2014), \enquote{Confidence intervals for low
  dimensional parameters in high dimensional linear models,} \textit{Journal of
  the Royal Statistical Society: Series B (Statistical Methodology)}, 76,
  217--242.

\bibitem[{Zhang and Cheng(2017)}]{ZhangCheng17}
Zhang, X. and Cheng, G. (2017), \enquote{Simultaneous Inference for
  High-Dimensional Linear Models,} \textit{Journal of the American Statistical
  Association}, 112, 757--768.

\bibitem[{Zhou(2014)}]{zhou14}
Zhou, Q. (2014), \enquote{Monte Carlo Simulation for Lasso-Type Problems by
  Estimator Augmentation,} \textit{Journal of the American Statistical
  Association}, 109, 1495--1516.

\bibitem[{Zhou and Min(2017)}]{zhou17}
Zhou, Q. and Min, S. (2017), \enquote{Estimator augmentation with applications
  in high-dimensional group inference,} \textit{Electron. J. Statist.}, 11,
  3039--3080.

\end{thebibliography}

\end{document}